\numberwithin{equation}{section}
\newtheorem{theorem}{Theorem}
\newtheorem{definition}[theorem]{Definition}
\newtheorem{lemma}[theorem]{Lemma}
\newtheorem{proposition}[theorem]{Proposition}
\newtheorem{remark}[theorem]{Remark}
\newtheorem{corollary}[theorem]{Corollary}
\numberwithin{theorem}{section}
\newcommand{\R}{\mathbb{R}}
\newcommand{\p}{\partial}
\newcommand{\A}{\mathcal{A}}
\newcommand{\mn}{{\mu \nu}}
\newcommand{\tee}{{\mathbb{T}^d}}
\newcommand{\lbar}{\underline{L}}
\newcommand{\pgood}{\overline{\partial}}
\newcommand{\obar}{{\underline{4}}}
\newcommand{\abar}{{\underline{a}}}
\newcommand{\bbar}{{\underline{b}}}
\newcommand{\cbar}{{\underline{c}}}
\newcommand{\dbar}{{\underline{d}}}
\newcommand{\abbar}{{\underline{a} \, \underline{b}}}
\newcommand{\gzero}{\overline{G}}
\newcommand{\ubar}[1]{{\underline{#1}}}
\newcommand{\tr}{\text{tr}}
\newcommand{\tbox}{\widetilde{\Box}}
\newcommand{\UU}{{\mathcal{U} \mathcal{U}}}
\newcommand{\LL}{{\mathcal{L} \mathcal{L}}}
\newcommand{\TU}{{\mathcal{T} \mathcal{U}}}
\newcommand{\LT}{{\mathcal{L} \mathcal{T}}}
\newcommand{\U}{\mathcal{U}}
\newcommand{\K}{\mathcal{K}}
\newcommand{\vK}{\vert K \vert}
\newcommand{\qv}{\vert q \vert}
\newcommand{\I}{\vert I \vert}
\newcommand{\J}{\vert J \vert}
\title{The Weak Null Condition and Kaluza-Klein Spacetimes}
\author{Zoe Wyatt}
\address{School of Mathematics\\
The University of Edinburgh\\
James Clerk Maxwell Building\\
Peter Guthrie Tait Road\\
Edinburgh\\
EH9 3FD\\
United Kingdom}
\email{zoe.wyatt@ed.ac.uk}
\date{\today}
\begin{document}

\begin{abstract}
In this paper we prove the non-linear stability of a system of non-linear wave equations satisfying the weak null condition. In particular, this includes the case of the non-linear stability of Minkowski spacetime times a $d$-torus subject to perturbations depending only on the non-compact coordinates. Our argument very closely follows the proof of the non-linear stability of Minkowski spacetime by Lindblad and Rodnianski \cite{LR:04}. 
\end{abstract}
\maketitle

\section{Introduction}
Theories of higher-dimensional gravity are of great interest to string theorists. Some of these additional dimensions are typically compactified in order for these theories to reflect the perceived $(3+1)-$dimensional universe. One such theory is that of Kaluza-Klein, where, in the simplest case, $(3+1+1)-$dimensional gravity is compactified on a circle to obtain at low energies a $(3+1)-$dimensional coupled Einstein-Maxwell-Scalar system, see \eqref{eq:intro-motiv-ems}. In an influential work by Witten, it was shown that Kaluza-Klein theory is unstable at the semiclassical level and heuristic arguments were given for classical stability \cite{Witten:1981gj}. In this paper we rigorously prove the classical stability against perturbations depending only on the non-compact coordinates. The precise details are given in Theorem \ref{intro:corol-motiv}, but the main idea is the following:
\begin{theorem} \label{intro:motiv-theorem}
The Minkowski vacuum of the Einstein-Maxwell-Scalar system arising from the zero modes of $(3+d+1)-$dimensional pure Einstein theory compactified on a $\tee$ is non-linearly stable. Furthermore, the radii of the $\tee$ are non-linearly stable to peturbations of the zero modes. 
\end{theorem}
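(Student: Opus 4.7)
My plan is to deduce the theorem from a general non-linear stability result for quasilinear wave systems satisfying the weak null condition, which is the main analytic result of the paper and of which Theorem~\ref{intro:corol-motiv} will be the precise Kaluza--Klein instance. The strategy has three steps: a Kaluza--Klein reduction producing an effective wave system on $\R^{3+1}$, the choice of a suitable generalised wave gauge, and verification that the resulting system is of weak null type.

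First I would parametrise the $(3+d+1)$-dimensional metric $\widetilde g$ on $\R^{3+1}\times \tee$ in the standard Kaluza--Klein form
\[
\widetilde g \;=\; g_{\mu\nu}(x)\,dx^\mu dx^\nu + \gamma_{ab}(x)\bigl(dy^a + A^a_{\mu}(x)\,dx^\mu\bigr)\bigl(dy^b + A^b_{\nu}(x)\,dx^\nu\bigr),
\]
with all fields depending only on $x\in\R^{3+1}$. The symmetric positive-definite matrix $\gamma_{ab}$ encodes the radii of the $\tee$, while the $A^a_\mu$ are the Kaluza--Klein vector potentials. Integrating over $\tee$, the $(3+d+1)$-dimensional vacuum Einstein equations in the zero-mode sector reduce to a coupled Einstein--Maxwell--Scalar system on $\R^{3+1}$ for the unknowns $(g_{\mu\nu},A^a_\mu,\gamma_{ab})$, with background solution $g=m$, $A^a_\mu\equiv 0$, and $\gamma_{ab}$ a constant matrix of reference radii.

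Next I would impose the $(3+d+1)$-dimensional wave coordinate condition on $\widetilde g$; because the fields are $\tee$-independent, this decouples into a generalised wave gauge on $g$ together with a Lorenz-type gauge on each $A^a$. In these coordinates the reduced equations form a quasilinear system
\[
\tbox_{g}\Phi \;=\; Q(\Phi)(\p\Phi,\p\Phi), \qquad \Phi=(g_{\mu\nu}-m_{\mu\nu},\,A^a_\mu,\,\gamma_{ab}-\delta_{ab}),
\]
in which the $g$-block reproduces the Lindblad--Rodnianski semi-linear structure of \cite{LR:04}, while the Maxwell and scalar blocks contribute classical null forms together with further Lindblad--Rodnianski-type quadratic terms, all coupled through smooth dependence on $\gamma$ and its inverse.

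The decisive step, which I expect to be the main obstacle, is to show that this reduced system falls within the scope of the main weak-null theorem of the paper. Concretely, I would isolate the quadratic nonlinearities that survive the asymptotic approximation $\p \sim L$ along the outgoing null cone and check that they assemble into a hierarchical transport system along null generators admitting globally defined solutions with at most logarithmic growth, exactly as in the vacuum Einstein analysis of \cite{LR:04}. Once this structural check has been made, the general weak-null stability theorem applies and yields global existence and quantitative decay for $\Phi$; the component-wise decay of $\gamma_{ab}-\delta_{ab}$ then encodes precisely the non-linear stability of the $\tee$ radii asserted in the statement.
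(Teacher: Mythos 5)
Your plan diverges from the paper's at the very first step, and this is where the gap lies. You propose to carry out the Kaluza--Klein reduction to an explicit Einstein--Maxwell--Scalar system on $\R^{1+3}$ and then analyse that reduced system's nonlinearities; the paper deliberately does \emph{not} do this. Instead, the wave-coordinate condition \eqref{eq:intro-wave-gauge} and the whole null-frame analysis are carried out at the level of the higher-dimensional metric $G_\mn$ on $\R^{1+3}\times\tee$, and the zero-mode constraint $\p_\abar h_\mn = 0$ (cf.\ \eqref{eq:TIP}) is the only place the compactification enters. Proposition~\ref{remark:torus-full-pde} then verifies by direct computation that the resulting system \eqref{eq:intro-torus-pde} has the form \eqref{eq:intro-new-EE} (with $\{\psi_k\}=\{h_{a\bbar},h_\abbar\}$), so Theorem~\ref{theorem:intro1} applies, and Theorem~\ref{intro:corol-motiv} and the convergence $G_\mn(t)\to\gzero_\mn$ follow. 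The paper comments explicitly that the EMS nonlinearities \eqref{eq:intro-motiv-ems} ``may exhibit some, perhaps mixed, structure involving null and non-null terms'' and that ``much more is known about the non-linear structure at the level of the higher-dimensional metric'' --- in other words, your ``decisive step'' is precisely the delicate structural check the paper's choice of variables is designed to sidestep, and you do not supply it.

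Two further concrete problems with the plan as written. First, your Kaluza--Klein Ansatz omits the Weyl rescaling $\vert\hat g\vert^{\lambda}$ (with $\lambda=-1/2$) multiplying the $G_{ab}$ block; compare \eqref{eq:ems-ansatz} in Appendix~\ref{section:ems-full}. Without that factor you are in Jordan frame: the dimensionally reduced action carries a dilaton prefactor on $R[g]$, the effective $(3+1)$-dimensional metric does not satisfy the Einstein equations with a canonical stress tensor, and the reduced gauge conditions are \emph{not} simply the $(3+1)$-dimensional wave gauge plus Lorenz gauge --- the extra $\hat g$-dependent terms change the very quadratic structure you intend to inspect. Second, the paper does not prove a generic ``weak null $\Rightarrow$ global stability'' theorem you could cite; the analytic result is Theorem~\ref{theorem:intro1} for the specific class \eqref{eq:intro-new-EE}, with its particular $P$-, $Q$- and $G$-blocks, and the class-membership has to be established. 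Your route would therefore require rederiving that membership for the EMS variables in Einstein frame (a nontrivial computation the paper does not perform), whereas the paper's approach obtains it almost for free from \cite{LR:04} by never leaving the $G_\mn$ description.
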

To compare Theorem \ref{intro:motiv-theorem} with the instability in \cite{Witten:1981gj}, recall that  Witten's classical solution, which indicated the semiclassical instability of Kaluza-Klein in the case $d=1$, had initial data with topology $\R^2 \times S^2$. This initial data cannot be imposed on the Kaluza-Klein spacetime, but rather is conjectured to appear spontaneously from a quantum-mechanical process. There is no smooth perturbation of any size, let alone of the small size we consider, that could lead to a change in the $\R^3 \times S^1$ topology of our initial data. Thus the two results do not disagree. 

The non-linear stability we consider is subject to the symmetry assumption that the perturbations only depend on the non-compact directions. This symmetry assumption, also called the zero mode truncation, is, in the physics literature, called consistent since it yields solutions of the full equations of motion of the higher dimensional theory. Indeed initial data obeying this symmetry will yield a solution similarly invariant in the compact directions.  

As frequently done from an effective theory point of view, one can further make a heuristic physical argument that for sufficiently small initial compact radii, it is in fact sufficient to only consider zero mode perturbations \cite{Pope}. Our result shows that the radii are non-linearly stable to zero mode perturbations. Of course from the non-linear PDE point of view, the dynamics from the non-zero modes are still relevant and remain to be understood. Nonetheless, stability of the zero modes is a necessary first step. We now introduce the PDE, see Section \ref{section:intro-motiv-eg} for further discussion on this application to Kaluza-Klein. \\

The vacuum Einstein equations
$$ R_\mn [g] = 0 \,,$$
determine the evolution of a Lorentzian spacetime $(\mathcal{M}, g)$. The initial data for these equations consists of Cauchy 3-surface $\Sigma_0$ with metric  $\gamma_{ij}$ and a symmetric 2-tensor $K_{ij}$, such that the \textit{constraint equations} hold on $\Sigma_0$
\begin{equation} R[\gamma] - K^{ij} K_{ij} + K^i{}_i K^j{}_j = 0 \,, \qquad D_j K^j {}_i - D_i K^j{}_j = 0 \,, \quad i, j \in \{1,2,3 \} \,. \label{eq:intro-vacuum-constrainteq} \end{equation}
Here $D$ is the Levi-Civita connection of $\gamma$. 
One further requires an embedding $\Sigma_0 \subset \mathcal{M}$ such that the pull-back of the solution $g$ to $\Sigma_0$ is $\gamma$ and $K$ is the second fundamental form of $\Sigma_0$. 
It was shown by Choquet-Bruhat and Geroch that for any smooth initial data satisfying the constraint equations there exists a maximal, unique up to diffeomorphism, globally-hyperbolic spacetime $(\mathcal{M},g)$ that is the Cauchy development of the initial data and that satisfies the vacuum Einstein equations \cite{choquet-bruhat1969}.

A natural question to then ask is whether the simplest solution of the vacuum Einstein equations, the Minkowski spacetime, is stable to small perturbations of the initial data $(\R^3, \delta_{ij}, K_{ij}=0)$. This question of the non-linear stability of Minkowski spacetime and the geodesic completeness of the perturbed solution $g_\mn(t)$ was first shown in the monumental work by Christodoulou and Klainerman \cite{CK:93}.

However their proof, and gauge choice, differed significantly from the first proof of local well-posedness by Choquet-Bruhat \cite{bruhat1952}. In this paper, Choquet-Bruhat fixed the coordinate invariance by choosing \textit{wave coordinates} $ \{ x^\mu \}$ which are defined to satisfy the covariant wave equation
$$ g^{\rho \sigma} \nabla_\rho \nabla_\sigma x^\mu = 0 \,,$$
where $\nabla$ is the Levi-Civita connection of $g$. Relative to these wave coordinates the metric satisfies the \textit{wave-coordinate condition}
\begin{equation}
\p_\rho \left( g^{\rho \mu} \sqrt{\vert \det g \vert} \right) = 0 \,. \label{eq:intro-wave-gauge}
\end{equation}
This condition is also known as the harmonic, de Donder, or  wave gauge. In this gauge, the Einstein equations reduce to a system of non-linear wave equations
\begin{equation}
\tbox_g g_\mn := g^{\rho \sigma} \p_\rho \p_\sigma g_\mn = F_\mn (h) (\p h, \p h) \,, \label{eq:intro-ree}
\end{equation}
where $F_\mn (h) (\p h, \p h)$ is an inhomogeneity we will discuss shortly.  Here $\tbox_g = g^{\rho \sigma} \p_\rho \p_\sigma$ is the \textit{reduced} wave operator. Although  the local well-posedness of the system \eqref{eq:intro-ree} was shown in \cite{bruhat1952}, for a long time it was believed that a global-in-time solution could not be achieved \cite{MR0315300}. 

Nonetheless as shown in the seminal work by Lindblad and Rodnianski, it is in fact possible to use the wave gauge to prove the non-linear stability of Minkowski by treating the system \eqref{eq:intro-ree} using an energy argument at the level of the metric tensor \cite{Lindblad:2003hw, LR:04}. 
Our proof very closely follows \cite{LR:04} and we refer to it for a more detailed discussion on the literature and motivation behind their proof.  \\
 
We now briefly summarise the result of \cite{LR:04}. The Einstein equations \eqref{eq:intro-ree} in wave coordinates coupled to a scalar field, written in terms of the perturbation away from Minkowski $h_\mn := g_\mn - m_\mn$, take the form
\begin{subequations}\label{eq:intro-LR-EE}
\begin{equation}
\begin{split}
\tbox_g h_\mn &= F_\mn (h) (\p h, \p h)+ 2 \p_\mu \psi \p _\nu \psi \,, \\
\tbox_g \psi &= 0 \,, \end{split} \label{eq:intro-LR-EE-1}
\end{equation} 
where the inhomogeneity is
\begin{equation} \begin{split}
F_\mn (h) (\p h, \p h) &:= P(\p_\mu h, \p_\nu h) + Q_\mn (\p h, \p h) + G_\mn(h) (\p h, \p h)  \,, \\
P(\p_\mu h, \p_\nu h) &:= \frac{1}{4} \p_\mu h^\rho{}_\rho \p_\nu h^\sigma{}_\sigma - \frac{1}{2} \p_\mu h^{\rho \sigma} \p_\nu h_{\rho \sigma} \,.
\end{split} \end{equation}
\end{subequations}
Note the term involving $\psi$ in \eqref{eq:intro-LR-EE-1} comes from adding a stress energy tensor $T^\psi_\mn := \p_\mu \psi \p_\nu \psi$ to the Einstein equations in the form $R_\mn[g]-\frac{1}{2} g_\mn R[g] = T_\mn$. 
Here $Q_\mn$ is a linear combination of the standard quadratic null forms
\begin{equation} Q_0(\p \phi, \p \psi) := m^\mn \p_\mu \phi \p_\nu \phi \,, \quad  \widetilde{Q}_\mn (\p \phi, \p \psi) := \p_\mu \phi \p_\nu \psi - \p_\nu \phi \p_\mu \psi \,, \label{eq:intro-nullforms} 
\end{equation}
whose behaviour has been studied for some time, see \cite{MR837683}, \cite{nullklainerman} and \cite{nullchristodoulou}. Most importantly both null forms can be estimated by
\begin{equation}
\vert Q ( \p \phi, \p \psi) \vert \leq \vert \pgood \phi \vert \vert \p \psi \vert + \vert \p \phi \vert \vert \pgood \psi \vert \,, \label{eq:intro-null-estimate} 
\end{equation}
where $\p = (\p_t, \nabla)$ are all space-time derivatives and $\pgood$ denotes derivatives tangent to the light cones. 
$G_\mn(h)(\p h, \p h)$ is a term quadratic in $\p h$ with coefficients that smoothly depend on $h$ such that $G(0)(\p h, \p h)=0$. Furthermore define $h^\rho{}_\rho = m^{\rho \mu} h_{\mu \rho}$ and throughout this paper we will raise and lower indices with respect to the background Minkowski metric $m$.

The structure of the non-linearity $F_\mn$ in \eqref{eq:intro-LR-EE}, in particular that of the quadratic non-null form $P_\mn$,  was exploited to great effect in \cite{LR:04}, see also our later Section \ref{section:intro-weak-null} on the weak null condition. 
Furthermore to deal with the ADM mass term at spatial infinity, Lindblad and Rodnianski defined a cut-off function $\chi \in C^\infty$ by
\begin{equation} \chi(s) := \Bigg\lbrace
	\begin{array}{ll} 0\,,  & s \leq 1/2 \\
		1 \,, &  s \geq 3/4 
	\end{array} \,, \label{eq:intro-chi} \end{equation}
and a 2-tensor $\gamma^1_{ij}$ in terms of the ADM mass $M$ by
\begin{equation} \gamma_{ij} = \left( 1+ \chi(r) \frac{M}{r}  \right) \delta_{ij} +  \gamma^1_{ij} \,.
 \label{eq:intro-def-gamma1} \end{equation}
Lindblad and Rodnianski showed that for smooth initial data $( \Sigma_0, \gamma_{ij}, K_{ij}, f, g)$, sufficiently `close' to Minkowski initial data, satisfying the constraint equations,
$$ R[\gamma] - K^{ij} K_{ij} + K^i{}_i K^j{}_j= \vert D f \vert^2 + \vert g \vert^2 \,, \quad D_j K^j {}_i - D_i K^j{}_j = g D_i f \,,$$
the solution $(g_\mn(t), \psi(t))$ to \eqref{eq:intro-LR-EE} can be extended to a global-in-time solution. We will discuss this notion of `closeness' to Minkowski initial data shortly. 
Note here $D$ is the covariant derivative associated to $\gamma$.  \\

In this work we consider the unknowns
\begin{subequations} \label{eq:intro-new-EE}
\begin{equation}
W := \{ h_\mn \}_{\mu, \nu \in \{ 0, 1, 2, 3 \}} \cup \{ \psi_k \}_{k \in \{1, \ldots, m \}} \,, 
\end{equation}
for some $m \in \mathbb{N}$, satisfying the following generalised PDE system:
\begin{equation} \begin{split}
\tbox_g h_\mn &= F_\mn (W) (\p W, \p W) \,, \\
\tbox_g \psi_k &= F_k (W) (\p W, \p W) \,, \\
F_\mn (W) (\p W, \p W) &:= P(\p_\mu W, \p_\nu W) + Q_\mn (\p W, \p W) + G_\mn(W) (\p W, \p W) \,, \\
F_k (W) (\p W, \p W) &:= Q_k (\p W, \p W) + G_k ( W) (\p W, \p W) \,, \\
\end{split}
\end{equation}
together with the wave-coordinate condition
\begin{equation}
\p_\rho \left( g^{\rho \mu} \sqrt{\vert \det g \vert} \right) = 0 \,. \label{eq:intro-new-EE-gauge}
\end{equation}
The quadratic terms $Q_\mn, Q_k$ are linear combinations of the null forms \eqref{eq:intro-nullforms} in terms of $\p W$ variables, contracting the arguments of the null forms with $m_\mn$ and/or arbitrary $N^k \in \R^m$ as appropriate. 
The remaining non-null $\mathcal{O}(( \p W)^2)$ terms are of the form
\begin{equation} \begin{split}
P(\p_\mu W, \p_\nu W) & :=  \frac{1}{4} \p_\mu h^\rho{}_\rho \p_\nu h^\sigma{}_\sigma - \frac{1}{2} \p_\mu h^{\rho \sigma} \p_\nu h_{\rho \sigma}  \\
& \quad + N^k m^{\rho \sigma} \left( \p_{\mu} h_{\rho \sigma}  \p_{\nu} \psi_k +  \p_{\nu} h_{\rho \sigma}  \p_{\mu} \psi_k  \right) + N^{kl} \left( \p_{\mu} \psi_k \p_{\nu} \psi_l + \p_{\nu} \psi_k \p_{\mu} \psi_l \right) \,. 
\end{split} \label{eq:intro-new-EE-P} \end{equation} 
\end{subequations}
Here $N^k \in \R^m, N^{kl} \in \R^{m^2}$ are some arbitrary constant coefficients and $G_\mn, G_k$ are terms quadratic in $\p W$ with coefficients smoothly dependent on $W$ such that $G(0)(\p W, \p W)=0$. Although we have added additional non-linearities to both the $h_\mn$ and $\psi_k$ terms, we have specifically \textit{only} added $\mathcal{O}(( \p W)^2)$ terms which are null forms to $F_k$. This choice is so that the variables $\psi_k$ obey the same estimates as the `best' components of $h_\mn$. See also Remark \ref{remark:loizelet}. 
  
To discuss the initial data for this system further, introduce the Minkowski conformal vector fields
$$ \mathcal{Z} := \left\lbrace \p_\mu \,, \Omega_\mn := x_\mu \p_\nu - x_\nu \p_\mu \,, S:=x^\mu \p_\mu \right\rbrace \,,$$
 and the standard multi-index notation
\begin{align*}
Z^\alpha &:= \sum_{Z \in \mathcal{Z} \,, \sum \vert \alpha_i \vert = \vert \alpha \vert} Z^{\alpha_1} \cdots Z^{\alpha_n} \,,  &\nabla^\alpha & := \sum_{\mu_i \in \{1,2,3\} \,, \sum \vert \alpha_i \vert = \vert \alpha \vert} \p_{\mu_1}^{\alpha_1} \cdots \p_{\mu_n}^{\alpha_n} \,, \\
\vert Z^I \phi \vert &:= \sum_{\vert \alpha \vert \leq \I} \vert Z^\alpha \phi \vert \,, & \vert \nabla^I \phi \vert & := \sum_{\vert \alpha \vert \leq \I} \vert \nabla^\alpha \phi \vert \,.
\end{align*}
The initial data consists of the collection $(\Sigma_0, \gamma_{ij}, K_{ij}, f_k, g_k)$ where $(\Sigma_0, \gamma_{ij})$ is a 3-dimensional Riemannian manifold, $K_{ij}$ a symmetric 2-tensor and $(f_k, g_k)$ are smooth functions. We assume that the initial surface $\Sigma_0$ is diffeomorphic to $\R^3$, has ADM mass $M$, and that there exists a global coordinate chart on $\Sigma_0$ such that as $r := ( m_{ij} x^i x^j )^{1/2} \to \infty$ the initial data satisfies
\begin{equation}
\gamma^1_{ij} \,, f_k = o(r^{-1-\alpha}) \,,  \qquad K_{ij}\,, g_k  = o(r^{-2-\alpha}) \,, \qquad  \alpha >0 \,. \label{eq:intro-decay-id}
\end{equation}
For $N \geq 8$ and fixed constant $\gamma \in (0, 1/2)$, define an initial weighted energy by
\begin{equation} \begin{split} E_N(0) 
& := \sum_{\vert I \vert \leq N}  \int_{\Sigma_0}  w  (1+r)^{2\I} \left( \vert \nabla \nabla^I \gamma^1 \vert^2 + \vert \nabla^I K \vert^2 + \vert \nabla \nabla^I f \vert^2 + \vert \nabla^I g\vert^2 \right) d^3 x \\
& = \sum_{\vert I \vert \leq N}  \int_{\Sigma_0}  (1+r)^{1+2\gamma+2\I} \left( \vert \nabla \nabla^I \gamma^1 \vert^2 + \vert \nabla^I K \vert^2 + \vert \nabla \nabla^I f \vert^2 + \vert \nabla^I g\vert^2 \right) d^3 x \,,
 \end{split} \label{eq:intro-initial-energy} 
\end{equation}
where the weight function $w$ is defined as 
\begin{equation}
w: = w(q) = \Bigg\lbrace \begin{array}{ll}
1+(1+\qv)^{1+2\gamma} \,, & q>0 \\
1+(1+\qv)^{-2 \mu} \,, & q\leq 0
\end{array} \,, \quad q := t-r \,. \label{eq:intro-weight}
\end{equation}
Here $\mu > 0$ is a constant to be fixed later. 
Just as for the initial data \eqref{eq:intro-def-gamma1}, we follow Lindblad and Rodnianski and deal with the ADM mass $M$ at spatial infinity by defining the 2-tensor $h^1_\mn$ 
$$ h_\mn = \chi(r)\chi(r/t) \frac{M}{r} \delta_\mn +  h^1_\mn \,.$$
This is also depicted in Figure \ref{fig:h0} on page \pageref{fig:h0}. The weighted energy for the solution $(h^1_\mn (t), \psi_k(t))$ is defined as
\begin{equation} \begin{split} \mathcal{E}_N[W^1] (t)
:= \sup_{0 \leq \tau \leq t}  \sum_{\vert I \vert \leq N}  \int_{\Sigma_\tau} w \left(  \vert \p Z^I h^1_\mn (\tau, x ) \vert^2 + \vert \p Z^I \psi_k (\tau, x ) \vert^2 \right) d^3 x \,, \label{eq:intro-energy-sol} \end{split}
\end{equation}
where the unknown dynamical variables are
$$W^1 := \{ h^1_\mn \}_{\mu, \nu \in \{ 0, 1, 2, 3 \}} \cup \{ \psi_k \}_{k \in \{1, \ldots, m \}} \,.$$ 
\textbf{Our main result is now:}
\begin{theorem} \label{theorem:intro1}
Let $( \Sigma_0,\gamma_{ij}, K_{ij}, f_k, g_k)$ be smooth initial data for \eqref{eq:intro-new-EE},  asymptotically flat in the sense of \eqref{eq:intro-decay-id}, with $\Sigma_0$ diffeomorphic to $\R^3$. 
There exists a constant $\varepsilon_0>0$ such that for all $\varepsilon \leq \varepsilon_0$ and initial data such that
$$E_N (0)^{1/2} + M \leq \varepsilon \,,$$
the solution $(m_\mn + h_\mn(t)\,, \psi_k(t))$ to the system \eqref{eq:intro-new-EE} can be extended to a global-in-time smooth solution, agreeing with the initial data on $\Sigma_0$
$$ (g_{ij} \vert_{\Sigma_0} = \gamma_{ij} \,, \p_t g_{ij} \vert_{\Sigma_0} = K_{ij} \,, \psi_k \vert_{\Sigma_0} = f_k \,, \p_t \psi_k \vert_{\Sigma_0} = g_k ) \,, $$
and for which the energy satisfies the following inequality for all time
\begin{equation} \mathcal{E}_N[W^1](t) \leq C_N \varepsilon (1+t)^{C\varepsilon} \,. \label{eq:intro-thm-aim} \end{equation}
\end{theorem}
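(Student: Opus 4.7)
The plan is to run a bootstrap argument closely modelled on \cite{LR:04}, treating the scalar fields $\psi_k$ on the same footing as the best-decaying components of $h_{\mu\nu}$. Fix a large constant $A$ and assume inductively that on a time interval $[0, T^*]$ one has $\mathcal{E}_N[W^1](t) \le A\varepsilon(1+t)^{C\varepsilon}$, together with the companion $L^\infty$ decay estimates for $|Z^I W^1|$ and $|\p Z^I W^1|$ ($|I| \le N-c$) that follow by weighted Klainerman--Sobolev inequalities with the weight $w$. First I would manipulate the initial data: the wave-coordinate gauge together with the constraint equations lets one convert the bound $E_N(0)^{1/2} + M \le \varepsilon$ into $\mathcal{E}_N[W^1](0) \lesssim \varepsilon^2$ for the decomposition $h_{\mu\nu} = \chi(r)\chi(r/t)\tfrac{M}{r}\delta_{\mu\nu} + h^1_{\mu\nu}$. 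The goal is then to improve the bootstrap to $\mathcal{E}_N[W^1](t) \le \tfrac{A}{2}\varepsilon(1+t)^{C\varepsilon}$.

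The heart of the argument is a weighted energy estimate for $Z^I h^1_{\mu\nu}$ and $Z^I \psi_k$. Commuting the system with $Z^\alpha$, $|\alpha|\le N$, I would multiply each equation by $w(q)\p_t Z^\alpha(\cdot)$ and integrate by parts using $\tbox_g$. The crucial positivity feature is that differentiating $w$ from \eqref{eq:intro-weight} produces a spacetime bulk term of the form $\int_0^t\!\int_{\Sigma_\tau} w'(q)|\pgood Z^I W^1|^2\,dx\,d\tau$, controlling tangential derivatives on the outgoing cones in $\{q>0\}$ (and an analogous term for $q<0$). This gain of tangential derivatives is precisely what is needed to absorb the quadratic null-form contributions $Q_{\mu\nu}, Q_k$ via \eqref{eq:intro-null-estimate}, since those automatically contain at least one $\pgood$ factor.

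The genuinely hard piece is the non-null quadratic $P(\p_\mu W, \p_\nu W)$ in the equations for $h_{\mu\nu}$. Here the weak null structure of \eqref{eq:intro-LR-EE} has to be verified for the enlarged system. Using the wave-coordinate condition \eqref{eq:intro-new-EE-gauge}, the ``bad'' $\mathcal{L}\mathcal{L}$-components of $\p h$ can be bounded by tangential derivatives of lower-rank metric components. Consequently every factor appearing in $P$ is effectively either a trace $\p h^\rho{}_\rho$, a $\p\psi_k$, or already contains a $\pgood$. The new cross-terms $N^k m^{\rho\sigma}\p h_{\rho\sigma}\p\psi_k$ and $N^{kl}\p\psi_k\p\psi_l$ introduced in \eqref{eq:intro-new-EE-P} fit into this framework because, by construction, $F_k$ contains only null forms, so $\psi_k$ behaves as a ``good'' scalar quantity at null infinity. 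The cubic remainder $G(W)(\p W,\p W)$ is absorbed using $G(0)=0$ together with the bootstrap pointwise smallness of $W$. The characteristic logarithmic loss $(1+t)^{C\varepsilon}$ in \eqref{eq:intro-thm-aim} arises from propagating a transport equation along outgoing characteristics for the reduced asymptotic quantities associated to $(h^\rho{}_\rho,\psi_k)$; the specific range $\gamma\in(0,1/2)$ is chosen so that the positive bulk term dominates the logarithmically divergent spacetime contribution from $P$. Once pointwise decay is recovered from $\mathcal{E}_N$ via Klainerman--Sobolev, Gronwall closes the bootstrap.

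The main obstacle will be verifying the weak null structure for the enlarged system. Concretely, I expect the analysis to reduce to checking that the model ODE system obtained by restricting the equations to the ansatz $h_{\mu\nu}\sim r^{-1}H_{\mu\nu}(q,\omega)$, $\psi_k\sim r^{-1}\Psi_k(q,\omega)$ along outgoing cones has a triangular structure: the good components $H^\rho{}_\rho, \Psi_k$ satisfy forced but globally solvable transport equations, while the bad components depend only on these good components quadratically. The couplings $N^k, N^{kl}$ must therefore not re-introduce $H_{\mathcal{L}\mathcal{L}}$ (or other bad components) on the right-hand sides, and this should be ensured by the specific form of \eqref{eq:intro-new-EE-P} together with the wave-coordinate identity written at the level of commutators. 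Apart from this structural check, the remainder of the argument—commutator estimates $[Z,\tbox_g]$, Hardy-type inequalities against the weight $w$, propagation of the wave-coordinate condition along the evolution, and the conversion of weighted energies into pointwise decay—should be a direct adaptation of the corresponding steps in \cite{LR:04}.
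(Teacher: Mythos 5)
Your proposal follows the paper's approach in all its essential steps: a bootstrap adapting \cite{LR:04}, in which the scalars $\psi_k$ are treated as additional ``good'' components, the wave-coordinate identity is used to control the $\mathcal{L}\mathcal{L}$ and $\mathcal{L}\mathcal{T}$ parts of $\partial h$, null forms are absorbed against the spacetime bulk term $\int_0^t\!\int_{\Sigma_\tau} w'|\pgood Z^I W^1|^2$, and the non-null piece $P$ is handled by the observation that, after the gauge substitution, every product contains either a trace, a $\partial\psi_k$, or a tangential derivative. The check of the weak-null asymptotic system, the Klainerman--Sobolev conversion to pointwise decay, and the final Gronwall closure are all present in the paper's argument.

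Two refinements the paper makes that you should incorporate. First, the bootstrap assumption in the paper is $\mathcal{E}_N[W^1](t)\le 2C_N\varepsilon(1+t)^{\delta}$ for a \emph{fixed} $\delta\in(0,1/4)$ with $\delta<\gamma$, rather than $(1+t)^{C\varepsilon}$ as you write. The improved bound coming out of the energy estimate is $\mathcal{E}_N[W^1](t)\le C_N\varepsilon^2(1+t)^{C\varepsilon}$, and one then chooses $\varepsilon$ small so that $C_N\varepsilon^2(1+t)^{C\varepsilon}<2C_N\varepsilon(1+t)^{\delta}$, which closes the continuity argument. Using $(1+t)^{C\varepsilon}$ in both hypothesis and conclusion risks circularity: the exponent entering the pointwise decay from Klainerman--Sobolev and the exponent that Gronwall returns are not \emph{a priori} the same multiple of $\varepsilon$, and the transport estimate (Proposition 10.2) produces further $\varepsilon$-dependent exponents $M_k\varepsilon$. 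Decoupling the bootstrap exponent $\delta$ from the smallness parameter avoids this bookkeeping entirely. Second, the pointwise decay in the paper is obtained in two distinct stages: a ``weak'' decay of the full solution from weighted Klainerman--Sobolev and the energy bootstrap (Corollary 9.4), and then a ``strong'' decay of $\partial Z^I W$ for $|I|\le N/2+2$ via a transport estimate along $\partial_s+\frac{H^{\underline{L}\,\underline{L}}}{2g^{L\underline{L}}}\partial_q$ (Corollary 7.2). It is this second stage that produces the crucial extra powers of $(1+|q|)^{-1}$ needed to make the spacetime integrals in the energy estimate converge; you gesture at the transport step but should flag it as a separate ingredient of the decay hierarchy, distinct from Klainerman--Sobolev.
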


Similar to the method used in \cite{LR:04}, the proof of Theorem \ref{theorem:intro1} relies on a `bootstrap' assumption to make a \textit{continuous induction} argument . 
By standard theory of non-linear wave equations, we can obtain a local-in-time smooth solution $(g_\mn(t)\,, \psi_k (t))$ of our PDE obeying the wave-gauge condition in some maximum interval $0 \leq t \leq T_0 $. This maximum time of existence $T_0< \infty$ is defined by the blow-up of the energy: $\mathcal{E}_N[W^1](t) \to \infty$ as $t \to T_0^-$. 

Furthermore the smallness condition on $E_N(0)^{1/2} +M\leq \varepsilon$ implies that there is some maximal time $T \in (0, T_0)$ on which the following inequality holds
\begin{equation} \mathcal{E}_N[W^1](t) \leq 2 C_N \varepsilon (1+t)^\delta\,,  \quad \forall \, t \in [0,T] \,, \label{eq:intro-bootstrap} \end{equation}
where $\delta \in (0,1/4)$ is some fixed constant with $\delta < \gamma$. The aim of the proof is to contract on the bootstrap assumption \eqref{eq:intro-bootstrap} by choosing $\varepsilon$ sufficiently small. That is, we show that
\begin{equation}\mathcal{E}_N[W^1](t) \leq  C_N \varepsilon^2 (1+t)^{C \varepsilon}\,,  \quad \forall \, t \in [0, T] \,, \,  \varepsilon \leq \varepsilon_0 \,. \label{eq:intro-aim} \end{equation}
By choosing $\varepsilon_0$ sufficiently small, this will imply
$$ \mathcal{E}_N[W^1](t) \leq  C_N \varepsilon^2 (1+t)^{C \varepsilon} < 2 C_N \varepsilon (1+t)^\delta \,, \quad \forall \, t \in [0, T] \,,\,  \varepsilon \leq \varepsilon_0 \,. $$
Since $\mathcal{E}_N[W^1](t)$ is continuous and we have contracted the bootstrap condition, $T$ cannot have been the maximal time for which \eqref{eq:intro-bootstrap} holds, and so we must have $T=T_0$. This however would then imply that $\mathcal{E}_N[W^1](T_0)$ is finite, and so we may extend the solution beyond $T_0$. This contradicts the maximality of $T_0$ and so we have a global-in-time solution with $T_0 = \infty$. Thus the aim of our paper is to show \eqref{eq:intro-thm-aim}.

\subsection{Structure of the Paper}
 First in Section \ref{section:intro-motiv-eg} we discuss the motiving example from Kaluza-Klein theory, the Weak Null condition and some other literature. In Section \ref{section:null-frame} we set up the null frame, and in Section \ref{section:ee-wave-coords} we discuss the generalised PDE system \eqref{eq:intro-new-EE} and its form  when written with respect to this null frame.  Section \ref{section:wave-gauge} is where we derive estimates coming from the wave coordinates and then apply these to the inhomogeneity. In Sections \ref{section:decay-1} and \ref{section:decay-2} we derive the main decay estimates which are then used to derive an integrated energy inequality in Section \ref{section:energy} which concludes our proof of \eqref{eq:intro-aim}. In Appendix \ref{section:ems-full} we derive the non-minimally coupled Einstein-Maxwell-Scalar system, which gives a slightly different approach to the higher dimensional system discussed so far and in Section \ref{section:intro-motiv-eg}. In Appendix \ref{section:hardy} we state some useful identities from \cite{LR:04}. 

\section{Zero Mode Reduction of Kaluza Klein} \label{section:intro-motiv-eg} 
The motivation for considering the system \eqref{eq:intro-new-EE} is to prove Theorem \ref{intro:motiv-theorem}, and we now discuss how this is achieved. For simplicity, we first consider a $(3+1+1)-$dimensional gravitational theory compactified on a circle $S^1$. This was the original set-up considered by Klein \cite{Klein1926} using previous work by Kaluza \cite{Kaluza}. Let the $(3+1+1)-$dimensional metric $G_\mn$ have indices $\mu, \nu \in \{ 0, \ldots, 4 \}$ and satisfy the vacuum Einstein equations
\begin{equation} R_\mn[G] = 0 \,. \label{eq:intro-motiv-ee} \end{equation}
Written with respect to the wave-coordinate condition on the full $(3+1+1)-$dimensional metric, the equation of motion \eqref{eq:intro-motiv-ee} becomes a non-linear wave equation 
\begin{equation} G^{\rho \sigma} \p_\rho \p_\sigma G_\mn = \mathcal{N}(G)(\p G, \p G) \,. \label{eq:intro-motiv-nonlinearwave} \end{equation}
Let $x^a$, for $a=0, \ldots, 3$, denote the coordinates of the non-compact spacetime and $x^\obar$ denote the coordinate of the fifth compact dimension. Since $S^1$ is compact, say of radius $R$, we can Fourier expand the metric
\begin{equation} G_\mn(x^a, x^\obar) = \sum_{n \in \mathbb{Z}} \exp \left( \frac{i n  x^\obar}{R} \right) G_\mn^{(n)}(x^a) \,. \label{eq:intro-motiv-exp} \end{equation}
If we substitute the expansion \eqref{eq:intro-motiv-exp} into the left-hand-side of \eqref{eq:intro-motiv-nonlinearwave} and look at just the terms coming from the flat background $\gzero_\mn$ we obtain
$$ \gzero^{\rho \sigma} \p_\rho \p_\sigma G_\mn = \sum_{n \in \mathbb{Z}} \left( m^{ab} \p_a \p_b - \left( \frac{n}{R} \right)^2 \right) G^{(n)}_\mn \,.$$
Heuristically one can see that the modes $G^{(n)}_\mn$ with $n \neq 0$, will satisfy non-linear Klein-Gordon equations with mass $\vert n \vert /R$. At this point, the standard physical argument is to ignore these $n \neq 0$ modes by taking $R$ sufficiently small that the mass $\vert n \vert /R$ is larger than any probable energy \cite{Pope}. 
From the PDE point of view, including the $n \neq 0$ modes will lead to terms of the form 
$ G^{\obar \, \obar} \p^2 _{\obar \obar}G_\mn$ and thus to the trapping of energy in the compact direction, a much more difficult problem for future work. 

Hence we consider only the $n=0$ modes by setting $G^{(n)}_\mn = 0$ for all $n \neq 0$. This implies the higher-dimensional metric $G_\mn$ depends only on the non-compact coordinates
\begin{equation} G_\mn(x^a, x^\obar) = G_\mn(x^a) \,. \label{eq:intro-motiv-metricassumpt} \end{equation}
Assuming the flat, background metric on $\R^{1+3} \times S^1$, this condition is equivalent to constraining the perturbations $h_\mn$ to satisfy
$$ \p_\obar h_\mn = 0\,,  \quad \forall \, \mu, \nu \in \{ 0, \ldots, 4 \} \,. $$
Indeed under the assumption  \eqref{eq:intro-motiv-metricassumpt}, it is standard, see for example \cite{Pope}, to use an Ansatz involving the $(3+1)-$dimensional metric $g_{ab}$, dilaton $\phi$ and vector potential $\mathcal{A}_a$
\begin{equation} \begin{split} 
G_{ab} & = e^{2 \alpha \phi} g_{ab} + e^{2 \beta \phi} \mathcal{A}_a \mathcal{A}_b \,, \qquad a \,,b \in \{ 0, 1, 2, 3 \} \,,\\
G_{a \obar} & = e^{2 \beta \phi} \mathcal{A}_a\,, \quad G_{\obar \obar} = e^{2 \beta \phi} \,,
\end{split} \label{eq:intro-motiv-ansatz} \end{equation}
where $\alpha := \sqrt{12}/12, \beta := - 2 / \sqrt{12}$ are some  constants. Indeed provided $\beta \neq 0$ this choice fully parametrises the higher-dimensional metric. Using this Ansatz, the higher-dimensional vacuum Einstein equations \eqref{eq:intro-motiv-ee} reduce to the following non-minimally coupled\footnote{We use `non-minimal' here and throughout in the sense that there is non-trivial coupling between the scalar and Maxwell fields.} $(3+1)$-dimensional Einstein-Maxwell-Scalar system:
\begin{subequations} \label{eq:intro-motiv-ems}
\begin{align}
R_{ab} &= \frac{1}{2} \p_a \phi \p_b \phi + \frac{1}{2} e^{-6\alpha \phi} \left( \mathcal{F}_{ac} \mathcal{F}_b {}^c - \frac{1}{4} \mathcal{F}_{cd} \mathcal{F}^{cd} g_{ab} \right) \,, \label{eq:intro-motiv-ems-1} \\
\nabla^a \left( e^{- 6 \alpha \phi} \mathcal{F}_{ab} \right) &= 0 \,, \label{eq:intro-motiv-ems-2} \\
\tbox_g \phi &= - \frac{3}{2} \alpha e^{- 6 \alpha \phi} \mathcal{F}_{cd} \mathcal{F}^{cd}\,, \label{eq:intro-motiv-ems-3}
\end{align} 
\end{subequations} 
where $\mathcal{F}_{ab} := \p_a \mathcal{A}_b - \p_b \mathcal{A}_a$. 
The Ansatz \eqref{eq:intro-motiv-ansatz} is chosen so that $g_{ab}$, $\mathcal{A}_a$ and $\phi$ transform as a $(3+1)-$dimensional metric, vector potential and scalar field respectively. A similar argument can be made when $d >1$, see Appendix \ref{section:ems-full}. 

Indeed when we compactify over a $\tee$, not merely an $S^1$, there will be additional compact coordinates $\{ x^\abar \}$ where $\abar \in \{ 4, \ldots, d+3\}$. While the non-linearities in \eqref{eq:intro-motiv-ems} may exhibit some, perhaps mixed, structure involving null and non-null terms, much more is known about the non-linear structure at the level of the higher-dimensional metric $G_\mn$ and so this is what we turn to now. Impose the flat, background metric $\gzero_\mn$ on $\R^{1+3} \times \tee$ given by
$$ \gzero_\mn := \begin{pmatrix}
m_{ab} & 0 \\ 0 & \delta_{\abar \, \bbar} 
\end{pmatrix} \,, \quad a,b \in \{ 0, 1, 2, 3 \} \,,  \quad \abar, \bbar \in \{ 4, \ldots, d+3\} \,. $$ 
We use Roman letters on the $4-$dimensional non-compact space, underlined Roman letters on the $d$-dimensional compact space and Greek letters $\mu, \nu$ for the full $(3+d+1)-$dimensional space. Note the choice of flat metric implies 
$$ R_\mn [\, \gzero \,] = 0 \,.$$ 
Condition \eqref{eq:intro-motiv-metricassumpt} now becomes
\begin{align}
\p_\abar h_\mn = 0\,, \quad \forall \, \mu \,, \nu \,, \abar \,. \label{eq:TIP}
\end{align}
Let us now turn to the PDE system assuming \eqref{eq:TIP}. If the initial data satisfies \eqref{eq:intro-motiv-metricassumpt}, and the PDE is invariant in the compact directions, then the solution will be also.
The perturbation and inverse perturbation are defined by 
$$ h_\mn := G_\mn - \gzero_\mn \,, H^\mn := G^\mn - \gzero{}^\mn \,, $$
where $G_{\mu \rho } G^{\rho \nu} = \delta^\nu_\mu$.   Differentiating this identity and using \eqref{eq:TIP} implies $\p_\abar H^\mn = 0$.
Imposing the wave coordinate condition \eqref{eq:intro-wave-gauge} on the full metric $G_\mn$, 
\begin{subequations} \label{eq:intro-torus-pde}
\begin{equation}
\p_\rho \left( G^{\rho \mu} \sqrt{\vert \det G \vert} \right) = 0 \,,
\end{equation}
the vacuum Einstein equations reduce to the following non-linear wave system
\begin{equation} \begin{split}
\tbox_g h_{ab} &= P(\p_a h, \p_b h)  + Q_{ab} (\p h, \p h) + G_{ab} (h) (\p h, \p h) \,, \\
\tbox_g h_{a \bbar} &= Q_{a \bbar} (\p h, \p h) + G_{a \bbar} (h) (\p h, \p h) \,, \\
\tbox_g h_\abbar &= Q_\abbar (\p h, \p h) + G_\abbar (h) (\p h, \p h) \,,
\end{split} 
\end{equation}
where the non-null $\mathcal{O}(( \p h )^2)$ terms are
\begin{equation} \begin{split}
P(\p_a h, \p_b h) &  =  m^{ef} m^{cd} \left( \frac{1}{4} \p_a h_{ef} \p_\nu h_{cd} - \frac{1}{2} \p_a h_{ec} \p_b h_{fd} \right) + \delta^\abbar \delta^{\cbar \, \dbar} \left( \frac{1}{4} \p_a h_\abbar \p_b h_{\cbar \, \dbar} - \frac{1}{2}  \p_a h_{\ubar{a} \ubar{c}} \p_b h_{\ubar{b} \ubar{d}} \right) \\
& + \delta^\abbar m^{cd} \left( \frac{1}{4}  \p_a h_\abbar\p_b h_{cd} + \frac{1}{4}  \p_b h_\abbar \p_a h_{cd} - \p_a h_{\abar c} \p_b h_{\bbar d}  \right) \,. \\
\end{split} \label{eq:intro-torus-pde-2}
\end{equation} 
\end{subequations}
As before, $Q_\mn$ consists of linear combinations of the standard null-forms \eqref{eq:intro-nullforms}, and $G_\mn$ is quadratic in $\p h$ and vanishing when $h=0$. The explicit equations are given in Proposition \ref{remark:torus-full-pde}. If we consider the unknowns $\{ h_{a \bbar}, h_\abbar \}$ as variables $\{ \psi_k \}$ then the system \eqref{eq:intro-torus-pde} falls into the class considered in \eqref{eq:intro-new-EE}. Using Theorem \ref{theorem:intro1} we now obtain the following. 

\begin{theorem} \label{intro:corol-motiv}
Let $(\Sigma_0, \gamma_{ij}, K_{ij})$ be smooth initial data for the equations of motion \eqref{eq:intro-torus-pde} arising from the $(3+d+1)-$dimensional vacuum Einstein equations under \eqref{eq:TIP}.  Define $r := ( m_{i'j'} x^{i'} x^{j'} )^{1/2} \to \infty$ and the 2-tensor $\gamma^1_{ij}$ by
\begin{align*}
&\gamma_{ij} = \begin{pmatrix}
\left( 1+ \chi(r) \frac{M}{r} \right) \delta_{i'j'} & 0 \\
0 & \delta_\abbar
\end{pmatrix} + \gamma^1_{ij} \,, \\
&i', j' \in \{1, 2, 3 \} \,, \quad \abar , \bbar \in \{ 4, \ldots, 3+d\} \,, \quad i, j \in \{1, \ldots, 3+d\} \,,
\end{align*} 
with $\chi(r)$ defined in \eqref{eq:intro-chi} and $M \in (0, \infty)$. 
Suppose also that $\Sigma_0$ is diffeomorphic to $\R^3 \times \tee$ and the initial data satisfies the constraint equations \eqref{eq:intro-vacuum-constrainteq} and is asymptotically Kaluza-Klein in the sense that
\begin{equation}
\gamma^1_{ij} = o(r^{-1-\alpha}) \,,  \qquad K_{ij}= o(r^{-2-\alpha}) \,, \qquad  \alpha >0 \,.
\end{equation}
Furthermore for $N \geq 8$ and $\gamma \in (0, 1/2)$ define
$$ E_N(0) := \sum_{\vert I \vert \leq N} \int_{\Sigma_0}  (1+r)^{1+2\gamma+2\I} \left( \vert \nabla \nabla^I \gamma^1 \vert^2  +\vert \nabla^I K \vert^2  \right) d^{3+d} x  \,.$$
Then there exists a constant $\varepsilon_0>0$ such that for all $\varepsilon \leq \varepsilon_0$ and initial data with $E_N(0)^{1/2} + M \leq \varepsilon$, the solution 
$$G_\mn (t) = \gzero_\mn + h_\mn(t) $$
to \eqref{eq:intro-torus-pde} exists for all times and 
in particular $G_\mn (t) \to \gzero_\mn$ as $t \to \infty$. Hence the perturbed $\tee$ radii decay to the radii of the background geometry.  
\end{theorem}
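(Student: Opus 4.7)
The plan is to deduce the theorem as a corollary of Theorem \ref{theorem:intro1} by verifying that the zero-mode-truncated system \eqref{eq:intro-torus-pde} fits into the generalised framework \eqref{eq:intro-new-EE}, and then extracting pointwise convergence from the global energy bound. First I would identify the unknowns: the non-compact components $h_{ab}$ play the role of the metric perturbation in \eqref{eq:intro-new-EE}, while the mixed components $\{h_{a\bbar}\}$ and the purely compact components $\{h_\abbar\}$ are relabelled as the scalar variables $\{\psi_k\}_{k=1}^m$. The structural assumption \eqref{eq:TIP} ensures $\p_\abar h_\mn = 0$, so the reduced wave operator $G^{\rho\sigma}\p_\rho\p_\sigma$ involves only derivatives in the non-compact directions and acts as an operator on $\R^{1+3}$, exactly as in \eqref{eq:intro-new-EE}.

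Next I would match the non-linearities. The decomposition \eqref{eq:intro-torus-pde-2} shows that the non-null $P$-type quadratic appears only in the $h_{ab}$ equation. The first line of \eqref{eq:intro-torus-pde-2} reproduces the Lindblad--Rodnianski $P$-term in the purely $4$-dimensional components, the second line is of the form $N^{kl}\p_a\psi_k\p_b\psi_l$ with constant coefficients read off from the products $\delta^\abbar\delta^{\cbar\dbar}$, and the third line is a symmetrised mixed product between $\p h_{ab}$ and $\p\psi_k$ with coefficients $N^k$ coming from $\delta^\abbar m^{cd}$. This is precisely the structure of \eqref{eq:intro-new-EE-P}. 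All remaining quadratic terms in the equations for $h_{a\bbar}$ and $h_\abbar$ are null forms or cubic-in-$h$ corrections vanishing at $h=0$, and so fit into the $Q_k$, $G_k$ of \eqref{eq:intro-new-EE}.

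For the initial data, because every field is $x^\abar$-independent the volume integral over $\Sigma_0$ factorises as $\operatorname{vol}(\tee)$ times an integral over $\R^3$, so the smallness $E_N(0)^{1/2}+M\leq\varepsilon$ of the present theorem is equivalent, up to a constant absorbable in $\varepsilon_0$, to the smallness assumed in Theorem \ref{theorem:intro1}, with $(f_k, g_k)$ read off from the $a\bbar$ and $\abbar$ components of $\gamma^1_{ij}$ and $K_{ij}$ and with the same asymptotic decay rate \eqref{eq:intro-decay-id}. The vacuum constraints \eqref{eq:intro-vacuum-constrainteq} on the full slice, under \eqref{eq:TIP}, translate into constraint equations on $(\gamma^1_{i'j'}, K_{i'j'}, f_k, g_k)$ compatible with Theorem \ref{theorem:intro1}.

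Theorem \ref{theorem:intro1} then yields a global-in-time smooth solution with $\mathcal{E}_N[W^1](t)\leq C_N\varepsilon(1+t)^{C\varepsilon}$. To conclude $G_\mn(t)\to\gzero_\mn$ and hence the decay of the torus radii, I would combine this energy bound with a weighted Klainerman--Sobolev inequality to obtain pointwise decay of $h^1_\mn$ roughly of the form $|h^1(t,x)|\lesssim\varepsilon(1+t+r)^{-1+C\varepsilon}(1+|q|)^{-\gamma}$, together with the explicit decay of the mass contribution $\chi(r)\chi(r/t)M/r$. For each fixed $x$ this gives $h_\mn(t,x)\to 0$ as $t\to\infty$, and since the physical radii of the compact factor are encoded in the spectrum of the symmetric matrix $G_{\abar\bbar}$, they decay back to the background values $\delta_\abbar$. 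The principal obstacle is really the structural verification; once the embedding into \eqref{eq:intro-new-EE} is made, global existence and convergence follow from Theorem \ref{theorem:intro1} and the decay machinery already developed in the Lindblad--Rodnianski framework.
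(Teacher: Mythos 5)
Your proposal is correct and follows the paper's own route: verify (as in Proposition~\ref{remark:torus-full-pde}) that the zero-mode system \eqref{eq:intro-torus-pde} embeds into the general framework \eqref{eq:intro-new-EE} with $\{\psi_k\} := \{h_{a\bbar}, h_\abbar\}$, invoke Theorem~\ref{theorem:intro1} for the global energy bound, and then read off pointwise decay of $G_\mn - \gzero_\mn$ from the Klainerman--Sobolev machinery (Corollary~\ref{corol-9.4}, Proposition~\ref{prop-10.2}), with the toroidal volume factoring out of all integrals exactly as you note. One small misattribution: the term $-\delta^\abbar m^{cd}\,\p_a h_{\abar c}\,\p_b h_{\bbar d}$ in the $\delta^\abbar m^{cd}$ grouping of \eqref{eq:intro-torus-pde-2} is a product of two $\psi$-type derivatives (both factors carry a compact index), so it belongs in the $N^{kl}$ bucket $P^3$ rather than the $N^k$ bucket $P^2$ as you suggest; this does not affect the conclusion, since both are admissible in \eqref{eq:intro-new-EE-P}, but it is worth being precise because the $P^2$ estimate relies crucially on the $m^{\rho\sigma}h_{\rho\sigma}$ trace structure while $P^3$ does not.
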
 

We assume the initial data required by Theorem \ref{intro:corol-motiv} exists, however the constraint equations have a perhaps more natural interpretation from the perspective of the $(3+1)-$dimensional non-minimally coupled Einstein-Maxwell-Scalar system \eqref{eq:ems-eom}. See Appendix \ref{section:ems-full} for more details. Note also that the radii of the $\tee$ in the background geometry do not necessarily have to be taken small, merely non-zero. 

In fact, in an earlier work by Lindblad and Rodnianski, it was shown that the perturbed solution gives a future geodesically complete manifold. 
Since our variables obey the same decay rates, it should follow similarly to Section 16 in \cite{Lindblad:2003hw} that the perturbed solution $G_\mn(t)$ yields a future causally geodesically complete solution asymptotically converging to $\R^{1+3} \times \tee$.\\

\subsection{Other results and future work}
We now make some final comments on the additional terms considered in the generalised system \eqref{eq:intro-new-EE}. Although our method of proof follows from \cite{LR:04}, the terms $h_\abbar$ and $h_{a \bbar}$ cannot be treated directly as $\psi$ variables in the original system \eqref{eq:intro-LR-EE} of Lindblad and Rodnianski. 
For one, these variables have non-trivial inhomogeneities which is unlike those presented in \eqref{eq:intro-LR-EE}. In the Kaluza-Klein example there is also a physical interpretation. Equation \eqref{eq:intro-torus-pde-2} includes a term of the form
$$ \delta^\abbar m^{cd} \left( \frac{1}{4}  \p_a h_\abbar\p_b h_{cd} + \frac{1}{4}  \p_b h_\abbar \p_a h_{cd} \right) \,.$$
This describes interactions between the $(3+1)-$dimensional metric and the scalar fields $g_\abbar$. Similar terms exist in $Q_\mn$ and $G_\mn$ also. This non-trivial coupling cannot come from the stress-energy tensor of a massless scalar field, and so the more general PDE system \eqref{eq:intro-new-EE} is required. 

The method of Lindblad and Rodnianski has also been used by Choquet-Bruhat, Loizelet and Chrusciel to show the non-linear stability of Minkowski spacetime $\R^{1+n}$ for $n\geq 4$ \cite{ChoquetBruhat:2006jc}. Indeed following the method of \cite{LR:04}, Loizelet proved in \cite{MR2582443} that the Minkowski solution minimally coupled to the  Maxwell equations is non-linearly stable. The system considered was:
\begin{equation} \begin{split}
R_{ab} &= 2 \left( \mathcal{F}_{ac} \mathcal{F}_b {}^c - \frac{1}{4} g_{ab} \mathcal{F}^{cd} \mathcal{F}_{cd} \right) \,,\\
\nabla_a \mathcal{F}^{ab} &=0   \,. \end{split} \label{eq:intro-cb-loiz} \end{equation}
Using the wave-coordinate condition \eqref{eq:intro-wave-gauge} and the Lorenz gauge $\p_\mu \left( \sqrt{\det g} \A^\mu \right)=0$ this PDE reduces to a system of the form
\begin{equation} \begin{split}
\tbox_g h_\mn & = F_\mn (h)(\p h, \p h) + \widetilde{F}_\mn (h)(\p A, \p A) \,, \\
\tbox_g \mathcal{A}_\mu & = F^\A_\mu (h) (\p h, \p A) \,. \end{split} \label{eq:loiz-reduced} \end{equation} 
The PDE \eqref{eq:loiz-reduced} has very similar properties to our general PDE \eqref{eq:intro-new-EE}, see Remark \ref{remark:loizelet} for more details. Furthermore \eqref{eq:intro-motiv-ems} would reduce to  \eqref{eq:intro-cb-loiz} \textit{if} $\phi$ could be set to a constant, thus unifying gravity and electromagnetism from a pure higher-dimensional gravity.  However the equation of motion \eqref{eq:intro-motiv-ems-3} for $\phi$ is not a free wave equation, but rather has an inhomogeneity involving $\mathcal{F}$. From this one sees that it is inconsistent to simply take $\phi = 0$, since this Ansatz would not satisfy the equations of motion. Thus our more generalised system \eqref{eq:intro-new-EE} is needed to treat \eqref{eq:intro-motiv-ems}. This should be contrasted with our earlier Ansatz that $G^{(n)}_\mn = 0$ for all $n \neq 0$. This does not violate the equations of motion and thus gives what is often called a consistent truncation.

Of course not all Einstein systems fall into the class considered in \eqref{eq:intro-new-EE}. An example is general relativity coupled to non-linear electromagnetic fields, such as the Born-Infeld system, considered in \cite{Speck}. 

An obvious question not answered in our work is to consider all modes not just the $n=0$ mode.  Furthermore one could ask whether other product spacetimes of interest in string theory, such as $\R^{1+3} \times \mathcal{M}$ where $\mathcal{M}$ is a Calabi-Yau manifold, are non-linearly stable against some, or all, perturbations of initial data. In the full $n-$mode case the non-linear wave equations contain Klein-Gordon terms, which have only recently been treated at the full non-linear level by \cite{LeFloch:2015ppi}.

\subsection{The Weak Null Condition} \label{section:intro-weak-null}
As alluded to so far, there are important properties of a non-linearity which imply, or at least allow one to hope, that a PDE may be solved globally. One such condition on the non-linearities is the \textit{weak null condition}, first introduced by Lindblad and Rodnianski in \cite{lindblad:weaknull}. Their idea was to look at an asymptotic form of the PDE and determine whether this has global solutions. Here we just state the condition following the notation of the original work, but for more details and motivation see \cite{lindblad:weaknull, LR:04}. 

\begin{definition}
Consider a system of hyperbolic PDEs for unknowns $u_i, i \in \{1, \ldots, N \}$
\begin{equation} \begin{split}
 - \tbox_m u_i &= a^{jk}_{i \alpha \beta} \p^\alpha u_j \p^\beta u_k + G_i (u, \p u, \p^2 u) \,, \\
 u(t=0,x) &= \varepsilon f(x) \in C^\infty \,, \p_t u(t=0,x) = \varepsilon g (x) \in C^\infty  \,,
 \end{split} \label{eq:weak-null-pde}\end{equation}
where $G_i$ vanishes to third order as $(u, \p u, \p^2 u) \to 0$ and $a^{jk}_{i \alpha \beta}  = 0$ unless $\vert \alpha \vert \leq \vert \beta \vert \leq 2$ and $ \vert \beta \vert \geq 1$. Note also $\p^\alpha$ are multi-indices, not coordinate indices. 
Take the following asymptotic expansion 
\begin{equation} u_i(t,x) \sim \frac{\varepsilon U(q,s,\omega)}{r} \,, q := r-t, s:= \varepsilon \ln r, \omega := \vert x \vert / r \,, \label{eq:weak-null-exp} \end{equation}
as $r \to \infty$ and in the wave-zone region $r \sim t$.
Equating terms of order $\mathcal{O}(\varepsilon^2 r^{-2})$ gives the system
\begin{equation} \begin{split}
2 \p_s \p_q U_i &= A^{jk}_{i, mn} (\omega) \p_q ^m U_j \p^n U_k \\
U_i(s=0) &= F_0 \,, \label{eq:weak-null-asympt}
\end{split} \end{equation}
where
$$ A^{jk}_{i, m n}  (\omega) := \sum_{\vert \alpha \vert = m, \vert \beta \vert = n} a^{jk}_{i \alpha \beta}  \hat{\omega}^\alpha \hat{\omega}^\beta \,, \quad \hat{\omega} := (-1, \omega) \,, \quad \hat{\omega}^\alpha = \prod_{\sum \vert \alpha_i \vert := \vert \alpha \vert } \hat{\omega}_{\alpha_1} \cdots \hat{\omega}_{\alpha_k} \,.$$
The system \eqref{eq:weak-null-pde} is said to satisfy the weak null condition if the asymptotic system \eqref{eq:weak-null-asympt} has solutions for all $s$ and if the solutions $U_i$ together with their derivatives grow at most exponentially in $s$ for all initial data decaying sufficiently fast in $q$. 
\end{definition}

Note the null-forms \eqref{eq:intro-nullforms} satisfy a stricter condition called the \textit{null condition}, analysed by Klainerman in \cite{MR837683}, which requires $ A^{jk}_{i mn} (\omega) \equiv 0$.

At each point $(t,x) \in \R \times (\R^3 \backslash \{ 0 \} )$ introduce the null vectors $(L, \lbar)$ defined by
$$ L^0 = 1, L^i = x^i / \vert x \vert \,, \quad \lbar^0 = 1, \lbar^i = - x^i / \vert x \vert \,. $$
In a neighbourhood of each point, we can also find a pair of orthonormal vector fields $(S_1, S_2)$ orthogonal to the null pair $(L, \lbar)$. This set $\U := \{ L, \lbar, S_1, S_2 \}$ is the null frame, see also Section \ref{section:null-frame} for more details. Let the inverse perturbation be
$ H^\mn := g^\mn - m^\mn$.
As described in \eqref{eq:weak-null-exp}, we take the asymptotic expansion 
$$h_\mn \sim \frac{\varepsilon D_\mn}{r} \,, \quad \psi_k \sim \frac{\varepsilon V_k}{r} \,,$$ 
and substitute this into the PDE system \eqref{eq:intro-new-EE} to yield
\begin{subequations} \label{eq:intro-gen-weaknull}
 \begin{align}
(2 \p_s - H_{LL} \p_q ) \p_q D_\mn &= L_\mu L_\nu \left[ P^1(\p_q D, \p_q D) + P^2 (\p_q D, \p_q V) + P^3 (\p_q V, \p_q V) \right] \label{eq:intro-asymp1} \,, \\
(2 \p_s - H_{LL} \p_q ) \p_q V_k &= 0 \label{eq:intro-asymp2} \,, \\
2\p_q D_{L \mu} &= L_\mu \p_q (\tr_m D) \,. \label{eq:intro-asymp3}
 \end{align}
\end{subequations}
Here $P^i$ are non-null terms defined in \eqref{eq:full-P-general-PDE} and \eqref{eq:intro-asymp3} is the asymptotic form of the wave-coordinate condition \eqref{eq:intro-wave-gauge}. Following the argument of \cite{lindblad:weaknull}, by \eqref{eq:intro-asymp1}
$$ (2 \p_s - H_{LL} \p_q) 2 \p_q D_{L \mu} = 0 \,,$$
implying $\p_q D_{L \mu}$ is preserved along the integral curves of the vector field $(2 \p_s - H_{LL} \p_q)$ and hence \eqref{eq:intro-asymp3} is preserved under the flow of \eqref{eq:intro-asymp1}. Contract \eqref{eq:intro-asymp1} with $L^\mu L^\nu$ to obtain the equation 
$$ (2 \p_s - H_{LL} \p_q ) \p_q D_{LL} = 0 \,, $$
which can be solved globally for $D_{LL}$. Using this solution, one can now solve \eqref{eq:intro-asymp2} for $V_k$. Furthermore by contracting \eqref{eq:intro-asymp1} with $T^\mu U^\nu$, for $T \in \{ L, S_1, S_2 \}, U \in \{ L,\lbar, S_1, S_2 \}$ we obtain the equation 
$$ (2 \p_s - H_{LL} \p_q ) \p_q D_{TU} = 0 \,.$$
This can also be solved globally and thus the only remaining unknown component is $H_{\lbar \lbar}$. Contracting \eqref{eq:intro-asymp1}  with $\lbar^\mu \lbar^\nu$ gives
$$ (2 \p_s - H_{LL} \p_q ) \p_q D_{\lbar \lbar} = 4 \left[ P^1(\p_q D, \p_q D) + P^2 (\p_q D, \p_q V) + P^3 (\p_q V, \p_q V) \right] \,.$$
Since\footnote{See \cite{lindblad:weaknull} for further discussion here.} the RHS does not contain $(\p_q H_{\lbar \lbar})^2$ terms, the equation can be solved globally for $H_{\lbar \lbar}$. 
The above argument implies the following proposition:
\begin{proposition}
The asymptotic system for the PDE \eqref{eq:intro-new-EE} takes the form \eqref{eq:intro-gen-weaknull}. The solution for this system \eqref{eq:intro-gen-weaknull} exists globally with all components remaining uniformly bounded while $\p_q H_{\lbar \lbar}$ grows at most as $s$. Thus \eqref{eq:intro-new-EE} satisfies the weak null condition.  
\end{proposition}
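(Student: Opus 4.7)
The plan is to derive the asymptotic system \eqref{eq:intro-gen-weaknull} directly from \eqref{eq:intro-new-EE} and the wave-gauge condition \eqref{eq:intro-new-EE-gauge}, and then to solve it along a triangular chain that isolates $\p_q D_{\lbar\lbar}$ as the only quantity whose growth is not controlled. For the derivation I would work in the wave zone $r \sim t$, where for any profile $F(q,s,\omega)$ one has $\p_\mu(F/r) = L_\mu(\p_q F)/r + O(r^{-2})$ since $\hat{\omega}_\mu = L_\mu$ at leading order. Substituting $h_\mn \sim \varepsilon D_\mn/r$ and $\psi_k \sim \varepsilon V_k/r$ into $\tbox_g = m^{\rho\sigma}\p_\rho\p_\sigma + H^{\rho\sigma}\p_\rho\p_\sigma$, the Minkowski piece produces $-(2/r)\p_s\p_q$ at the relevant order (the logarithmic scaling $s = \varepsilon \ln r$ is precisely what brings this term into the leading balance), while $H^{\rho\sigma}\p_\rho\p_\sigma$ contributes $(H^{\rho\sigma}L_\rho L_\sigma/r)\p_q^2 = (H_{LL}/r)\p_q^2$. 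On the right-hand side, both null forms vanish at this order because $m^{\mu\nu}L_\mu L_\nu = 0$ and the antisymmetric $\widetilde{Q}$ collapses when both derivatives are replaced by $L_\mu \p_q$, and the cubic $G$ terms are subleading; what survives is the non-null $P$, producing the $L_\mu L_\nu[P^1 + P^2 + P^3]$ source. An analogous wave-zone expansion of \eqref{eq:intro-new-EE-gauge} yields \eqref{eq:intro-asymp3}.

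For the triangular solvability chain, I would first observe that contracting \eqref{eq:intro-asymp1} with $L^\mu$ produces $(2\p_s - H_{LL}\p_q)\p_q D_{L\mu} = 0$ since $L^\mu L_\mu = 0$, and contracting with $m^{\mu\nu}$ produces $(2\p_s - H_{LL}\p_q)\p_q(\tr_m D) = 0$; together these imply that the combination $2\p_q D_{L\mu} - L_\mu \p_q(\tr_m D)$ is annihilated by the transport operator, so that \eqref{eq:intro-asymp3} propagates from its initial validity. Next, contracting \eqref{eq:intro-asymp1} with $L^\mu L^\nu$ yields a linear homogeneous transport equation for $D_{LL}$, solved globally by the method of characteristics of $2\p_s - H_{LL}\p_q$. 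With $H_{LL}$ then determined, \eqref{eq:intro-asymp2} becomes a linear homogeneous transport for each $V_k$, again globally solvable. Contracting \eqref{eq:intro-asymp1} with $T^\mu U^\nu$ for $T \in \{L, S_1, S_2\}$ and $U \in \U$ gives homogeneous transport equations for each mixed or tangential component $D_{TU}$, yielding uniformly bounded solutions. At this point every null-frame component of $D_\mn$ except $D_{\lbar\lbar}$, and every $V_k$, is controlled uniformly in $s$.

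The main obstacle, and the essence of the weak null condition, is to verify that the $\lbar^\mu\lbar^\nu$-contraction of \eqref{eq:intro-asymp1} produces an equation for $\p_q D_{\lbar\lbar}$ whose source is free of $(\p_q D_{\lbar\lbar})^2$. I would expand each summand of \eqref{eq:intro-new-EE-P} in the null frame using $m^{\alpha\beta} = -\frac{1}{2}(L^\alpha\lbar^\beta + \lbar^\alpha L^\beta) + \delta^{AB}S_A^\alpha S_B^\beta$. Since $m^{LL} = 0$, the Minkowski trace $m^{\rho\sigma}D_{\rho\sigma} = -D_{L\lbar} + \delta^{AB}D_{S_A S_B}$ contains no $D_{\lbar\lbar}$, which removes $D_{\lbar\lbar}^2$ both from the $(\p_\mu \tr h)(\p_\nu \tr h)$ part of $P^1$ and from the $N^k$ cross-terms in $P^2$. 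In the full contraction $\p_\mu h^{\rho\sigma}\p_\nu h_{\rho\sigma}$ each factor of $D_{\lbar\lbar}$ must be raised through $m^{L\lbar}$ from the opposite side, forcing a partner $D_{LL}$ rather than another $D_{\lbar\lbar}$. The purely scalar $P^3$ depends only on the already-solved $V_k$. Consequently the $D_{\lbar\lbar}$ equation is linear in $\p_q D_{\lbar\lbar}$ with source and coefficient bounded from the previous stages, and integration along the characteristic flow of $2\p_s - H_{LL}\p_q$ yields at most linear growth of $\p_q D_{\lbar\lbar}$ in $s$, which is precisely the weak null condition.
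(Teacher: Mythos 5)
Your proposal is correct and follows essentially the same strategy as the paper, which in turn reprises the argument of Lindblad--Rodnianski's weak null condition paper: derive the asymptotic transport system along the outgoing cones, show the gauge constraint propagates, solve the triangular chain $D_{LL} \to V_k \to D_{TU} \to D_{\lbar\lbar}$, and observe that the $\lbar\lbar$-source is free of $(\p_q D_{\lbar\lbar})^2$. The only substantive addition over the paper's own exposition is that you spell out the null-frame bookkeeping behind the absence of $(\p_q D_{\lbar\lbar})^2$ (via $m^{LL}=0$ and index-raising through $m^{L\lbar}$) and the propagation of \eqref{eq:intro-asymp3} via the companion transport equation for $\tr_m D$, both of which the paper defers to the cited reference; these are welcome clarifications rather than a different route.
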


Moreover Theorem \ref{theorem:intro1} could be seen as an example supporting the conjecture that non-linear wave equations arising from the Einstein equation and wave-coordinate condition that satisfy the weak null conjecture have global-in-time solutions for small data.

\section{The null frame} \label{section:null-frame}
Define the local pair of null vectors $L, \lbar$ as in \cite{LR:04} 
$$ L^0 = 1, L^i = x^i / \vert x \vert \,, \quad \lbar^0 = 1, \lbar^i = - x^i / \vert x \vert \,, $$
where $i = 1,2,3$. Note that $L$ is tangent to the outgoing Minkowski null cones $\{ (t, x) : \vert x \vert - t = q \}$ and $\lbar$ is tangent to the ingoing cones $\{ (t, y) : \vert y \vert + t = s \}$. 
Furthermore
$$ L = \p_t + \p_r \,, \lbar = \p_t - \p_r \,.$$
Locally let $S_1, S_2$ be orthonormal smooth vector fields spanning the tangent space of the spheres $S^2$. Then  $\mathcal{U}:= (L, \lbar, S_1, S_2)$ forms a null frame.  Define
$$ \p_s := \frac{1}{2}(\p_r + \p_r) \,, \p_q:= \frac{1}{2} ( \p_r - \p_t) \,.$$
Relative to the null frame $\mathcal{U}$ the Minkowski metric $m_\mn$ takes the form
$$ m_{LL} = m_{\lbar \lbar} = m_{LA} = m_{\lbar A} = 0 \,, m_{L \lbar} = m_{\lbar L} = -2 \,, m_{AB} = \delta_{AB} \,, $$
where $A, B$ denote any of the vectors $S_1$ and $S_2$. Since $S^2$ does not admit a global orthonormal frame,  we consider the projections of $S_1$ and $S_2$ defined by
$$ \slashed{\p}_i := \p_i - \omega_i \omega^j \p_j \,, \quad  \omega^i := x^i / \vert x \vert \,, \quad 
i=1,2,3 \,. $$
If we denote $\bar{\p}_i := \slashed{\p}_i$  then the set $\{L, \lbar, \bar{\p}_1, \bar{\p}_2, \bar{\p}_3 \}$ defines a global frame. Furthermore if we define $\bar{\p}_0 := L^\mu \p_\mu$ then the set 
$$\bar{\p}:= \{ \bar{\p}_0, \bar{\p}_1, \bar{\p}_2, \bar{\p}_3\} $$ spans the tangent space of the outgoing light cone. As in \cite{LR:04}, define the following notation:
\begin{definition} \label{def:U,T,L,seminorms}
Let
\begin{equation}
\mathcal{U} := \{ L, \lbar, S_1, S_2 \} \,, \mathcal{T} := \{ L, S_1, S_2 \} \,, \mathcal{L} := \{ L \} \,, \label{eq:def-U,T,L}
\end{equation}
where $\mathcal{T}$ is the set of null frame vector fields tangent to the outgoing cones.
For any two families $\mathcal{V}$ and $\mathcal{W}$ of vector fields and an arbitary 2-tensor $\pi$, define the following pointwise seminorms
\begin{align*}
\vert \pi \vert_{\mathcal{V} \mathcal{W}} & := \sum_{V \in \mathcal{V}, W \in \mathcal{W}} \vert \pi_\mn V^\mu W^\nu \vert \,, \\
\vert \p \pi \vert_{\mathcal{V} \mathcal{W}} & := \sum_{U \in \mathcal{U}, V \in \mathcal{V}, W \in \mathcal{W}} \vert ( \p_\rho \pi_{\mu \nu}) U^\rho V^\mu W^\nu \vert \,, \\
\vert \pgood \pi \vert_{\mathcal{V} \mathcal{W}} & := \sum_{T \in \mathcal{T}, V \in \mathcal{V}, W \in \mathcal{W}} \vert ( \p_\rho \pi_{\mu \nu}) T^\rho V^\mu W^\nu \vert \,.
\end{align*}
Furthermore for a collection $\{ \Psi_k \}_{k \in \K}$ where $\K := \{1 , \ldots, m \}$, let
\begin{align*}
\vert \Psi \vert_\K := \sum_{k=1}^m \vert e^k \Psi_k \vert \,,
\end{align*}
where $(e^k)$ is the standard basis on $\R^m$. 
\end{definition}

\section{The (extended) Einstein Equations and Wave Coordinates} \label{section:ee-wave-coords}
In this section we look at the structure of the non-linearity of the PDE \eqref{eq:intro-new-EE} with respect to the null frame. Recall our unknown variables are
$$W = \{ h_\mn \}_{\mu, \nu \in \{ 0, 1, 2, 3 \}} \cup \{ \psi_k \}_{k \in \K} \,,$$ 
where $ h_\mn = g_\mn - m_\mn$ is the perturbation from the background Minkowski. For small $h$, the inverse perturbation is
$$ H^\mn := g^\mn - m^\mn = -h^\mn + \mathcal{O}^\mn (h^2) \,,$$
where $h^\mn := m^{\mu \rho} m^{\nu \sigma} h_{\rho \sigma}$ and $\mathcal{O}^\mn (h^2)$ vanishes to second order at $h=0$. The PDE, repeated from \eqref{eq:intro-new-EE}, is
\begin{subequations} \label{eq:general-pde}
\begin{equation} \begin{split}
\tbox_g h_\mn &= F_\mn (W) (\p W, \p W) \,, \\
\tbox_g \psi_k &= F_k (W) (\p W, \p W) \,, \\
\end{split} 
\end{equation}
where we defined the non-linearities by
\begin{equation} \begin{split}
F_\mn (W) (\p W, \p W) &= P(\p_\mu W, \p_\nu W) + Q_\mn (\p W, \p W) + G_\mn(W) (\p W, \p W) \,, \\
F_k (W) (\p W, \p W) &= Q_k (\p W, \p W) + G_k ( W) (\p W, \p W) \,,
\end{split} \label{eq:general-pde-F}
\end{equation}
together with the wave-coordinate condition
\begin{equation}
\p_\rho \left( g^{\rho \mu} \sqrt{\vert \det g \vert} \right) = 0 \,. \label{eq:general-pde-gauge}
\end{equation}
The quadratic terms $Q_\mn, Q_k$ are linear combinations of the null forms \eqref{eq:intro-nullforms} in terms of $\p W$ variables, contracting with $m_\mn$ and/or arbitrary $N^k \in \R^m$ as appropriate. While the remaining non-null $\mathcal{O}((\p W)^2)$ terms are of the form
\begin{equation} \begin{split}
P(\p_\mu W, \p_\nu W) & := P^1( \p h, \p h)_\mn + P^2 (\p h, \p \psi)_\mn + P^3 (\p \psi, \p \psi) _\mn \,, \\
P^1(\p h, \p h) _\mn & :=  m^{\rho \sigma} m^{\lambda \tau} \left( \frac{1}{4}  \p_\mu h_{\rho \sigma} \p_\nu h_{\lambda \tau} - \frac{1}{2}  \p_\mu h_{\rho \lambda} \p_\nu h_{\sigma \tau} \right) \,, \\
P^2 (\p h, \p \psi)_\mn &:= N^k m^{\rho \sigma} \left( \p_{\mu} h_{\rho \sigma}  \p_{\nu} \psi_k +  \p_{\nu} h_{\rho \sigma}  \p_{\mu} \psi_k  \right) \,, \\
P^3 (\p \psi, \p \psi)_\mn &:= N^{kl} \left( \p_{\mu} \psi_k \p_{\nu} \psi_l + \p_{\nu} \psi_k \p_{\mu} \psi_l \right) \,.
\end{split} \label{eq:full-P-general-PDE} \end{equation}
\end{subequations} 
Again $N^k \in \R^m, N^{kl} \in \R^{m^2}$ are some arbitrary coefficients.
Lastly the terms $G_\mn (W)(\p W, \p W)$ and $G_k (W)(\p W, \p W)$ are quadratic forms in $\p W$, with coefficients depending smoothly on $W$ and vanishing for $W=0$. Note we have used compact notation in \eqref{eq:general-pde}, writing $P( \p_\mu W, \p_\nu W)$ to represent $ P(\p h, \p h, \p \psi, \p \psi)_\mn$, similarly for $F, Q$ and $G$ also. 

\begin{proposition} \label{remark:torus-full-pde}
The $n=0$ mode reduction of Kaluza Klein on a $\tee$ subject to the wave-coordinate condition \eqref{eq:intro-wave-gauge} is included in the generalised PDE system \eqref{eq:general-pde}.
\end{proposition}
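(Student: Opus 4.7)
The plan is to start from the $(3+d+1)$-dimensional vacuum Einstein equations in wave coordinates, apply the zero-mode assumption \eqref{eq:TIP}, and verify that each resulting equation matches one of the two templates in \eqref{eq:general-pde}. The first step is routine: the standard wave-coordinate reduction (identical in form to the $(3+1)$-dimensional case, since the background $\gzero$ is flat in all indices and carries no linear mass-type piece) writes the equations as
$$ \tbox_G h_\mn = P(\p_\mu h, \p_\nu h) + Q_\mn(\p h, \p h) + G_\mn(h)(\p h, \p h), $$
with $P$ the Lindblad-Rodnianski non-null form. I would then split the free indices $\mu,\nu$ into non-compact ($a,b$) and compact ($\abar,\bbar$) blocks, producing the three coupled equations anticipated in \eqref{eq:intro-torus-pde}.

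The crucial structural observation is that in $P(\p_\mu h, \p_\nu h)$ the outer partial derivatives carry the free indices. Under the assumption $\p_\abar h_\mn = 0$ from \eqref{eq:TIP}, any $P$-contribution to an equation with a compact free index therefore vanishes identically. Consequently the equations for $h_{a\bbar}$ and $h_\abbar$ are driven only by null-form quadratics $Q$ and the smooth remainders $G$. Setting $\{\psi_k\}_{k\in\K} := \{h_{a\bbar}\} \cup \{h_\abbar\}$ (with $\K$ indexing distinct components) matches these two equations exactly with the $\psi_k$-template in \eqref{eq:general-pde}, since that template admits only null-form non-linearities, which is precisely what survives.

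It remains to verify the $h_{ab}$ equation. Expanding the inverse-metric contractions hidden inside $P$, $Q$ and $G$ via $\gzero^\mn = m^{cd} \oplus \delta^\abbar$, the surviving $P$ term produces a piece quadratic in $\p h_{ab}$ with pure $m$-contractions matching $P^1$; cross pieces of the form $m^{cd}\delta^\abbar(\p_a h_{cd})(\p_b h_\abbar)$ matching $P^2$ after a suitable choice of constants $N^k$; and compact-only pieces such as $\delta^\abbar \delta^{\cbar\dbar}(\p_a h_\abbar)(\p_b h_{\cbar\dbar})$ and $m^{cd}\delta^\abbar(\p_a h_{\abar c})(\p_b h_{\bbar d})$ which fit $P^3$ after a choice of $N^{kl}$. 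The $Q$ and $G$ contributions redistribute analogously into $Q_\mn$ and $G_\mn$ in \eqref{eq:general-pde-F}. The main obstacle is pure bookkeeping — assigning each contraction to the correct $P^1, P^2, P^3$ slot and fixing the constants — rather than any genuine analytic content; all the structural content of the proposition lies in the vanishing of $P$ for compact free indices, which is exactly what makes the zero-mode truncation compatible with the generalised PDE framework.
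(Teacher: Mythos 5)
Your proposal is correct and takes essentially the same route as the paper: derive the wave-coordinate reduced Einstein equations in $(3+d+1)$ dimensions, impose \eqref{eq:TIP}, split indices into compact/non-compact blocks, and match the result against \eqref{eq:general-pde}. Your explicit observation — that $P(\p_\mu h,\p_\nu h)$ carries its free indices on the outer partials, so it vanishes identically on any equation with a compact free index, leaving only null-form and $G$-type terms in the $h_{a\bbar}$ and $h_\abbar$ equations — is the key structural reason the system fits; the paper states the resulting equations directly without articulating this, so your framing is a nice complement, and the remaining work of sorting the $\gzero^\mn=m^{cd}\oplus\delta^{\abbar}$ contractions inside $P(\p_a h,\p_b h)$ into $P^1,P^2,P^3$ is, as you say, bookkeeping that matches the paper's explicit formulas (note only that $m^{cd}\delta^{\abbar}\p_a h_{\abar c}\p_b h_{\bbar d}$, which you call compact-only, is a mixed contraction — but it is still correctly placed in $P^3$ since both factors are $\psi$-type components).
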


\begin{proof}
Under \eqref{eq:TIP} and the wave-coordinate condition the vacuum Einstein equations in $\R^{1+3} \times \tee$ reduce to the system
\begin{subequations}
\begin{align}
\tbox_g h_{ab} &= P(\p_a h, \p_b h)  + Q_{ab} (\p h, \p h) + G_{ab} (h) (\p h, \p h) \,, \label{eq:torus-pde-1}\\
\tbox_g h_{a \bbar} &= Q_{a \bbar} (\p h, \p h) + G_{a \bbar} (h) (\p h, \p h)\,, \\
\tbox_g h_\abbar &= Q_\abbar (\p h, \p h) + G_\abbar (h) (\p h, \p h) \,.
\end{align} \label{eq:torus-pde}
\end{subequations}
The $\mathcal{O}((\p h)^2)$ non-linearities are given by 
\begin{align*}
P(\p_a h, \p_b h) &= P^1(\p_a h, \p_b h)+P^2(\p_a h, \p_b h)+P^3(\p_a h, \p_b h) \,, \\
P^1 (\p_a h, \p_b h) &  =  m^{a'b'} m^{cd} \left( \frac{1}{4} \p_a h_{a'b'} \p_\nu h_{cd} - \frac{1}{2} \p_a h_{a'c} \p_b h_{b'd} \right) \,, \\
\end{align*}
\begin{align*}
P^2 (\p_a h, \p_b h)& = \delta^\abbar m^{cd} \left( \frac{1}{4}  \p_a h_\abbar\p_b h_{cd} + \frac{1}{4}  \p_b h_\abbar \p_a h_{cd}   \right) \,, \\
P^3 (\p_a h, \p_b h)& = - \delta^\abbar m^{cd} \left( \p_a h_{\abar c} \p_b h_{\bbar d} \right) + \delta^\abbar \delta^{\cbar \, \dbar} \left( \frac{1}{4} \p_a h_\abbar \p_b h_{\cbar \, \dbar} - \frac{1}{2}  \p_a h_{\abar \, \cbar} \p_b h_{\bbar \, \dbar} \right) \,,\\
Q_{ab}(\p h, \p h) &=   m^{c c'} Q_0 \left( h_{ca}, h_{c'b} \right) + \delta^{\cbar \cbar'} Q_0 \left( h_{\cbar a}, h_{\cbar 'b} \right) \\
& \quad + m^{dd'} \left( m^{cc'} Q_{dc'} (h_{ca}, h_{d'b}) + \delta^{\cbar \cbar'} Q_{ad} (h_{d' \cbar'} , h_{\cbar b} ) \right) + (a \leftrightarrow b) \\
& \quad + \frac{1}{2} m^{c c'} \left( m^{dd'} Q_{c' a} (h_{dd'}, h_{cb} ) + \delta^{\dbar \dbar'} Q_{c' a} ( h_{\dbar \dbar'}, h_{cb} ) \right) + (a \leftrightarrow b ) \,, \\
Q_{a \bbar}(\p h, \p h) &=  \delta^{\cbar \, \dbar} Q_0 ( h_{\cbar a}, h_{\dbar \bbar}) +  m^{cb} \delta^{\cbar \, \dbar} \left( Q_{ac}(h_{b\dbar}, h_{\cbar \bbar}) + \frac{1}{2} Q_{ba} (h_{\cbar \dbar}, h_{c \bbar}) \right) \,, \\
Q_{\abar \bbar}(\p h, \p h) &=  \delta^{\cbar \, \dbar} Q_0 (h_{\cbar \abar}, h_{\dbar \bbar} )\,.
\end{align*}
Here $(a \leftrightarrow b)$ indicates the previous bracketed term are repeated with $a$ and $b$ swapped. 
Recall also the indices $a,b$ and $\abar, \bbar$ are as defined in Section \ref{section:intro-motiv-eg}. Clearly $Q_{ab}$ and $Q_\abbar$ are linear combinations of the standard null forms \eqref{eq:intro-nullforms} and so satisfy the required estimates. If we consider the set $\{ \psi_k \} := \{ h_{a \bbar}, h_\abbar \}$ then we see the PDE \eqref{eq:torus-pde} falls into the system described in \eqref{eq:general-pde}
\end{proof}

\begin{remark}
As illustrated in the $d=1$ case in \eqref{eq:intro-motiv-ems}, the Einstein equation for the higher-dimensional metric can be interpreted from the view of the lower-dimensional theory as a $(3+1)-$dimensional metric coupled to a Maxwell vector potential and a scalar field. 

This structure is reflected in \eqref{eq:torus-pde-1}, which comes from the non-vacuum $(3+1)-$dimensional Einstein equation. Up to redefining variables, as done in \eqref{eq:intro-motiv-ansatz}, one can heuristically see the coupling between the Maxwell fields $h_{a \bbar}$, scalar fields $h_\abbar$ and the geometry $h_{ab}$. 

The first term $P^1(\p_a h, \p_b h)$ contains the standard non-null expression in \eqref{eq:intro-LR-EE} studied by Lindblad and Rodnianski coming from the self-interaction of the metric. 
The term $P^2(\p_a h, \p_b h)$ comes from the non-trivial coupling between the background geometry $h_{ab}$ and the scalar fields $h_\abbar$. 
The first term in $P^3(\p_a h, \p_b h)$ involves the interactions between the vector potentials, while the second term comes from the stress-energy tensor $T^\psi$ for a scalar wave. 
\end{remark}

Next we turn to the structure of the non-linearity of the PDE \eqref{eq:general-pde} with respect to the null frame.
As in \cite{LR:04}, first express the non-linearity in terms of some general tensors and/or scalars. For example, for symmetric 2-tensors $p, k$ we define
$$ P^1(p, k) := m^{\rho \sigma} m^{\lambda \tau} \left( \frac{1}{4}  p_{\rho \sigma} k_{\lambda \tau} - \frac{1}{2}  p_{\rho \lambda} k_{\sigma \tau} \right) \,.$$
Note there are no free indices here. Similar definitions holds for $P^2, P^3$ with $\Phi, \Psi$ some arbitrary functions. All together, let
$$ P(p, k, \Psi, \Phi) := P^1(p, k) + P^2 (p, \Psi) + P^3(\Psi, \Phi) \,.$$
The reason for using this notation is that eventually we will want to calculate $Z^I F_\mn$ where $Z \in \mathcal{Z}$. This notation allows us to derive estimates which still hold even when we have distributed the $Z^I$ derivatives across the terms in the non-linearity. See for example Corollary \ref{corol-9.7}. 

For the null terms, it is irrelevant which components of the unknowns are being considered, but it is crucial which derivatives appear. Thus for some $\Pi, \Theta$, which are either 2-tensors or scalars as required, define
\begin{equation} \vert Q(\p \Pi, \p \Theta) \vert := \vert Q_\mn (\p \Pi, \p \Theta) \vert_\UU + \vert Q_k ( \p \Pi, \p \Theta) \vert_\K \,. \label{eq:null-norm} \end{equation}

\begin{lemma}[Modified Lemma 4.2 from \cite{LR:04}] \label{lemma-4.2}
Let $\pi, \theta$ be arbitrary 2-tensors, $\Phi, \Psi$ arbitrary functions and $\Pi, \Theta$ 2-tensors or scalars as required. For the $\mathcal{O}((\p W)^2)$ terms contained in \eqref{eq:full-P-general-PDE} we obtain
\begin{align*} 
\vert P(p, k, \Psi, \Phi)\vert & \lesssim \vert p \vert_\TU \vert k \vert_\TU + \vert p \vert_\LL \vert k \vert_\UU + \vert p \vert_\UU \vert k \vert_\LL + \vert p \vert_\TU \vert \Psi \vert_\K + \vert \Psi \vert_\K \vert \Phi \vert_\K \,, \\
\vert Q(\p \Pi, \p \Theta) \vert & \lesssim \vert \pgood \Pi \vert \vert \p \Theta \vert + \vert \p \Pi \vert \vert \pgood \Theta \vert \,.
\end{align*}
\end{lemma}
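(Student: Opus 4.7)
The plan is to decompose every tensor contraction against the Minkowski metric $m$ in the null frame $\U = \{L,\lbar,S_1,S_2\}$ and then read off which components of $p,k,\Psi,\Phi$ appear. The key computation is that, with respect to the null frame, the only non-vanishing components of the inverse Minkowski metric are $m^{L\lbar}=m^{\lbar L}=-\tfrac12$ and $m^{AB}=\delta^{AB}$, so that
\begin{equation*}
m^{\mu\nu}=-\tfrac12\bigl(L^\mu\lbar^\nu+\lbar^\mu L^\nu\bigr)+\delta^{AB}S_A^\mu S_B^\nu.
\end{equation*}
In particular $m^{\mu\nu}$ has no $LL$- or $\lbar\lbar$-components; every contraction against $m$ forces at least one factor in $\mathcal T$, which is exactly what makes the $|\cdot|_{\T\U}$ seminorms appear.

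First I would handle the trace $m^{\rho\sigma}p_{\rho\sigma}=-p_{L\lbar}+\delta^{AB}p_{AB}$: each component on the right has at least one $\T$-vector, giving $|m^{\rho\sigma}p_{\rho\sigma}|\lesssim |p|_\TU$. From this the first piece of $P^1$, namely $\tfrac14(\tr_m p)(\tr_m k)$, is immediately bounded by $|p|_\TU|k|_\TU$, and $P^2$ and $P^3$ give at once the bounds $|p|_\TU|\Psi|_\K$ and $|\Psi|_\K|\Phi|_\K$ respectively. The only non-trivial piece is the double contraction $-\tfrac12 m^{\rho\sigma}m^{\lambda\tau}p_{\rho\lambda}k_{\sigma\tau}$ in $P^1$: expanding both copies of $m$ in the null frame produces nine terms, and the only ones where \emph{neither} index on $p$ or $k$ is forced into $\mathcal T$ are
\begin{equation*}
\tfrac14\,p_{LL}k_{\lbar\lbar},\qquad \tfrac14\,p_{\lbar\lbar}k_{LL},
\end{equation*}
which contribute exactly the $|p|_\LL|k|_\UU+|p|_\UU|k|_\LL$ terms on the right-hand side. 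Every other term in the expansion has at least one $\mathcal T$-index on each of $p$ and $k$ and is therefore absorbed into $|p|_\TU|k|_\TU$.

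For the null form estimate I would treat $Q_0$ and $\widetilde Q_\mn$ separately. For $Q_0(\partial\Pi,\partial\Theta)=m^{\rho\sigma}\partial_\rho\Pi\,\partial_\sigma\Theta$ the same decomposition gives
\begin{equation*}
Q_0=-\tfrac12\bigl(\partial_L\Pi\,\partial_{\lbar}\Theta+\partial_{\lbar}\Pi\,\partial_L\Theta\bigr)+\delta^{AB}\partial_{S_A}\Pi\,\partial_{S_B}\Theta,
\end{equation*}
and since $L,S_A\in\mathcal T$ every term has a tangential derivative on either $\Pi$ or $\Theta$, yielding the desired bound. For $\widetilde Q_\mn(\partial\Pi,\partial\Theta)U^\mu V^\nu=\partial_U\Pi\partial_V\Theta-\partial_V\Pi\partial_U\Theta$ with $U,V\in\U$, the antisymmetry kills the only dangerous case $U=V=\lbar$, so at least one of $U,V$ lies in $\mathcal T$ and the same bound follows.

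There is no genuine obstacle here; the content of the lemma is entirely encoded in the structure of $m^{\mu\nu}$ in the null frame. The part that requires a little care is the second term of $P^1$, because the double $m$-contraction is the unique place where purely $\U\U$ components of the tensors $p,k$ (specifically the $LL$ and $\lbar\lbar$ pairings) can survive; isolating these two terms and checking that everything else carries a $\mathcal T$-vector on each of $p$ and $k$ is the main bookkeeping step.
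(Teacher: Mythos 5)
Your proposal is correct and takes the same approach as the paper, namely expanding the $m$-contractions in the null frame and reading off which components of $p,k,\Psi,\Phi$ survive; the paper's own proof simply states "Expanding with respect to the null frame we find\ldots" and then records the same bounds, so you have merely spelled out the bookkeeping that the paper leaves implicit (identifying $p_{LL}k_{\lbar\lbar}$ and $p_{\lbar\lbar}k_{LL}$ as the two terms that escape $|p|_\TU|k|_\TU$, using the trace to bound the $P^2$ and $P^3$ pieces, and using the absence of an $\lbar\lbar$-component of $m^{\mu\nu}$ together with the antisymmetry of $\widetilde Q$ for the null forms).
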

\begin{proof}
Expanding with respect to the null frame we find
\begin{align*}
\vert P^1 (p, k)\vert & \lesssim  \vert p \vert_\TU \vert k \vert_\TU + \vert p \vert_\LL \vert k \vert_\UU  + \vert p \vert_\UU \vert k \vert_\LL \,, \\ 
\vert P^2 (p, \Psi) \vert  &\lesssim \vert p \vert_\TU \vert \Psi \vert_\K \,, \\
\vert P^3 (\Psi, \Phi) \vert  &\lesssim \vert \Psi \vert_\K \vert \Phi \vert_\K \,.
\end{align*}
Note that $P^2(\p h, \p \psi)_\mn$ involves $m^{\rho \sigma} h_{\rho \sigma} = \tr_m h$. By considering possible indices, we see another possible term is of the form 
$$ N^k m^{\rho \sigma} \left( \p_\rho h_{\mu \sigma} \p_\nu \psi_k + \p_\rho h_{\nu \sigma} \p_\mu \psi_k \right) \,.$$
However this term does not contain $\tr_m h$ and so in $\vert P^2 (p, \Psi) \vert$ we would pick up additional terms of the form $\vert p \vert_\UU \vert \Psi \vert_\K$ which cannot be controlled. 
The estimate for $\vert Q \vert$ comes from the definition of $Q_\mn$ and $Q_k$ in \eqref{eq:general-pde} and estimate \eqref{eq:intro-null-estimate} for null forms.  
\end{proof}

\begin{remark}
Our proof follows very closely the proof of \cite{LR:04}. However to be self-contained, this paper repeats several results directly from \cite{LR:04}. Where a result here is given as `modified' from a result in \cite{LR:04}, it generally follows from their proofs with at most a very simple calculation difference.
\end{remark}

\section{Wave Gauge and Estimates of the Inhomogeneity} \label{section:wave-gauge} 
Recall the definition of the wave-coordinate condition
\begin{equation}
\p_\rho \left( g^{\rho \nu} \sqrt{\vert \det g \vert} \right) = 0 \,. \label{eq:wave-gauge}
\end{equation}
The following Lemmas exploit to great effect this choice of gauge and properties of the null frame. In particular Lemma \ref{lemma-8.1} implies that we can exchange a full derivative $\p$ on certain `good' components of the metric, with a good derivative $\pgood$ on all components of the metric, plus some higher order terms. 
\begin{lemma}[Lemma 8.1 from \cite{LR:04}]  \label{lemma-8.1}
Assume $g$ satisfies the wave-coordinate condition \eqref{eq:wave-gauge} and that the inverse perturbation satisifies $\vert H \vert_\UU \leq \frac{1}{4}$,  then
\begin{equation} \vert \p H \vert_\LT \leq \vert \pgood H \vert_\UU+ \vert H \vert_\UU \vert \p H \vert_\UU \,. \label{eq:wave-gauge-estimate} \end{equation} 
\end{lemma}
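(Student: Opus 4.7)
The plan is to derive from the wave-coordinate condition \eqref{eq:wave-gauge} an algebraic identity that expresses the `bad-direction' derivative appearing in $|\p H|_\LT$ in terms of tangential derivatives of $H$, modulo a quadratic error of type $|H|_\UU|\p H|_\UU$. The trace-reversed form of the gauge supplies precisely this identity.

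First I would rewrite \eqref{eq:wave-gauge} as a divergence identity for the inverse perturbation $H^\mn$. Using $\p_\rho\sqrt{|\det g|}=\tfrac{1}{2}\sqrt{|\det g|}\,g^{\alpha\beta}\p_\rho g_{\alpha\beta}$ and $g^{\alpha\beta}\p_\rho g_{\alpha\beta}=-g_{\alpha\beta}\p_\rho H^{\alpha\beta}$ (the second from differentiating $g^{\alpha\beta}g_{\alpha\beta}=4$), the wave-coordinate condition becomes $\p_\rho H^{\rho\nu}=\tfrac{1}{2}\,g^{\rho\nu}g_{\alpha\beta}\p_\rho H^{\alpha\beta}$. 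The smallness $|H|_\UU\leq\tfrac{1}{4}$ legitimises the expansion $g_{\alpha\beta}=m_{\alpha\beta}-H_{\alpha\beta}+O(H^2)$, and keeping only the leading contribution explicitly yields
$$\p_\rho H^{\rho\nu}=\tfrac{1}{2}\, m^{\rho\nu}\p_\rho(\tr_m H)+E^\nu,\qquad |E^\nu|\lesssim |H|_\UU|\p H|_\UU.$$

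Next I would decompose $\p_\rho H^{\rho\nu}=m^{\rho\sigma}\p_\rho H_\sigma{}^\nu$ via the null-frame inverse-metric relation $m^{\rho\sigma}=-\tfrac{1}{2}(L^\rho\lbar^\sigma+\lbar^\rho L^\sigma)+\delta^{AB}S_A^\rho S_B^\sigma$, giving
$$\p_\rho H^{\rho\nu}=-\tfrac{1}{2}\,\lbar^\rho L^\sigma\p_\rho H_\sigma{}^\nu-\tfrac{1}{2}\, L^\rho\lbar^\sigma\p_\rho H_\sigma{}^\nu+\delta^{AB}S_A^\rho S_B^\sigma\p_\rho H_\sigma{}^\nu.$$
Only the first term involves a non-tangential derivative, namely $\lbar^\rho\p_\rho$. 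Solving for it and contracting with $T_\nu$ for any $T\in\mathcal{T}$ produces
$$\lbar^\rho L^\sigma T^\tau\p_\rho H_{\sigma\tau}=-2T_\nu\p_\rho H^{\rho\nu}-L^\rho\lbar^\sigma T^\tau\p_\rho H_{\sigma\tau}+2\delta^{AB}S_A^\rho S_B^\sigma T^\tau\p_\rho H_{\sigma\tau}.$$
Plugging in the identity from the first step, the leading right-hand term becomes $-T^\rho\p_\rho(\tr_m H)-2T_\nu E^\nu$. Since $T\in\mathcal{T}$ is tangent to the outgoing light cone, $T^\rho\p_\rho=\pgood$; the remaining two right-hand terms already carry $L$- or $S_A$-derivatives of $H$, which are tangential; and $|T_\nu E^\nu|\lesssim |H|_\UU|\p H|_\UU$. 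This controls the $\lbar LT$-component of $\p H$ by $|\pgood H|_\UU+|H|_\UU|\p H|_\UU$.

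To conclude, $|\p H|_\LT$ sums $|(\p_\rho H_\mn)U^\rho L^\mu T^\nu|$ over $U\in\U$; the three sub-cases $U\in\mathcal{T}$ give automatically-tangential derivatives bounded by $|\pgood H|_\UU$, while the only remaining sub-case $U=\lbar$ is exactly the one handled above. Collecting these yields \eqref{eq:wave-gauge-estimate}. The main obstacle is the bookkeeping in the null-frame decomposition step: one must verify that after solving the divergence identity for the sole bad-direction contraction, \emph{every} remaining derivative of $H$ on the right is tangential. This works because of two algebraic features of the harmonic gauge—the $m^{\rho\nu}$ coefficient of the trace-reversal term collapses to $T^\rho\p_\rho=\pgood$ upon contracting with $T\in\mathcal{T}$, and the quadratic error from expanding $g_{\alpha\beta}$ in powers of $H_{\alpha\beta}$ is controlled in exactly the $\UU$-seminorm available from the standing bootstrap on $H$.
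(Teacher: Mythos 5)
Your proposal is correct and follows essentially the same route as the proof of Lemma 8.1 in Lindblad--Rodnianski: rewrite the wave-coordinate condition as a divergence identity $\p_\rho H^{\rho\nu}=\tfrac{1}{2}m^{\rho\nu}\p_\rho(\tr_m H)+E^\nu$ with quadratic error, decompose the contracted index of $\p_\rho H^{\rho\nu}$ in the null frame via $m^{\rho\sigma}=-\tfrac12(L^\rho\lbar^\sigma+\lbar^\rho L^\sigma)+\delta^{AB}S_A^\rho S_B^\sigma$, and solve for the single bad component $\lbar^\rho L^\sigma T^\tau\p_\rho H_{\sigma\tau}$. The only cosmetic discrepancy is that your derivation produces implicit constants ($\lesssim$) whereas the statement as printed here has a bare $\leq$; this is a transcription issue in the quoted statement, not a gap in your argument.
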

\begin{remark}
In the $n=0$ mode reduction of Kaluza-Klein on a $\tee$, the above estimate \eqref{eq:wave-gauge-estimate} holds with $\mathcal{T}$ replaced with $ \mathcal{T} \cup \{ K_1, \ldots, K_m \}$ and $\U $ replaced with $ U \cup \{ K_1, \ldots, K_m \}$ where $\{ K_1, \ldots, K_m \}$ is the collection of vector fields spanning the $\tee$.
\end{remark}
One can also commute through vector fields $Z \in \mathcal{Z}$ in \eqref{eq:wave-gauge-estimate} to estimate `good' components $ \vert \p Z^I H \vert_\TU$. 
\begin{proposition}[Proposition 8.2 from \cite{LR:04}] \label{prop-8.2}
Suppose that $g$ satisfies the wave-coordinate condition \eqref{eq:wave-gauge} and that for some $\I$ we have $\vert Z^J H \vert \leq C$ for all $\vert J \vert \leq \vert I \vert / 2$ and for all $Z \in \mathcal{Z}$. Then
\begin{align*}
\vert \p Z^I H \vert_\LT & \lesssim \sum_{\J\leq \I} \vert \pgood Z^J H \vert_\UU  + \sum_{\J \leq \I-1 } \vert \p Z^J H \vert _\UU + \sum_{\vert J_1 \vert + \vert J_2 \vert \leq \I} \vert Z^{J_1} H \vert_\UU  \vert \p Z^{J_2} H \vert_\UU \,,\\
\vert \p Z^I H \vert_\LL & \lesssim \sum_{\J \leq \I} \vert \pgood Z^J H \vert_\UU  + \sum_{\J \leq \I-2 } \vert \p Z^J H \vert _\UU + \sum_{\vert J_1 \vert + \vert J_2 \vert \leq \I} \vert Z^{J_1} H \vert_\UU  \vert \p Z^{J_2} H \vert_\UU \,.
\end{align*}
\end{proposition}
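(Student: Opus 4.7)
I would argue by induction on $|I|$, with Lemma \ref{lemma-8.1} serving as the base case. The inductive step involves three logically separate tasks: commuting $Z^I$ past $\p$, commuting $Z^I$ past the null frame contractions that define the seminorms $|\cdot|_\LT$ and $|\cdot|_\LL$, and distributing $Z^I$ across the quadratic product $H \cdot \p H$ via Leibniz.

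First I would sharpen Lemma \ref{lemma-8.1} into the underlying pointwise identity that proves it: expanding the wave coordinate condition \eqref{eq:wave-gauge} in the null frame yields a schematic equality of the form $\p H|_\LT \sim \pgood H + H \cdot \p H$, and an analogous identity for $\p H|_\LL$ in which a second contraction with $L$ annihilates one further leading linear term. This asymmetry is exactly what produces the gap $|J| \leq |I|-2$ in the $\LL$ estimate versus $|J| \leq |I|-1$ in the $\LT$ estimate. Applying $Z^I$ to the identity, I would invoke the standard commutator relations $[Z, \p_\rho] = c_\rho^\sigma \p_\sigma$ with constant coefficients for every $Z \in \mathcal{Z}$ (translations, rotations, boosts and scaling). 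Iterating, $Z^I \p H = \p Z^I H + \sum_{|J| \leq |I|-1} c_J \p Z^J H$, which gives the sum $\sum_{|J| \leq |I|-1} |\p Z^J H|_\UU$ on the right-hand side.

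Next, the vector fields in $\mathcal{Z}$ map the null frame $\mathcal{U}$ and its tangential subset $\mathcal{T}$ into their own linear span with bounded coefficients: rotations permute the $S_A$, the scaling $S$ preserves $L$ and $\lbar$ up to a constant factor, and so on. In particular $Z$ commutes with $\pgood$ modulo strictly lower-order tangential derivatives. This allows me to replace $L_\mu T_\nu Z^I \p H^{\mu\nu}$ by $|\p Z^I H|_\LT$ up to acceptable corrections and produces the term $\sum_{|J| \leq |I|} |\pgood Z^J H|_\UU$ from the linear good-derivative piece. For the quadratic source, the Leibniz rule gives $Z^I(H \cdot \p H) = \sum_{|J_1|+|J_2| \leq |I|} Z^{J_1} H \cdot Z^{J_2} \p H$, which after frame contraction is precisely $\sum_{|J_1|+|J_2| \leq |I|} |Z^{J_1} H|_\UU |\p Z^{J_2} H|_\UU$. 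The hypothesis $|Z^J H| \leq C$ for $|J| \leq |I|/2$ ensures that at least one factor is pointwise bounded in every summand, which is what makes the bound usable downstream.

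\textbf{Main obstacle.} The delicate point is the bookkeeping of the frame commutators, specifically verifying that $[Z, \pgood]$ remains a good (tangential to the outgoing light cone) derivative up to lower order. This has to be checked explicitly because $\pgood_i = \p_i - \omega_i \omega^j \p_j$ depends on the angular coefficients $\omega_i = x^i/|x|$, which are not annihilated by general $Z \in \mathcal{Z}$; one must verify that the generated corrections are tangential plus strictly lower-order terms, which is what ultimately lets the induction close without producing an inadmissible $|\p Z^J H|_\UU$ factor with $|J| \geq |I|$.
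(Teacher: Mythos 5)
The paper cites Proposition 8.2 directly from Lindblad--Rodnianski without reproducing its proof, so I am assessing your argument against the structure of that original proof.

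Your plan (induction, commuting $Z^I$ through a wave-coordinate identity, Leibniz on the quadratic, and control of the frame commutators) is correct in outline, and the observation that $[Z,\p_\rho]$ has constant coefficients and that $[Z,\pgood]$ needs a separate check is sound. The genuine gap is your proposed mechanism for the stronger $\LL$ estimate. You attribute the improvement from $\J\leq\I-1$ to $\J\leq\I-2$ to a sharper \emph{pointwise} identity in which ``a second contraction with $L$ annihilates one further leading linear term.'' This cannot be right: at $\I=0$ both the $\LT$ and $\LL$ sums are empty, and Lemma~\ref{lemma-8.1} gives precisely the same bound $\vert\p H\vert_\LT\lesssim\vert\pgood H\vert_\UU+\vert H\vert_\UU\vert\p H\vert_\UU$ for the $\LL$ component, which is a subset of $\LT$. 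There is no extra annihilation at the base level, so the improvement must arise entirely from the commutation structure, which your sketch does not explain. In fact, if one simply commutes $Z^I$ through and Leibnizes as you describe, one obtains the same $\sum_{\J\leq\I-1}\vert\p Z^J H\vert_\UU$ term in both cases, i.e.\ only the weaker $\LT$ estimate applied also to the $\LL$ component.

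The missing idea is an iteration, not an annihilation. After commuting, the ``bad'' terms generated in the $\LL$ case turn out to be contracted so that they are themselves $\LT$ components of $\p Z^J H$ for $\J\leq\I-1$ (roughly, when a $Z$ falls on one of the two $L$'s in $L^\mu L^\nu$, the other remains an $L$, so at most one slot leaves $\mathcal{T}$). One then re-applies the $\LT$ version of the proposition at level $\I-1$ to these terms, converting $\sum_{\J\leq\I-1}\vert\p Z^J H\vert_\LT$ into $\sum_{\J\leq\I}\vert\pgood Z^J H\vert_\UU+\sum_{\K\leq\I-2}\vert\p Z^K H\vert_\UU+\text{quadratics}$, which is where the $\I-2$ comes from. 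Without tracking that the $\LL$ commutator errors retain one $\mathcal{T}$ slot, and without this second application of the $\LT$ bound, your induction does not deliver the claimed improvement in the $\LL$ case.
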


Furthermore to keep track of the derivatives as well as the components, introduce the following notation for arbitrary 2-tensors $\pi, \theta$ and functions $\Psi, \Phi$
$$ P(\p \pi, \p \theta, \p \Psi, \p \Phi)_\mn := P^1(\p \pi, \p \theta)_\mn + P^2 (\p \pi, \p \Psi)_\mn + P^3(\p \Psi, \p \Phi) _\mn\,,$$
where for example
$$ P^1(\p \pi, \p \theta)_\mn := m^{\rho \sigma} m^{\lambda \tau} \left( \frac{1}{4}  \p_\mu \pi_{\rho \sigma} \p_\nu \theta_{\lambda \tau} - \frac{1}{2}  \p_\mu \pi_{\rho \lambda} \p_\nu \theta_{\sigma \tau} \right) \,,$$
and similarly for $P^2$ and $P^3$ as defined in \eqref{eq:full-P-general-PDE}. 
Using Lemma \ref{lemma-8.1} and Proposition \ref{prop-8.2} we now aim to estimate the non-linearities in \eqref{eq:general-pde-F}. 
\begin{lemma}[Modified Lemma 9.6 from \cite{LR:04}] \label{lemma-9.6}
The quadratic form $P_\mn$ defined in \eqref{eq:full-P-general-PDE} satisfies the following estimates
\begin{align*} \vert P (\p \pi, \p \theta, \p \Psi, \p \Phi) \vert_\TU & \lesssim \vert \pgood \pi \vert_\UU \vert \p \theta \vert_\UU + \vert \p \pi \vert_\UU \vert \pgood \theta \vert_\UU + \vert \pgood \pi \vert_\TU \vert \p \Psi \vert_\K \\
& \quad + \vert \p \pi \vert_\TU \vert \pgood \Psi \vert_\K + \vert \pgood \Psi \vert_\K \vert \p \Phi \vert_\K + \vert \p \Psi \vert_\K \vert \pgood \Phi \vert_\K \,, \\
\vert P (\p \pi, \p \theta, \p \Psi, \p \Phi) \vert_\UU &\lesssim \vert \p \pi \vert_\TU \vert \p \theta \vert_\TU + \vert \p \pi \vert_\LL \vert \p \theta \vert_\UU + \vert \p \pi \vert_\UU \vert \p \theta \vert_\LL \\
& \quad + \vert \p \pi \vert_\TU \vert \p \Psi \vert_\K + \vert \p \Psi \vert_\K \vert \p \Phi \vert_\K \,.
\end{align*}
\end{lemma}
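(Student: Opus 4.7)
The plan is to split $P = P^1 + P^2 + P^3$ and estimate each piece separately in the null frame, relying on the original argument from \cite{LR:04} for the $P^1$ contributions and adding an analogous analysis for the new $P^2, P^3$ pieces. The central observation is that $m^{\rho\sigma}$ has as its only nonzero null-frame components the pairings $(L,\lbar), (\lbar, L), (S_A, S_A)$, and in every such pairing at least one index lies in $\mathcal{T} = \{L, S_1, S_2\}$; consequently for any symmetric 2-tensor $\alpha$ one has $|m^{\rho\sigma}\alpha_{\rho\sigma}| \lesssim |\alpha|_\TU$, since the nonzero frame components $\alpha_{L\lbar}, \alpha_{\lbar L}, \alpha_{S_A S_A}$ all belong to $\TU$.

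For the $|P|_\TU$ bound I would contract with $T^\mu U^\nu$ for $T\in\mathcal{T}, U\in\U$. Because $T$ is tangent to the outgoing light cones, $T^\mu\p_\mu$ is a good derivative, while $U^\nu\p_\nu$ is a full derivative $\p$. Distributing this contraction across the two factors in each piece, and using the symmetry of $P^2, P^3$ in $\mu \leftrightarrow \nu$ together with both orderings in $P^1$, produces the six terms on the right-hand side: two from $P^1$ (where each of $\pi,\theta$ carries free tensor indices so that the separated traces give only $\UU$-restrictions), two from $P^2$ (where the trace $m^{\rho\sigma}\pi_{\rho\sigma}$ forces a $\TU$-restriction on $\pi$ by the observation in the first paragraph), and two from $P^3$ (immediate since $\Psi, \Phi$ carry no tensor indices).

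For the $|P|_\UU$ bound I would contract with $U_1^\mu U_2^\nu$ for $U_1, U_2 \in \U$. Now neither directional derivative is automatically good, so the estimate is quadratic in full derivatives $\p$. For $P^1$, the first summand $\tfrac{1}{4} m^{\rho\sigma} m^{\lambda\tau} \p_\mu \pi_{\rho\sigma} \p_\nu \theta_{\lambda\tau}$ has decoupled traces and contributes $|\p\pi|_\TU |\p\theta|_\TU$; the second summand $-\tfrac{1}{2} m^{\rho\sigma} m^{\lambda\tau} \p_\mu \pi_{\rho\lambda} \p_\nu \theta_{\sigma\tau}$ is treated by enumerating the nine nonzero pairings of $(\rho,\sigma)$ and $(\lambda,\tau)$, which yields in each case one of $|\p\pi|_\TU |\p\theta|_\TU$, $|\p\pi|_\LL |\p\theta|_\UU$, or $|\p\pi|_\UU |\p\theta|_\LL$. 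For $P^2$ the trace yields $|\p\pi|_\TU |\p\Psi|_\K$, and $P^3$ is immediate.

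The main obstacle is the case analysis for the second summand of $P^1$ in the $\UU$ bound: one must verify that no null-frame pairing of $(\rho,\sigma)$ and $(\lambda,\tau)$ produces simultaneously a $\pi_{\lbar \lbar}$ and a $\theta_{\lbar \lbar}$ factor, so that at least one of $\pi, \theta$ always contributes either an $\LL$-restriction or, failing that, both factors sit in $\TU$. This is precisely the algebraic property of the Einstein quadratic form that underlies the weak null condition, and it holds because $m^{\rho\sigma}$ always pairs $\lbar$ with $L$; the verification proceeds exactly as in \cite[Lemma 9.6]{LR:04}. The $P^2$ and $P^3$ contributions are then straightforward additions requiring only the trace observation from the first paragraph.
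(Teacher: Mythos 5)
Your argument is correct and takes essentially the same route as the paper: the $\TU$ estimate comes from contracting $P_\mn$ with $T^\mu U^\nu$ so that the good derivative lands on the first factor, and the $\UU$ estimate rests on the null-frame case analysis of the crossed contraction $m^{\rho\sigma}m^{\lambda\tau}\pi_{\rho\lambda}\theta_{\sigma\tau}$, showing that a $\pi_{\lbar\lbar}$ factor forces a $\theta_{LL}$ partner and conversely. The paper simply packages that case analysis as the already-established Lemma~\ref{lemma-4.2} and then substitutes $U^\mu\p_\mu\pi$, $S^\nu\p_\nu\theta$, $U^\mu\p_\mu\Psi$, $S^\nu\p_\nu\Phi$ for $p$, $k$, $\Psi$, $\Phi$, whereas you carry out the enumeration of null-frame pairings directly.
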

\begin{proof}
The first estimate follows by contracting \eqref{eq:full-P-general-PDE} with $T^\mu S^\nu$ for $T \in \mathcal{T}, S \in \mathcal{U}$.
For the second estimate, replace $p\,, \Psi \,, k \,, \Phi$ from Lemma \ref{lemma-4.2} with $U^\mu \p_\mu p \,, U^\mu \p_\mu \Psi \,, S^\nu \p_\nu k \,, S^\nu \p_\nu \Phi $ respectively. For $S, U \in \mathcal{U}$ one obtains
\begin{align*}
\vert S^\mu U^\nu  P(\p \pi, \p \theta, \p \Psi, \p \Phi)_\mn \vert & \lesssim \vert U^\mu \p_\mu \pi \vert_\TU \vert S^\nu \p_\nu \theta \vert_\TU + \vert U^\mu \p_\mu \pi \vert_\LL \vert S^\nu \p_\nu \theta \vert_\UU + \vert U^\mu \p_\mu \pi \vert_\UU \vert S^\nu \p_\nu \theta \vert_\LL \\
& + \vert U^\mu \p_\mu \pi \vert_\TU \vert S^\nu \p_\nu \Psi \vert_\K + \vert U^\mu \p_\mu \Psi \vert_\K \vert S^\nu \p_\nu \Phi \vert_\K \,.
\end{align*}
Lastly note that $\vert \pgood \phi \vert = \sum_{i=0}^3 \vert \pgood_i \phi \vert$ is equivalent to $\sum_{T \in \mathcal{T}} \vert T^\mu \p_\mu \phi \vert$, similarly for $\vert \p \phi \vert$ and $\sum_{U \in \mathcal{U}} \vert U^\mu \p_\mu \phi \vert$. 
\end{proof}
It turns out the wave-coordinate condition will imply a hierarchy of components, with certain `good' variables $W_G$ having better decay rates than the full set $W$. The following notation is meant to simplify the notation and distinguish between components with different decay rates. 
\begin{definition}
Let 
\begin{equation} \begin{split}
W &= \{ U^\mu V^\nu h_\mn , \psi_k : U, V \in \mathcal{U} \,, k \in \K \} \,, \\
\quad W_G & := \{ U^\mu T^\nu h_\mn, \psi_k : U \in \mathcal{U} \,, T \in \mathcal{T} \,, k \in \K \}  \,.
\end{split} \label{eq:def-W-Y} \end{equation}
In particular we define 
$$ \vert W \vert := \vert h \vert_\UU + \vert \psi \vert_\K \,, \quad \vert W \vert_G := \vert h \vert_\TU + \vert \psi \vert_\K \,. $$
For derivatives we define 
$$ \vert \p W \vert := \vert \p h \vert_\UU + \vert \p \psi \vert_\K \,, \quad \vert \pgood W \vert := \vert \pgood h \vert_\UU + \vert \pgood \psi \vert_\K \,.$$
and analogous definitions for $\vert \p W \vert_G$ and $\vert \pgood W \vert_G$. 
\end{definition}
The norms without subscripts, $\vert \cdot \vert$ and \eqref{eq:null-norm}, indicate a sum over all the possible components of the $\UU$ and $\K$ terms. The subscript $\vert \cdot \vert_G$ indicates only the `good' components are being considered. This means $|W_G| = |W|_G$, however we generally use the latter throughout. 

Eventually the inhomogeneity to be studied will come from commuting $Z^I$ through the PDE \eqref{eq:general-pde}. Thus the following results estimate the non-linearities $F_\mn, F_k$ as well as $Z^I F_\mn, Z^I F_k$. 
\begin{corollary}[Modified Corollary 9.7 from \cite{LR:04}] \label{corol-9.7}
Assume $g$ satisfies the wave-coordinate condition \eqref{eq:wave-gauge} and $W, W_G$ are defined as in \eqref{eq:def-W-Y}, then
\begin{equation} \begin{split} 
\vert P (\p h, \p h, \p \psi, \p \psi ) \vert_\TU & \lesssim \vert \pgood W \vert \vert \p W \vert \,, \\
\vert P (\p h, \p h, \p \psi, \p \psi) \vert_\UU &\lesssim \vert \p W \vert_G^2 + \vert \pgood W \vert \vert \p W \vert + \vert W \vert \vert \p W \vert^2 \,. \label{eq:cor-9.7-P}
\end{split} \end{equation}
Furthermore, assuming that $\vert Z^J W \vert \leq C$ for all $\J \leq \I$ and for all $Z \in \mathcal{Z}$, we have
\begin{align*}
& \vert Z^I P (\p h, \p h) \vert_\UU
\lesssim  \sum_{\J + |K| \leq \I} \vert \p Z^J W \vert_G \vert \p Z^K W \vert_G  + \vert \pgood Z^J W \vert \vert \p Z^K W \vert \\
&  + \sum_{\J + |K|  \leq \I} \vert \p Z^K h \vert_\UU \left[ \sum_{\vert J'\vert \leq \vert J\vert -1} \vert \p Z^{J'} h \vert_\LT + \sum_{\vert J'' \vert \leq \J -2 } \vert \p Z^{J''} h \vert + \sum_{\vert J_1 \vert + \vert J_2 \vert \leq \J} \vert Z^{J_2} h \vert \vert \p Z^{J_1} h \vert \right] \,.
\end{align*}
It also holds that
\begin{align*}
\vert Q( \p W, \p W) \vert \lesssim \vert \pgood W \vert \vert \p W \vert  \,.
\end{align*}
\end{corollary}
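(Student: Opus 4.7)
The overall plan is to reduce the three estimates to the ``abstract'' bounds of Lemma~\ref{lemma-9.6}, applied with the concrete choice $\pi = \theta = h$ and $\Psi = \Phi = \psi$, and then to trade any residual $\mathcal{L}\mathcal{L}$-components for good derivatives plus higher-order corrections using the wave-coordinate identities of Lemma~\ref{lemma-8.1} and Proposition~\ref{prop-8.2}. The last estimate on $Q$ is separate and much simpler: it comes straight from the definition \eqref{eq:null-norm} and the pointwise null-form bound \eqref{eq:intro-null-estimate}.

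For the two un-commuted $P$-estimates, I would start by invoking Lemma~\ref{lemma-9.6} with the above substitution. The bound on $|P|_{\mathcal{T}\mathcal{U}}$ is immediate: every term on the right-hand side is of the schematic form $|\pgood h||\p h|$, $|\pgood h||\p \psi|$, $|\p h||\pgood \psi|$ or $|\pgood \psi||\p \psi|$, each of which is controlled by $|\pgood W||\p W|$. For $|P|_{\mathcal{U}\mathcal{U}}$ the term $|\p h|_{\mathcal{T}\mathcal{U}}^2 + |\p h|_{\mathcal{T}\mathcal{U}}|\p\psi|_\K + |\p\psi|_\K^2$ is bounded by $|\p W|_G^2$. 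The only remaining (bad) contribution is $|\p h|_{\mathcal{L}\mathcal{L}}|\p h|_{\mathcal{U}\mathcal{U}}$. Here one uses Lemma~\ref{lemma-8.1}, which is stated for the inverse perturbation $H^\mn = -h^\mn + \mathcal{O}(h^2)$, together with the algebraic relation $|\p h|_{\mathcal{V}\mathcal{W}} \lesssim |\p H|_{\mathcal{V}\mathcal{W}} + |h||\p h|$ (and vice versa) to deduce
\[
|\p h|_{\mathcal{L}\mathcal{L}} \;\lesssim\; |\p H|_{\mathcal{L}\mathcal{T}} + |h||\p h| \;\lesssim\; |\pgood H|_{\mathcal{U}\mathcal{U}} + |H|_{\mathcal{U}\mathcal{U}}|\p H|_{\mathcal{U}\mathcal{U}} + |h||\p h| \;\lesssim\; |\pgood W| + |W||\p W|.
\]
Multiplying by $|\p h|_{\mathcal{U}\mathcal{U}} \leq |\p W|$ yields the claimed $|\pgood W||\p W| + |W||\p W|^2$ contribution.

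The commuted estimate is the main technical step. I would apply Leibniz to $Z^I P(\p h, \p h)$, noting that each $Z \in \mathcal{Z}$ acts on the Minkowski-metric coefficients of $P$ by a scalar (in the case of $S$) or trivially (for translations and rotations/boosts), so that after absorbing these constants one obtains
\[
|Z^I P(\p h, \p h)|_{\mathcal{U}\mathcal{U}} \;\lesssim\; \sum_{|J|+|K|\leq |I|}|P(\p Z^J h, \p Z^K h)|_{\mathcal{U}\mathcal{U}}.
\]
Apply the $|\cdot|_{\mathcal{U}\mathcal{U}}$ bound of Lemma~\ref{lemma-9.6} to each summand, producing $|\p Z^J h|_{\mathcal{T}\mathcal{U}}|\p Z^K h|_{\mathcal{T}\mathcal{U}}$ (good$\times$good) and the bad cross terms $|\p Z^J h|_{\mathcal{L}\mathcal{L}}|\p Z^K h|_{\mathcal{U}\mathcal{U}}$. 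The good$\times$good terms are absorbed into $|\p Z^J W|_G |\p Z^K W|_G$. For the bad terms, apply Proposition~\ref{prop-8.2} to $|\p Z^J H|_{\mathcal{L}\mathcal{L}}$ (together with the same $H \leftrightarrow h$ translation as above), yielding the sums $\sum_{|J'|\leq|J|}|\pgood Z^{J'} h|_{\mathcal{U}\mathcal{U}}$ (reshuffled into the $|\pgood Z^J W||\p Z^K W|$ piece of the stated bound), the strictly-lower-order pieces $\sum_{|J'|\leq|J|-1}|\p Z^{J'} h|_{\mathcal{L}\mathcal{T}}$ and $\sum_{|J''|\leq|J|-2}|\p Z^{J''} h|$ appearing in the corollary, and the quadratic remainder $\sum_{|J_1|+|J_2|\leq|J|}|Z^{J_1}h||\p Z^{J_2}h|$ originating both from Proposition~\ref{prop-8.2} and from the $H = -h + \mathcal{O}(h^2)$ expansion.

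The $Q$-estimate follows at once by expanding $\p W$ in the null frame, invoking the pointwise estimate \eqref{eq:intro-null-estimate} for each null form $Q_0$ and $\widetilde{Q}_\mn$ appearing in $Q_\mn$ and $Q_k$, and summing over $\UU$ and $\K$ indices as in \eqref{eq:null-norm}. The main obstacle in the whole proof is the bookkeeping in the commuted step: one has to verify that the index budgets $|J'|\leq |J|-1$ and $|J''|\leq |J|-2$ coming from Proposition~\ref{prop-8.2} mesh with the Leibniz decomposition $|J|+|K|\leq |I|$ without losing derivatives, and that the error terms generated by converting $H$-estimates into $h$-estimates all fit into the quadratic tail $|Z^{J_1}h||\p Z^{J_2}h|$ already allowed by the statement.
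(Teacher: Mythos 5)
Your proof sketch is correct and follows the same route as the paper: Lemma~\ref{lemma-9.6} with $\pi=\theta=h$, $\Psi=\Phi=\psi$ gives the raw null-frame bounds; Lemma~\ref{lemma-8.1} trades the bad $\mathcal{L}\mathcal{L}$ factor for $\pgood$-derivatives and the $\vert W\vert\vert\p W\vert^2$ remainder; the commuted estimate is the same Leibniz decomposition of $Z^I P$ into terms $P(\p Z^{J_1}h,\p Z^{J_2}h,\p Z^{J_3}\psi,\p Z^{J_4}\psi)$ followed by Lemma~\ref{lemma-9.6} and Proposition~\ref{prop-8.2}; and the $Q$-bound is Lemma~\ref{lemma-4.2}. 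One minor imprecision: you attribute the bracketed term $\sum_{\vert J'\vert\leq\vert J\vert-1}\vert\p Z^{J'}h\vert_\LT$ directly to the $\mathcal{L}\mathcal{L}$-bound of Proposition~\ref{prop-8.2}, but that bound only produces $\sum_{\vert J\vert\leq\vert I\vert}\vert\pgood Z^J H\vert_\UU$, $\sum_{\vert J\vert\leq\vert I\vert-2}\vert\p Z^J H\vert_\UU$, and the quadratic tail; the $\mathcal{L}\mathcal{T}$-weighted piece with one fewer derivative is a deliberately coarser grouping (it is harmless since it only enlarges the right-hand side, and it is how the paper chooses to state the corollary for later use), so your sketch should be read as matching the paper's level of brevity rather than a fully closed bookkeeping argument.
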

\begin{proof}
The first identities \eqref{eq:cor-9.7-P} follow from Lemma \ref{lemma-9.6} and Lemma \ref{lemma-8.1}. For the estimate of $Z^I P$ we use Lemma \ref{lemma-9.6}, Proposition \ref{prop-8.2} and note that
$$ \vert Z^I P (\p h, \p h, \p \psi, \p \psi ) \vert_\UU \lesssim \sum_{| J_1| + \cdots+| J_4| \leq \I} \vert P ( \p Z^{J_1} h, \p Z^{J_2} h, \p Z^{J_3} \psi, \p Z^{J_4} \psi ) \vert \,. $$
Lastly Lemma \ref{lemma-4.2} implies the estimate for $\vert Q \vert$. 
\end{proof}

We can now put these results together in Proposition \ref{prop-9.8} to estimate the inhomogeneous terms $F_\mn$ and $F_k$. 
\begin{proposition}[Modified Proposition 9.8 from \cite{LR:04}] \label{prop-9.8}
Assume $g$ satisfies the wave-coordinate condition \eqref{eq:wave-gauge},  $W$ and $W_G$ are defined as in \eqref{eq:def-W-Y} and $F_\mn$ and $F_k$ be as defined in \eqref{eq:general-pde-F}. Then
\begin{equation} \begin{split}
\vert F \vert_\K + \vert F \vert_\TU  &\lesssim \vert \pgood W \vert \vert \p W \vert + \vert W \vert \vert \p W \vert^2 \,, \\
\vert F \vert_\UU & \lesssim \vert \p W \vert_G^2 + \vert \pgood W \vert \vert \p W \vert + \vert W \vert \vert \p W \vert^2  \,. \end{split} \label{eq:prop9.8-estimates}
\end{equation}
Furthermore if $\vert Z^J W \vert \leq C$ for all $\vert J \vert \leq \vert I \vert $ and for all $Z \in \mathcal{Z}$, then 
\begin{align*}
\vert Z^I F \vert_\UU & \lesssim  \sum_{\vert J \vert + \vert K \vert \leq \vert I \vert} \Big[ \vert \p Z^J W \vert_G \vert \p Z^K W \vert_G  + \vert \pgood Z^J W \vert \vert \p Z^K W \vert \Big] \\
&  + \sum_{\vert J \vert + \vert K \vert \leq \vert I \vert-2 } \vert \p Z^{J} h \vert_\UU \vert \p Z^K h \vert_\UU + \sum_{\vert J_1 \vert +\vert J_2 \vert + \vert J_3 \vert \leq \vert I \vert} \vert Z^{J_3} W \vert \vert \p Z^{J_2} W \vert \vert \p Z^{J_1} W \vert \,, \\
\vert Z^I F \vert_\K & \lesssim \sum_{\vert J \vert + \vert K \vert \leq \vert I \vert} \vert \pgood Z^J W \vert \vert \p Z^K W \vert + \sum_{\vert J_1 \vert +\vert J_2 \vert + \vert J_3 \vert \leq \vert I \vert} \vert Z^{J_3} W \vert \vert \p Z^{J_2} W \vert \vert \p Z^{J_1} W \vert \,.
\end{align*}
\end{proposition}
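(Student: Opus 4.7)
The plan is to decompose $F = P + Q + G$ into its constituent pieces as defined in \eqref{eq:general-pde-F}, and to treat each summand separately. The machinery of Corollary \ref{corol-9.7} (which packages Lemmas \ref{lemma-9.6}, \ref{lemma-8.1} and Proposition \ref{prop-8.2}) already gives direct null-frame control of $P$ and $Q$, so the only piece requiring fresh analysis is $G$.

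For the uncommuted estimates \eqref{eq:prop9.8-estimates} I would argue as follows. The $P$ contributions to $\vert F \vert_\TU$ and $\vert F \vert_\UU$ are bounded directly by the two lines of \eqref{eq:cor-9.7-P}. The $Q$ contribution is bounded by $\vert \pgood W \vert \vert \p W \vert$ in both the $\UU$ and $\K$ norms since $Q_\mn$ and $Q_k$ are linear combinations of the standard null forms \eqref{eq:intro-nullforms}, so Lemma \ref{lemma-4.2} (equivalently the last line of Corollary \ref{corol-9.7}) applies. For $G$, observe that $G_\mn(W)(\p W, \p W)$ is smooth in $W$, vanishes at $W=0$, and is exactly quadratic in $\p W$; by a Taylor expansion at $W=0$ one may write $G(W)(\p W, \p W) = W \cdot \widetilde G(W)(\p W, \p W)$ with $\widetilde G$ smooth. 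Together with the standing bound $\vert W \vert \lesssim 1$, this yields $\vert G \vert \lesssim \vert W \vert \vert \p W \vert^2$. Since $F_k$ contains no $P$ piece, the $\vert F \vert_\K$ estimate follows from the $Q$ and $G$ bounds alone.

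For the commuted estimates I would apply $Z^I$ termwise. The $Z^I P$ contribution is controlled by the second inequality of Corollary \ref{corol-9.7}; I then collapse the bracketed output into the canonical form stated here by noting that (i) the factor $\vert \p Z^{J'} h \vert_\LT$ with $|J'| \leq |J|-1$, paired with $\vert \p Z^K h \vert_\UU$, is absorbed into a term of type $\vert \pgood Z^{J_*} W \vert \vert \p Z^K W \vert$ with $|J_*| + |K| \leq |I|$ after one further use of Proposition \ref{prop-8.2}; (ii) the factor $\vert \p Z^{J''} h \vert$ with $|J''| \leq |J|-2$, paired with $\vert \p Z^K h \vert_\UU$, produces precisely the $|J| + |K| \leq |I|-2$ term since $|K| + |J''| \leq |I|-2$; and (iii) the quadratic-in-$h$ piece, together with $\vert \p Z^K h \vert_\UU$, generates the cubic $\vert Z^{J_3} W \vert \vert \p Z^{J_2} W \vert \vert \p Z^{J_1} W \vert$. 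For $Z^I Q$, I use that the Minkowski conformal vector fields in $\mathcal{Z}$ preserve the standard null forms \eqref{eq:intro-nullforms} up to constant multiples of null forms of the same type, so Leibniz yields $Z^I Q(\p W, \p W) = \sum_{|J|+|K| \leq |I|} c^I_{JK}\, Q(\p Z^J W, \p Z^K W)$, and Lemma \ref{lemma-4.2} converts each summand to $\vert \pgood Z^J W \vert \vert \p Z^K W \vert$. For $Z^I G$, Leibniz distributes the $|I|$ vector fields among the coefficient $G(W)$ and the two $\p W$ factors; the hypothesis $\vert Z^J W \vert \leq C$ for $|J| \leq |I|$ keeps all derivatives of $G$ in $W$ uniformly bounded, yielding the cubic term. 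Summing the three contributions (and discarding the $P$ part for $F_k$) produces the claimed estimates.

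The main obstacle is step (i) of the bookkeeping for $Z^I P$: matching the raw Corollary \ref{corol-9.7} output to the cleaner form stated in the proposition, and in particular ensuring mixed terms with $|J'| \leq |J|-1$ end up in the $\vert \pgood Z^J W \vert \vert \p Z^K W \vert$ bin rather than generating a separate index range. This requires careful attention to which components of $h$ are hidden inside the $\LT$ and $\LL$ seminorms, and to the distinction between the $|I|-1$ and $|I|-2$ reductions coming from the two lines of Proposition \ref{prop-8.2}; it is the only place where the wave-coordinate condition enters the commuted estimate beyond what Corollary \ref{corol-9.7} already encodes.
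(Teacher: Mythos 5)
Your proposal follows essentially the same route as the paper: split $F = P + Q + G$, control $P$ via Lemma \ref{lemma-9.6}/Corollary \ref{corol-9.7}, control $Q$ via Leibniz and the commutation of $\mathcal{Z}$ with the standard null forms, and control $G$ via the Taylor factorisation $G(W) = W \cdot \widetilde G(W)$. The only place you go beyond the paper's terse citation is the bookkeeping in your step (i): to clean up the $\vert \p Z^{J'} h \vert_{\LT}$ factor with $\vert J' \vert \leq \vert J \vert - 1$ appearing in Corollary \ref{corol-9.7}, you apply Proposition \ref{prop-8.2} once more, which correctly yields $\vert \pgood \cdot \vert$ terms at total order $\leq \vert I \vert$, a $\UU$-$\UU$ product at total order $\leq \vert I \vert - 2$, and a cubic remainder — exactly the three bins in the target estimate. (One small point to keep in mind: Proposition \ref{prop-8.2} outputs seminorms of the inverse perturbation $H$, so you also need the elementary conversion $\vert \pgood H \vert_\UU \lesssim \vert \pgood h \vert_\UU + \vert h \vert \vert \p h \vert$, which again only feeds the cubic bin.) This extra pass is implicit in the paper's proof, which simply cites Corollary \ref{corol-9.7}; your spelling it out is correct and not a deviation in method.
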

\begin{proof}
The required estimates come from Lemma \ref{lemma-9.6}, Corollary \ref{corol-9.7} and noting that
$$ \vert Z^I Q (\p W, \p W) \vert \lesssim \sum_{| J_1| + \vert J_2 \vert \leq \I} \vert Q ( \p Z^{J_1} W, \p Z^{J_2} W) \vert $$
and
$$ \vert Z^I G (W)(\p W, \p W) \vert \lesssim \sum_{| J_1| + \cdots| J_3| \leq \I} \vert Z^{J_1} W \vert \vert \p Z^{J_2} W \vert \vert \p Z^{J_3} W \vert \,.$$
\end{proof}

\begin{remark} \label{remark:loizelet}
\normalfont
Having introduced a lot of useful notation and obtained the first estimates of the inhomogeneities in Proposition \ref{prop-9.8}, we pause now to discuss some further points about the generalised inhomogeneities considered in the PDE \eqref{eq:general-pde}. Our system is of the form
\begin{equation} \label{remark:general-pde}
\begin{split}
\tbox_g h_\mn = F_\mn \,, \\
\tbox_g \psi_k = F_k \,.\end{split}
\end{equation}
Prescribing $F_k$ to have no non-null $\mathcal{O}((\p W)^2)$ inhomogeneities meant that  $\vert F \vert_\K$ has the same estimates as $\vert F \vert_\TU$, as seen in \eqref{eq:prop9.8-estimates}. It was also important that $F_\mn$ did not contain terms of the form $\p_\mu h_{\lbar \, \lbar} \p_\nu h_{\lbar \, \lbar}$ or $\p_\mu h_{\lbar \, \lbar} \p_\nu \psi_k$. Explicitly, the $P^1$ and $P^2$ terms, as shown in Lemma \ref{lemma-4.2}, did not contain terms estimated by $\vert \p h \vert_\UU \vert \p h \vert_\UU$ and $\vert \p h \vert_\UU \vert \p \psi \vert_\K$, but rather contained terms estimated by $\vert \p h \vert _\UU \vert \p h \vert_\LL$ and $\vert \p h \vert_\TU \vert \p \psi \vert_\K$ respectively. Having these `problem' terms would, apart from not allowing the energy argument to close, violate the weak null condition. 
 
 One should be wary that the system \eqref{remark:general-pde} is not the most general possible. However apart from the result of \cite{MR2582443}, it is unclear what generalisation would be the most fruitful at capturing other physically interesting scenarios. Indeed we could imagine including non-null $\mathcal{O}((\p W)^2)$ terms to $F_k$, provided that there are some good properties and additional assumptions implying that $F_k$ still obeys the same estimates as $\vert F \vert_\TU$. An example of where such a generalisation occurs is for the minimally coupled Einstein-Maxwell system considered in \cite{MR2582443}. This system is of the form
\begin{equation} \begin{split}
\tbox_g h_\mn & = F_\mn (h)(\p h, \p h) + \widetilde{F}_\mn (h)(\p A, \p A) \,, \\
\tbox_g \mathcal{A}_\mu & = F^\A_\mu (h) (\p h, \p A) \,, \end{split} \label{remark-loizelet-pde} \end{equation}
where $F_\mn(h)(\p h, \p h)$ is the original inhomogeneity \eqref{eq:intro-LR-EE} from \cite{LR:04}.  This system can be thought of in terms of our general PDE system \eqref{eq:general-pde} if we let $\{ \psi_k \}:= \{ \mathcal{A}_\mu \}$ and make some minor generalisations. Most significantly, one should include the `additional assumption' of the Lorenz gauge in \eqref{eq:general-pde-gauge}. Since  now $\K := \U$, additional terms can be added to $P^3$ of \eqref{eq:full-P-general-PDE}, namely swapping $\mu, \nu, k$ and $l$ indices and summing as needed. This would capture the required additional terms appearing in $\widetilde{F}_\mn $ of \eqref{remark-loizelet-pde}. Note since the coupling is minimal there are no $P^2$ terms appearing in $\widetilde{F}_\mn $ and so we need not worry about terms of the form $\vert \p h \vert_\UU \vert \p \A \vert_\U$ discussed in the previous paragraph. 

Now $F^\A_\mu$ contains terms quadratic in $\p \A$ which are not strictly null-forms and hence are not included in \eqref{remark:general-pde}. However by expanding out the Lorenz gauge $\p_\mu \left( \sqrt{\det g} \A^\mu \right)=0$ in the null frame one can obtain estimates on $\vert \p A \vert_{\mathcal{L}}$ similar in spirit to the wave gauge estimates \eqref{lemma-8.1}.  Using both gauges, and as shown in \cite{MR2582443}, the following estimates are obtained
\begin{align*}
\vert F \vert_\UU & \lesssim \vert \p h \vert^2_\TU + \vert \pgood h \vert_\UU \vert \p h \vert_\UU + \vert h \vert \vert \p h \vert^2 \,, 
& \vert F \vert_\TU & \lesssim \vert \pgood h \vert_\UU \vert \p h \vert_\UU + \vert h \vert \vert \p h \vert^2 \,, \\
\vert \widetilde{F} \vert_\UU & \lesssim \vert \p \A \vert_\U^2 + \vert h \vert \vert \p \A \vert^2 \,, 
& \vert \widetilde{F} \vert_\TU & \lesssim \vert \pgood \A \vert_\U \vert \p \A \vert_\U + \vert h \vert \vert \p \A \vert^2 \,, \\
\vert F^\A \vert_\U & \lesssim \vert \pgood h \vert_\UU \vert \p \A \vert + \vert \pgood \A \vert \vert \p h \vert_\UU + \vert h \vert \vert \p h \vert \vert \p \A \vert \,.
\end{align*}
Using our notation defined in \eqref{eq:def-W-Y}, these estimates become
\begin{align*}
\vert F^\A \vert_\K + \vert F +\widetilde{F} \vert_\TU & \lesssim \vert \pgood W \vert \vert \p W \vert + \vert W \vert \vert \p W \vert^2 \,, \\
\vert F + \widetilde{F} \vert_\UU &\lesssim \vert \p W \vert_G^2 + \vert \pgood W \vert \vert \p W \vert + \vert W \vert \vert \p W \vert^2 \,.  
\end{align*}
Thus the estimates in \cite{MR2582443} agree with those obtained in Proposition \ref{prop-9.8}. 
\end{remark}

\section{Decay Estimates -- Part I} \label{section:decay-1} 
The estimates in this section fall into two parts. The first section involves decay estimates for solutions $\phi$ of $\tbox_g \phi = F$ where $F$ will eventually be of the form in \eqref{eq:general-pde-F}. The second part involves `weak decay' estimates coming from the Klainerman-Sobolev inequality and a bootstrap assumption on our energy. 

For the first part, we will in fact require a weighted $L^\infty$ estimate of the solution $\phi$, and so we define this \textit{decay weight} as
\begin{equation} \varpi := \varpi(q) = \Bigg\lbrace \begin{array}{ll}
(1+\qv)^{1+\gamma'} \,, & q>0 \\
(1+\qv)^{1/2-\mu'} \,, & q<0 \,,
\end{array} \label{eq:def-varpi} \end{equation}
where $ \gamma ' \geq -1$ and $\mu' \leq 1/2$ are some constants and $q=t-r$. The details can be found in \cite{LR:04}, however the proof of the following Corollary \eqref{corol-7.2} essentially comes from applying the fundamental theorem of calculus along the integral curves of the vector field $\p_s + \frac{H^{\lbar \lbar} }{2g^{L \lbar}} \p_q$. This vector field can be interpreted as the  vector field $\p_s$ in the perturbed geometry, shifted by some small amount towards $\p_q$. 

\begin{corollary}[Corollary 7.2 from \cite{LR:04}] \label{corol-7.2}
Let $\phi_\mn$ be a solution of the system
$$ \tbox_g \phi_\mn = F_\mn \,,$$
where $F_\mn$ is some as yet unspecified non-linearity. 
Assume that $H^\mn = g^\mn - m^\mn$ satisfies
\begin{equation}
\vert H \vert_\UU \leq \varepsilon' \,, \quad \int_0^\infty \left\Vert \vert H (t,\cdot) \vert_\UU \right\Vert_{L^\infty (D_t)} \frac{dt}{1+t} \leq \frac{1}{4} \,, \quad \vert H \vert_\LT \leq \varepsilon' \frac{\qv +1}{1+t+\vert x \vert} \,,
\end{equation}
in the region $D_t := \{ x \in \R^3 : t/2 \leq \vert x \vert \leq 2t \}$ for some $\varepsilon'>0$.  
Then for any $U, V \in \mathcal{U}, \mathcal{L}$ or $ \mathcal{T}$ and $(x,t) \in [0, T) \times \R^3$
\begin{align*}
(1+t+ \qv) \varpi (q) \vert \p \phi (t,x) \vert_{UV} & \lesssim \sup_{0 \leq \tau \leq t} \sum_{ \vert I \vert \leq 1} \Vert \vert \varpi (q) Z^I \phi (\tau, \cdot) \vert_{UV} \Vert_{L^\infty} \\
& +  \int_0^t \Big\lbrace \varepsilon' \alpha \Vert \vert \varpi (q) \p \phi (\tau, \cdot) \vert_{UV} \Vert_{L^\infty} + (1+\tau) \Vert \vert \varpi (q) F(\tau, \cdot) \vert_{UV} \Vert_{L^\infty(D_\tau)} \\
& \qquad  \qquad+ \sum_{\vert I \vert \leq 2} (1+\tau)^{-1} \Vert \vert \varpi (q) Z^I \phi (\tau, \cdot) \vert_{UV} \Vert_{L^\infty(D_\tau)} \ \Big\rbrace d\tau ,,
\end{align*}
where $\alpha := \max(1+\gamma', 1/2 - \mu')$. Note here $\vert H \vert_\UU = \sum_{U, V \in \mathcal{U}} \vert U^\mu V^\nu m_{\mu \rho} m_{\nu \sigma} H^{\rho \sigma} \vert$.
\end{corollary}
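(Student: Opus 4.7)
The plan is to prove this weighted pointwise decay estimate by integrating the rescaled equation along appropriate characteristics of the perturbed geometry, following the Lindblad--Rodnianski strategy.

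First I would decompose the reduced wave operator $\tbox_g = g^{\rho\sigma}\p_\rho \p_\sigma$ in the null frame $\U = \{L, \lbar, S_1, S_2\}$. Using that $m^{L\lbar} = -1/2$, one obtains a schematic identity of the form
\begin{equation*}
\tbox_g = -2 g^{L\lbar} \p_L \p_\lbar + \slashed{\Delta} + H^{\lbar\lbar} \p_\lbar^2 + R,
\end{equation*}
where $\slashed{\Delta}$ is the angular Laplacian and the remainder $R$ consists of terms of the schematic form $g^{LL} \p_L^2$, $|H|_\LT \cdot \p\p$, and mixed terms, all of which are either scaled by $|H|_\LT \lesssim \varepsilon' (\qv+1)/(1+t+|x|)$ or involve at least one good derivative $\pgood$ (equivalently, an $L$ or angular derivative). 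Dividing through by $-2 g^{L\lbar}$, which is close to $1$ by the smallness of $|H|_\UU$, the equation $\tbox_g \phi = F$ can be recast as
\begin{equation*}
\left( \p_s + \frac{H^{\lbar\lbar}}{2 g^{L\lbar}} \p_q \right) \p_\lbar \phi = \tfrac{1}{2}(-g^{L\lbar})^{-1} F + \text{error},
\end{equation*}
where the error collects the good-derivative and $|H|_\LT$-weighted terms.

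Second I would introduce the transport field $X := \p_s + \frac{H^{\lbar\lbar}}{2 g^{L\lbar}} \p_q$. Under the hypotheses $|H|_\UU \leq \varepsilon'$ and the weighted smallness $\int_0^\infty \| |H|_\UU\|_{L^\infty(D_t)} (1+t)^{-1} dt \leq 1/4$, the integral curves of $X$ are small perturbations of the outgoing Minkowski characteristics: once a curve enters $D_t$, it stays inside (up to adjusting the constant), and the spatial radius $r$ along the curve remains comparable to its base value. I would then apply the fundamental theorem of calculus to the rescaled quantity $r \, \varpi(q) \, \p_\lbar \phi$ along these integral curves, using $\p_L(r \cdot) = r \p_L + 1$ to produce a factor of $(1+t+\qv)$ from the $r$ weight on the left.

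Third I would convert the bound on $\p_\lbar \phi$ into a bound on $U^\mu V^\nu \p_\rho \phi_{\mu\nu}$ for general $U,V \in \U,\mathcal{L},\mathcal{T}$ by combining it with direct estimates on $\pgood \phi$. The good derivatives of $\phi$ satisfy $\pgood \phi \sim Z\phi/(1+t+|x|)$ (as $\slashed{\p}_i = (\Omega_{0i} + x_i \p_t/r)/(1+t+|x|)$ type identity), which yields the contribution $\sum_{|I|\leq 2} (1+\tau)^{-1} \| \varpi Z^I \phi\|_{L^\infty(D_\tau)}$ on the right-hand side. The initial term at $\tau = 0$ along each characteristic contributes $\sum_{|I|\leq 1} \| \varpi Z^I \phi \|_{L^\infty}$. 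Finally, commuting $\varpi(q)$ through $X$ produces $X(\varpi)/\varpi \lesssim \varepsilon' \alpha$ with $\alpha = \max(1+\gamma', 1/2-\mu')$, because $\p_q \varpi / \varpi$ is bounded by $\alpha/(1+\qv)$ and $X$ differs from $\p_s$ only by the small quantity $\frac{H^{\lbar\lbar}}{2g^{L\lbar}} \p_q$; this accounts for the $\varepsilon'\alpha \|\varpi \p \phi\|$ term.

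The main obstacle will be the careful bookkeeping of the region in which characteristics live and the commutator terms. One has to verify that curves of $X$ issuing from points $(t,x)$ with $t/2 \leq |x| \leq 2t$ remain in $D_\tau$ for all earlier $\tau \in [0,t]$ (using the smallness integral), and then show that the $|H|_\LT$-weighted remainder $R$ in the decomposition, despite involving two derivatives of $\phi$, contributes only through the same three families of error terms already listed in the statement, rather than a genuinely worse term. Once these bookkeeping steps are carried out, the asserted inequality follows by taking the supremum over the integral curve and absorbing terms into the stated right-hand side.
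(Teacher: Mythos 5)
Your proposal follows the same strategy the paper attributes to Lindblad--Rodnianski: rewrite $\tbox_g$ in the null frame to expose a transport operator $\p_s + \frac{H^{\lbar\lbar}}{2g^{L\lbar}}\p_q$, apply the fundamental theorem of calculus to the $r$- and $\varpi$-weighted ``bad'' derivative along its integral curves, and control good derivatives directly via $Z$-vector fields. This is precisely the route the paper sketches (it defers technical details to \cite{LR:04}), so the approach matches.

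One bookkeeping point in your step two is stated incorrectly and would need fixing: the backward-in-time integral curves of $X$ do \emph{not} remain in $D_\tau$ for all earlier $\tau \in [0,t]$. Even for the unperturbed $\p_s$-characteristics $r(\tau) = \tau + (r_0 - t_0)$, the curve leaves $D_\tau = \{\tau/2 \le r \le 2\tau\}$ at $\tau$ of order $|r_0 - t_0|$, not at $\tau = 0$, and this is not remedied by ``adjusting the constant.'' The correct bookkeeping is the one in \cite{LR:04}: away from the near-light-cone region, the estimate $(1+t+|q|)\varpi(q)|\p\phi|_{UV} \lesssim \sum_{|I|\le 1}\|\varpi Z^I\phi\|_{L^\infty}$ holds purely algebraically (since $1+|q|$ and $1+t+|q|$ are comparable there), and one integrates along the characteristic only from $(t,x)$ back to the boundary of that region or to $\tau=0$, whichever comes first, at which point one invokes the direct bound. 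Your proposal flags the region issue as something to verify, but the verification as you stated it would fail; the fix above is what actually closes the argument.
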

\begin{remark}
The above estimate is obtained for $\phi_\mn$ and then we contract with any $U, V \in \mathcal{U}$ since $\p_s$ and $\p_q$, defined in Section \ref{section:null-frame}, commute with $\mathcal{U}$. Similarly we could have considered a system $\tbox_g \phi_k = F_k$ and, for the estimate, contracted with arbitrary coefficients $N^k \in \R^m$. 
\end{remark}

We next turn to a generalised version of the Klainerman-Sobolev inequality. Recall from \eqref{eq:intro-weight} the \textit{energy weight} function $w$ 
\begin{equation}
w(q) = \Bigg\lbrace \begin{array}{ll}
1+(1+\qv)^{1+2\gamma} \,, & q>0 \\
1+(1+\qv)^{-2 \mu} \,, & q<0 
\end{array} \,,\label{eq:def-w}
\end{equation}
where $\mu >0$ and $\gamma \in (0, 1/2)$ are fixed constants. This choice of $w$ and $\mu>0$ guarantees $w'(q) >0$, a necessary sign needed to close the energy argument in Theorem \ref{theorem-11.1}.  
This choice of $w$ leads to the following weighted version of the Klainerman-Sobolev inequality. 
\begin{proposition}[Proposition 14.1 from \cite{LR:04}] \label{prop-14.1}
For any function $\phi \in C^\infty_0(\R^3_x)$, at an arbitrary point $(t,x)$ we have
\begin{align*}
\vert \phi(t,x) \vert ( 1+t+\qv) \left( (1+\qv) w(q) \right)^{1/2} & \leq C \sum_{\vert I \vert \leq 3} \Vert w^{1/2} Z^I \phi(t, \cdot) \Vert_{L^2_x} \,.
\end{align*}
\end{proposition}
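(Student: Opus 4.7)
The plan is to reduce the weighted estimate to the standard (unweighted) Klainerman--Sobolev inequality
\[
(1+t+\qv)(1+\qv)^{1/2} \vert \phi(t,x) \vert \lesssim \sum_{\vert I \vert \leq 3} \Vert Z^I \phi(t,\cdot) \Vert_{L^2_x},
\]
by localizing in the variable $q = t-r$ onto pieces on which $w(q)$ is essentially constant. The unweighted inequality is classical and follows from Sobolev embedding applied after splitting into the interior region $r \leq (1+t)/2$ (where rescaling via $S$ and the rotations reduces matters to a Euclidean Sobolev inequality) and the wave zone $r \geq (1+t)/2$ (where a local Sobolev argument on a region of angular scale $\sim r$ and radial scale $\sim 1+\qv$ uses the frame vector fields $L$, $\lbar$ and the rotations).

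First I will introduce a smooth partition of unity $\{\chi_k(q)\}_{k \geq 0}$, with $\chi_0$ supported in $\{\qv \leq 2\}$ and $\chi_k$ for $k \geq 1$ supported in the dyadic annulus $\{2^{k-1} \leq 1+\qv \leq 2^{k+1}\}$, obeying $\vert \p_q^j \chi_k \vert \lesssim 2^{-jk}$. On each support the definition \eqref{eq:def-w} yields $w(q) \sim w_k$ for a constant $w_k$; moreover the supports have bounded overlap, so at any fixed point $(t,x)$ only finitely many $\chi_k(q(t,x))$ are nonzero. Setting $\phi_k := \chi_k(q)\phi$ and applying the unweighted inequality to each $\phi_k$, then multiplying through by $w(q)^{1/2} \sim w_k^{1/2}$ and summing over the $O(1)$ nonzero indices at $(t,x)$ yields the desired left-hand side.

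For the right-hand side I expand, via Leibniz,
\[
Z^I(\chi_k(q)\phi) = \sum_{\vert I_1 \vert + \vert I_2 \vert \leq \vert I \vert} c_{I_1,I_2}\, (Z^{I_1} \chi_k(q))(Z^{I_2} \phi),
\]
and control $Z^{I_1} \chi_k$ uniformly. A direct computation gives $\vert \p_\mu q \vert \leq 1$, $\Omega_{ij} q = 0$, $\Omega_{0i} q = -x_i q/r$, and $Sq = q$; since $\vert x_i \vert / r \leq 1$, this yields $\vert Zq \vert \lesssim 1 + \qv$ uniformly, and inductively $\vert Z^{I_1} \chi_k(q) \vert \lesssim 1$ for $\vert I_1 \vert \leq 3$. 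Combined with $w^{1/2}(q) \sim w_k^{1/2}$ on $\mathrm{supp}\,\chi_k$ and the finite overlap of the supports, this gives
\[
w_k^{1/2} \Vert Z^I \phi_k(t,\cdot) \Vert_{L^2_x} \lesssim \sum_{\vert J \vert \leq \vert I \vert} \Vert w^{1/2} Z^J \phi(t,\cdot) \Vert_{L^2_x},
\]
which closes the argument after the final sum over the $O(1)$ nonzero $k$.

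The main obstacle will be the non-smoothness of $q = t - \vert x \vert$ at $r = 0$, which infects the second and third derivatives of $\chi_k(q)$ through expressions like $\p_i \p_j r = \delta_{ij}/r - x_i x_j / r^3$. In three dimensions $1/r$ is locally $L^2$-integrable, so the singularity is harmless for this $L^2$-based argument, but it must be dispatched either by introducing a smooth regularization of $\vert x \vert$ near the origin or by establishing the bounds first on $\R^3 \setminus \{0\}$ and extending by a density argument for $\phi \in C^\infty_0$. Once this technical point is handled, the remaining ingredients are entirely classical.
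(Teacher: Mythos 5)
Your dyadic localization in $q$, which reduces the weighted estimate to the unweighted Klainerman--Sobolev inequality, is a clean idea and is genuinely different from the argument in \cite{LR:04}, where the weight $w$ is carried directly through a Sobolev estimate on spherical shells after an interior/exterior split. However, as written your argument has a real gap at $r=0$, and the mitigations you propose do not repair it. The point is quantitative: $q=t-|x|$ is only Lipschitz at the origin, so if $\chi_k'(t)\neq 0$ then $\p_i\chi_k(q)=-\chi_k'(q)\,\omega_i$ is merely bounded, $\p^2\chi_k(q)$ scales like $r^{-1}$, and $\p^3\chi_k(q)$ scales like $r^{-2}$. Consequently $Z^I(\chi_k(q)\phi)$ contains a term of size $\sim \phi(t,0)\,r^{-2}$ for $|I|=3$, and since $r^{-2}\notin L^2_{\mathrm{loc}}(\R^3)$ (the radial integral $\int_0^1 r^{-4}\cdot r^2\,dr$ diverges), one has $Z^I\phi_k\notin L^2$ for generic $\phi\in C^\infty_0$ with $\phi(t,0)\neq 0$ and the unweighted Klainerman--Sobolev inequality simply cannot be invoked. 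Your remark that ``$1/r$ is locally $L^2$-integrable, so the singularity is harmless'' addresses the second derivative but misses the third, which is exactly the one that breaks. Moreover, the claimed inductive bound $|Z^{I_1}\chi_k(q)|\lesssim 1$ is false near $r=0$ for $|I_1|\geq 2$. The regularization $|x|\mapsto(\varepsilon^2+|x|^2)^{1/2}$ produces constants blowing up like $\varepsilon^{-1/2}$ (one has $|\p^3 q_\varepsilon|\sim\varepsilon^{-2}$ on a ball of radius $\varepsilon$, giving an $L^2$ contribution $\sim\varepsilon^{-1/2}$), and a density argument cannot save an estimate whose right-hand side is infinite.

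The correct fix is to perform the interior/exterior split \emph{before} the dyadic decomposition in $q$. Write $\phi = \big(1-\eta(r/(1+t))\big)\phi + \eta(r/(1+t))\phi$, with $\eta$ smooth, vanishing for $s\leq 1/4$, and equal to $1$ for $s\geq 1/2$. On the support of the interior piece one has $q\in[t/2,t]$, so $1+|q|\sim 1+t$ and $w(q)\sim w(t)$ is a fixed constant on each time slice; the cutoff $\eta(r/(1+t))$ and all its $Z$-derivatives are smooth and uniformly bounded (no singularity arises since $\eta$ is constant near the origin), and the unweighted Klainerman--Sobolev inequality applied to $(1-\eta)\phi$ and multiplied through by $w(t)^{1/2}$ gives the desired bound there. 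On the exterior piece $r\gtrsim 1+t$, one has $1+|q|\lesssim r$, so the $r^{-j}$ factors generated by differentiating $\chi_k(q)$ are dominated by $(1+|q|)^{-j}\sim 2^{-jk}$ and your inductive bound $|Z^{I_1}\chi_k(q)|\lesssim 1$ becomes valid; the dyadic argument then closes exactly as you wrote it. With this modification your route is a legitimate alternative to the direct weighted Sobolev argument of \cite{LR:04}; as stated it has a genuine gap.
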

\begin{remark} \label{remark-ks-ineq-kk}
In the $n=0$ mode reduction of Kaluza-Klein on a $\tee$, the proof of Proposition \ref{prop-14.1} follows the same as in \cite{LR:04} after noting that all the functions considered should be taken independent of the torus coordinates $\{ x^\abar\}$. Thus we may assume that any $\phi$ solving $\tbox_g \phi = F$ must also satisfy $\p_\abar \phi = 0$. Hence any spatial integral becomes
$$ \int_{\Sigma_t} \int_{\tee} \phi \, d^3 x d^m x^\abar = C \int_{\Sigma_t} \phi \, d^3x \,, $$
and so the weighted Klainerman-Sobolev inequality would remain unchanged up to a constant. 
\end{remark}
We now make a bootstrap assumption in order to use Proposition \ref{prop-14.1} to obtain the weak decay estimates. Recall the initial data involved a Schwarzschildean part being subtracted, see \eqref{eq:intro-def-gamma1}. Similarly we must subtract off the Schwarzschildean part from the perturbation. As in \cite{LR:04}, define
\begin{equation} \begin{split} h^0_\mn &:= \chi(r) \chi (r/t) \frac{M}{r} \delta_\mn \,, \\
h^1_\mn &:= h_\mn - h^0_\mn \,, \end{split} \label{eq:def-h0-h1}
\end{equation}
where $\chi(s) \in C^\infty$ is 1 when $s \geq 3/4$ and 0 when $s \leq 1/2$. See also Figure \ref{fig:h0}. 
\begin{remark}
In the case of our zero mode reduction of Kaluza-Klein, we take $h^0_\mn = 0$ whenever at least one of the components $\mu, \nu$ are in the compact directions. 
\end{remark}

\begin{figure}[h] 
\includegraphics[scale=0.7]{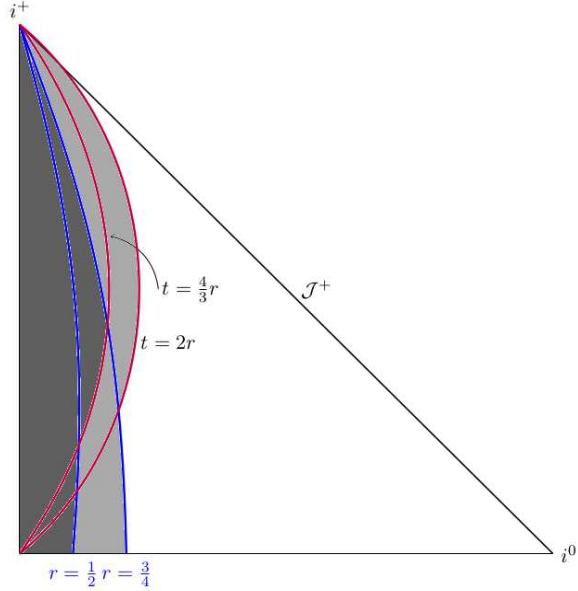}
\caption{Regions where $\chi(r) \chi(r/t)$ is vanishing (dark grey), between 0 and 1 (light grey) and identically 1 (white). In particular, $h^0$ is identically 1 towards $i^0$ and vanishing when $r=0$. }
\label{fig:h0}
\end{figure}

Since $h^0$ is a known quantity, we introduce some notation to separate between the decay for terms involving $h^0$ and the decay for $h^1$. Hence for $i=0,1$ we define
\begin{equation} \begin{split}
W^i & := \left\lbrace U^\mu V^\nu h^i_\mn \,, \psi_k : U, V \in \mathcal{U}  \,, k \in \K \right\rbrace  \\
W_G^i &:=  \left\lbrace T^\mu U^\nu h^i_\mn \,, \psi_k : U \in \mathcal{U} \,, T \in \mathcal{T} \,, k \in \K \right\rbrace \,.
\end{split} \label{eq:def-Wi-Yi} \end{equation}
Recall also the weighted energy from \eqref{eq:intro-energy-sol} for the solution $(h^1_\mn(t), \psi_k(t))$
\begin{equation} \begin{split} \mathcal{E}_N [W^1](t)  = \sup_{0 \leq \tau \leq t} & \sum_{\vert I \vert \leq N} \sum_{Z \in \mathcal{Z}} \int_{\Sigma_\tau} w \left( \vert \p Z^I h^1 (\tau, x ) \vert_\UU^2+  \vert \p Z^I \psi_k(\tau, x ) \vert_\K^2\right) d^3 x \,. \label{eq:def-weighted-energy}
\end{split} \end{equation}
As discussed after Theorem \ref{theorem:intro1}, we make the following bootstrap assumption. Define the time $T \in (0, T_0]$ to be the maximal time such that the following inequality holds
\begin{equation} \begin{split}
\mathcal{E}_N [W^1](t) & \leq 2 C_N \varepsilon (1+t)^\delta \,, 
\end{split} \label{eq:growth-control-h1}
\end{equation}
for $\delta \in (0, 1/4)$ a fixed number such that $ \delta < \gamma$ and with $(m_\mn + h^0_\mn (t)+ h^1_\mn(t), \psi_k(t))$ being a local-in-time solution to \eqref{eq:general-pde} satisfying the wave-coordinate condition 
\begin{equation}
\p_\rho \left( g^{\rho \mu} \sqrt{\vert \det g \vert} \right) = 0  \quad \forall \, t \in [0, T_0] \,. \label{eq:wave-coord}
\end{equation}

Using the weighted Klainerman-Sobolev inequality and the bootstrap assumption one can now derive the following `weak decay' estimates. 
\begin{corollary}[Modified Corollary 9.4 from \cite{LR:04}] \label{corol-9.4}
Assume \eqref{eq:growth-control-h1} holds and $W^i$ is defined as in \eqref{eq:def-Wi-Yi}, then for $i=0,1$ and $\I \leq N-2$ we have
\begin{align}
\vert Z^I W^i(t,x) \vert & \leq \Bigg\lbrace \begin{array}{ll}
C \varepsilon (1+t+\qv)^{-1+\delta} (1+\qv)^{-\delta'} \,, & q >0 \\
C \varepsilon (1+t+\qv)^{-1+\delta} (1+\qv)^{1/2} \,, & q <0 
\end{array} \,, \label{eq:corol-9.4-Zh} \\
\vert \p Z^I W^i(t,x) \vert & \leq \Bigg\lbrace \begin{array}{ll}
C \varepsilon (1+t+\qv)^{-1+\delta} (1+\qv)^{-1-\delta'} \,, & q >0 \\
C \varepsilon (1+t+\qv)^{-1+\delta} (1+\qv)^{-1/2} \,, & q <0 
\end{array} \,, \label{eq:corol-9.4-pZh}
\end{align}
where $\delta'=\delta$ if $i=0$ and $\delta'=\gamma > \delta$ if $i=1$.
Furthermore if $\I \leq N-3$ then
\begin{align}
\vert \pgood Z^I W^i(t,x) \vert & \leq \Bigg\lbrace \begin{array}{ll}
C \varepsilon (1+t+\qv)^{-2+\delta} (1+\qv)^{-\delta'} \,, & q >0 \\
C \varepsilon (1+t+\qv)^{-2+\delta} (1+\qv)^{1/2} \,, & q <0 
\end{array} \,. \label{eq:corol-9.4-pgoodZh}
\end{align}
\end{corollary}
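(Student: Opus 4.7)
The plan is to treat the Schwarzschildean piece $W^0$ by direct inspection and to handle the dynamical piece $W^1$ by combining the weighted Klainerman-Sobolev inequality (Proposition~\ref{prop-14.1}) with a weighted Hardy inequality from Appendix~\ref{section:hardy}, closed via the bootstrap energy \eqref{eq:growth-control-h1}. The $W^0$ bounds are essentially trivial: since $h^0_\mn = \chi(r)\chi(r/t)(M/r)\delta_\mn$, each $Z \in \mathcal{Z}$ contributes only $\mathcal{O}(1)$ factors on the support of the cutoffs (where $r \sim 1+t+|q|$), so $|Z^I h^0| \lesssim M/r$ and $|\p Z^I h^0| \lesssim M/r^2$. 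Using $M \leq \varepsilon$ and the elementary inequality $(1+t+|q|)/(1+|q|) \geq 1$, these estimates dominate the claimed right-hand sides with $\delta' = \delta$; the $\psi_k$ entries of $W^0$ receive the stronger $W^1$-type bound proved below.

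For $\p Z^I W^1$ I would apply Proposition~\ref{prop-14.1} componentwise. Since $[Z, \p]$ is a constant multiple of $\p$, commuting $Z^J$ past $\p$ produces only partial derivatives, so the right-hand side reduces to a sum of $\|w^{1/2} \p Z^K W^1\|_{L^2}$ with $|K| \leq |I|+3$, controlled by $\mathcal{E}_N[W^1]^{1/2} \leq C\varepsilon^{1/2}(1+t)^{\delta/2}$ whenever $|I|+3 \leq N$. Dividing by the left-hand weight $(1+t+|q|)((1+|q|)w(q))^{1/2}$ and evaluating $w$ separately in the regimes $q>0$ and $q<0$ produces \eqref{eq:corol-9.4-pZh}. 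To obtain \eqref{eq:corol-9.4-Zh}, I would instead apply Klainerman-Sobolev to $Z^I W^1$ itself; now the right-hand side involves $\|w^{1/2} Z^{I+J}W^1\|_{L^2}$, an $L^2$-norm of the function rather than its derivative, which the energy does not see directly. The weighted Hardy inequality of Appendix~\ref{section:hardy} will convert $\|w^{1/2} Z^K W^1/(1+|q|)\|_{L^2}$ into $\|w^{1/2} \p Z^K W^1\|_{L^2}$, and the stray factor of $(1+|q|)$ is absorbed by the surplus $(1+|q|)^{1/2}$ already present on the left-hand side of Proposition~\ref{prop-14.1}, after which the right-hand side is again dominated by $\mathcal{E}_N^{1/2}$.

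For the tangential bound \eqref{eq:corol-9.4-pgoodZh}, I would use the structural identities $\pgood_i = \slashed{\p}_i = r^{-1}\omega^j \Omega_{ij}$ and $\pgood_0 = L^\mu \p_\mu$; together these yield the pointwise gain $|\pgood \phi| \lesssim (1+t+|q|)^{-1}|Z\phi|$ on the relevant region (where $r \gtrsim 1+t$, the factor $r^{-1}$ provides the improvement, while in the interior region $r \ll t$ we have $1+|q| \sim 1+t+|q|$ and the gain is automatic). Applying this with $\phi = Z^I W^i$ and invoking the previously established bound on $Z^{I+1}W^i$, which demands $|I|+1 \leq N-2$, i.e.\ $|I| \leq N-3$, produces the extra $(1+t+|q|)^{-1}$ factor.

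The main obstacle I anticipate is weight bookkeeping: $w(q)$ has qualitatively different profiles for $q \geq 0$ and $q \leq 0$, and the Hardy step must be compatible with both. In particular, producing the exponent $-\gamma$ on $(1+|q|)$ for $q>0$ relies on matching the growth $(1+q)^{1+2\gamma}$ in $w$ against the $(1+|q|)^{1/2}$ gain from Klainerman-Sobolev, while for $q<0$ the exponent $+\tfrac{1}{2}$ emerges because $w$ is bounded below by $1$ so only the Klainerman-Sobolev factor contributes on the right-hand side of the claimed bound.
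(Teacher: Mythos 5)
Your treatments of $\p Z^I W^i$ (Klainerman--Sobolev plus the bootstrap energy), of $W^0$ (direct inspection of the Schwarzschildean profile), and of $\pgood Z^I W^i$ (the pointwise gain $|\pgood\phi|\lesssim(1+t+|q|)^{-1}\sum_{|J|=1}|Z^J\phi|$, costing one more vector field and hence the restriction $|I|\leq N-3$) all match the paper's argument and are correct. The problem is your proposed route to \eqref{eq:corol-9.4-Zh} for $W^1$.

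Applying Proposition~\ref{prop-14.1} to $\phi=Z^IW^1$ produces $\sum_{|J|\leq 3}\|w^{1/2}Z^JZ^IW^1\|_{L^2}$ on the right, and --- as you yourself note --- this is an $L^2$ norm of the \emph{undifferentiated} field, which the energy does not control. The Hardy inequality of Corollary~\ref{corol-13.3} gives $\|w^{1/2}\,Z^KW^1/(1+|q|)\|_{L^2}\lesssim\|w^{1/2}\p Z^KW^1\|_{L^2}$, i.e.\ it bounds a \emph{smaller} quantity than the one you have; there is no way to recover $\|w^{1/2}Z^KW^1\|_{L^2}$ from it, since the weight $1/(1+|q|)$ sits inside the $L^2$ integral and cannot be traded against the pointwise factor $(1+|q|)^{1/2}$ appearing on the \emph{left}-hand side of Proposition~\ref{prop-14.1}. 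Indeed $\|w^{1/2}Z^KW^1(t,\cdot)\|_{L^2}$ is generically infinite (or at least unbounded in $t$) because $Z^KW^1$ does not decay fast enough at spatial infinity for this unweighted-by-$(1+|q|)^{-1}$ norm to be finite. The paper circumvents this entirely: one obtains \eqref{eq:corol-9.4-Zh} by integrating the already-proved pointwise bound \eqref{eq:corol-9.4-pZh} on $\p_qZ^IW^1$ from the initial slice $t=0$ along curves with $t+r$ and $\omega=x/|x|$ fixed, using the decay of the data at spatial infinity, $\liminf_{|x|\to\infty}(|h^1(0,x)|+|\psi_k(0,x)|)=0$, to start the integration. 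You should replace the Klainerman--Sobolev-plus-Hardy step for $Z^IW^1$ with this transport argument.
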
  
\begin{proof}
The proof of \eqref{eq:corol-9.4-pZh} follows from the weighted Klainerman-Sobolev estimate of Proposition \ref{prop-14.1} and the energy assumption  \eqref{eq:growth-control-h1}. We obtain \eqref{eq:corol-9.4-Zh} for $r>t$  by integrating \eqref{eq:corol-9.4-pZh} from the hypersurface $t=0$ along lines with $t+r$ and $w = x/\vert x \vert$ fixed. One also needs the initial condition from the assumptions of our main Theorem \ref{theorem:intro1}, namely \eqref{eq:intro-decay-id} implies
$$ \liminf_{\vert x \vert \to \infty} \left( \vert h^1(0,x) \vert + \vert \psi_k(0,x) \vert \right) = 0 \,. $$
The final estimates for $\vert \pgood Z^I W^i(t,x) \vert$ come from recalling that for any $\phi \in C^\infty_c$ function
$$\vert \pgood \phi \vert \lesssim (1+t+\qv)^{-1}  \sum_{\I = 1} \vert Z^I \phi \vert \,.$$
\end{proof}
These weak decay estimates immediately give the following decay on the inhomogeneities. 
\begin{lemma}[Modified Lemma 10.5 from \cite{LR:04}] \label{lemma-10.5}
Assume \eqref{eq:growth-control-h1} and \eqref{eq:wave-coord} hold and let $F_k$ be as given in \eqref{eq:general-pde-F}. Then
\begin{align*}
\vert F \vert_\TU + \vert F \vert_\K & \lesssim  \varepsilon t^{-3/2 + \delta} \vert \p W \vert \,, \\
\vert F \vert_\UU &\lesssim  \varepsilon  t^{-3/2 + \delta} \vert \p W \vert +  \vert \p W \vert_G^2\,.
\end{align*}
\end{lemma}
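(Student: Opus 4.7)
I would deduce this lemma directly from Proposition \ref{prop-9.8} combined with the pointwise weak decay estimates of Corollary \ref{corol-9.4}. The first step is to apply Proposition \ref{prop-9.8} with $\I=0$ — legitimate because the wave-coordinate condition is precisely \eqref{eq:wave-coord} — to reduce the problem to the pointwise inequalities
$$
\vert F \vert_\TU + \vert F \vert_\K \lesssim \vert \pgood W \vert \vert \p W \vert + \vert W \vert \vert \p W \vert^2, \qquad \vert F \vert_\UU \lesssim \vert \p W \vert_G^2 + \vert \pgood W \vert \vert \p W \vert + \vert W \vert \vert \p W \vert^2.
$$
Comparing with the target, the plan then reduces to establishing the two pointwise bounds $\vert \pgood W \vert \lesssim \varepsilon(1+t)^{-3/2+\delta}$ and $\vert W \vert \vert \p W \vert \lesssim \varepsilon(1+t)^{-3/2+\delta}$: the first handles the null term directly, and the second absorbs the cubic term $\vert W \vert \vert \p W \vert^2$ into $\varepsilon t^{-3/2+\delta}\vert \p W \vert$.

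For the first bound I would split $W = W^0 + W^1$ and apply Corollary \ref{corol-9.4} to each summand. In the region $q>0$ one simply drops the factor $(1+\qv)^{-\delta'} \leq 1$, and in the region $q<0$ one uses $\qv = r-t \leq r \leq 1+t+\qv$ to exchange $(1+\qv)^{1/2} \leq (1+t+\qv)^{1/2}$. Both regions then produce
$$
\vert \pgood W(t,x) \vert \lesssim \varepsilon (1+t+\qv)^{-3/2+\delta} \leq \varepsilon (1+t)^{-3/2+\delta},
$$
the last inequality following from $1+t+\qv \geq 1+t$ and a negative exponent.

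For the product $\vert W \vert \vert \p W \vert$, a similar application of Corollary \ref{corol-9.4} gives weights whose $q$-dependent parts multiply to $(1+\qv)^{-1-2\delta'}\leq 1$ in the $q>0$ region and to $1$ in the $q<0$ region (where the two $(1+\qv)^{\pm 1/2}$ factors cancel exactly). Thus
$$
\vert W \vert \vert \p W \vert \lesssim \varepsilon^2 (1+t+\qv)^{-2+2\delta} \leq \varepsilon^2 (1+t)^{-2+2\delta},
$$
and because $\delta<1/2$ one has $\varepsilon(1+t)^{-1/2+\delta}\leq 1$, which allows one factor of $\varepsilon$ to be absorbed and yields $\vert W \vert \vert \p W \vert \lesssim \varepsilon(1+t)^{-3/2+\delta}$. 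Substituting the two displays into the inequalities coming from Proposition \ref{prop-9.8} closes the proof. The argument is essentially exponent bookkeeping; the only mild subtlety I would highlight is the trade in the interior region $q<0$, where the growth factor $(1+\qv)^{1/2}$ present in both $\vert W \vert$ and $\vert \pgood W \vert$ must be absorbed into the spacetime weight $(1+t+\qv)^{1/2}$ rather than discarded.
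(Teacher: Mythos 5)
Your argument is correct and is exactly the route the paper takes: the paper's own proof is a one-liner citing Proposition \ref{prop-9.8} together with Corollary \ref{corol-9.4}, and your exponent bookkeeping (including the split $W = W^0 + W^1$, the absorption of $(1+\qv)^{1/2}$ into $(1+t+\qv)^{1/2}$ in the region $q<0$, and the use of $\delta < 1/2$ to trade a power of $\varepsilon$ for a factor of $(1+t)^{1/2-\delta}$) is the standard way to fill in those details.
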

\begin{proof}
This follows from Proposition \ref{prop-9.8} and the weak decay estimates of Corollary \ref{corol-9.4}.
\end{proof}

We end this section with a stand-alone Lemma which involves estimating the second-order derivatives landing on the Schwarzschildean part $h^0_\mn$. Since the form of $h^0_\mn$ has been chosen in \eqref{eq:def-h0-h1}, this result comes from commuting through copies of $Z \in \mathcal{Z}$ and applying the results from Corollary \ref{corol-9.4}.  
\begin{lemma}[Lemma 9.9 from \cite{LR:04}] \label{lemma-9.9}
Let $F^0_\mn := \tbox_g h^0_\mn$ where $h^0$ is as defined in \eqref{eq:def-h0-h1}. Then for $\I \leq N-2$
\begin{align*}
\vert Z^I F^0 \vert_{\mathcal{U} \mathcal{U}} & \leq \Bigg\lbrace \begin{array}{ll}
C \varepsilon^2 (1+t+\qv)^{-4+\delta} (1+\qv)^{-\delta} \,, & q >0 \,,  \\
C \varepsilon (1+t+\qv)^{-3} \,, & q <0 \,,
\end{array} 
\end{align*}
and for $\I \leq N$
\begin{align*}
\vert Z^I F^0 \vert_\UU & \lesssim \Bigg\lbrace \begin{array}{ll}
 C \varepsilon^2 (1+t+\qv)^{-4+\delta} (1+\qv)^{-\delta} \,, & q >0  \\
C \varepsilon (1+t+\qv)^{-3} \,, & q <0 
\end{array} \\
& \qquad + \frac{C \varepsilon}{(1+t+\qv)^{3}} \sum_{\J \leq \I} \vert Z^J h^1 \vert_\UU \,.
\end{align*}
\end{lemma}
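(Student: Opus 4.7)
Since $h^0_\mn = f \delta_\mn$ with $f := M\chi(r)\chi(r/t)/r$, the equation decouples as $\tbox_g h^0_\mn = \delta_\mn \tbox_g f$ and it suffices to estimate $|Z^I \tbox_g f|$. I would first write $\tbox_g f = \Box_m f + H^{\rho\sigma} \p_\rho \p_\sigma f$ where $H := g^{-1} - m^{-1}$. The pivotal identity is $\Box_m(M/r) = 0$, which forces $\Box_m f$ to be supported on the set where $\chi(r)\chi(r/t)$ is not locally constant, namely the compact slab $\{r \in [1/2,3/4]\}$ (restricted by the condition $r/t \geq 1/2$ to $t \lesssim 1$) together with the conical strip $\{r/t \in [1/2,3/4]\}$, where $r \sim t$, $\qv \sim t$ and $q > 0$. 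Commuting $Z^I$ through $\Box_m$ is immediate since $[\p_\mu, \Box_m] = [\Omega_{\mu\nu}, \Box_m] = 0$ and $[S, \Box_m] = -2\Box_m$, while each $Z^J$ acting on $f$ preserves the product structure $M \cdot (\text{cut-off})/r$ on the same support.

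A direct computation in light-cone coordinates yields $|Z^J \Box_m f| \lesssim M(1+t+\qv)^{-3}$ in the conical strip and a trivial bound on the compact slab. To obtain the $\varepsilon^2$ factor in the $q > 0$ regime I would not estimate $\Box_m f$ in isolation but pair it with $H^{\rho\sigma}\p_\rho\p_\sigma f$. Splitting $H = H^0 + H^1$ into its Schwarzschildean part $H^0 \sim M/r$ (from inverting $g = m + h^0 + h^1$) and a remainder $H^1$, the $H^0 \p^2 f$ contribution obeys $|Z^{J_1} H^0|_\UU \cdot |\p^2 Z^{J_2} f| \lesssim M^2(1+t+\qv)^{-4}$ on the support of $h^0$ (where $r \sim 1+t+\qv$), which absorbs into the required $\varepsilon^2(1+t+\qv)^{-4+\delta}(1+\qv)^{-\delta}$ bound. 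The Minkowski piece is absorbed by rewriting $\tbox_g(M/r) = H^{ij}\p_i\p_j(M/r)$ on the region $\chi(r)\chi(r/t) = 1$, so that $\tbox_g$ applied to the Schwarzschild profile is manifestly quadratic in $M$.

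For the $H^1 \p^2 f$ contribution, when $|I| \leq N-2$ the weak decay of Corollary \ref{corol-9.4} supplies $|Z^J H^1|_\UU \lesssim \varepsilon(1+t+\qv)^{-1+\delta}$, yielding $|H^1 \p^2 f| \lesssim \varepsilon M(1+t+\qv)^{-4+\delta} \leq \varepsilon^2(1+t+\qv)^{-4+\delta}$. For $|I|$ up to $N$ the weak decay is unavailable for the highest-order $H^1$ derivatives, and these are retained in the form $\frac{C\varepsilon}{(1+t+\qv)^3}\sum_{|J|\leq|I|}|Z^J h^1|_\UU$, producing the correction term in the second estimate. For $q < 0$ we have $r > t$, so $\chi(r/t) = 1$ everywhere; outside the compact region $\chi(r)\chi(r/t) = 1$, hence $\Box_m f = 0$ and only $H\p^2(M/r) \lesssim \varepsilon M/(1+t+\qv)^3$ remains, matching the claimed $C\varepsilon(1+t+\qv)^{-3}$ bound.

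The main obstacle is obtaining the gain of $\varepsilon = M$ in the $q > 0$ estimate, since a naive bound on $\Box_m f$ alone yields only $M(1+t+\qv)^{-3}$, not the desired $\varepsilon^2(1+t+\qv)^{-4+\delta}$. The resolution is to always couple $\Box_m f$ with the $H^0\p^2 f$ term on the region $\chi(r)\chi(r/t) = 1$, exploiting that $\Box_m(M/r) = 0$ turns $\tbox_g$ applied to the Schwarzschild profile into the manifestly $M^2$-small expression $H^{ij}\p_i\p_j(M/r)$. Carefully tracking the $M$ factors through the higher-order $Z^I$ derivatives and separating the conical-strip from the compact near-origin contributions is the main bookkeeping.
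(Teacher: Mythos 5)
Your decomposition $\tbox_g h^0 = \Box_m h^0 + H^{\rho\sigma}\p_\rho\p_\sigma h^0$, your identification of the support of $\Box_m h^0$ as the compact slab plus the conical strip $\{r/t\in[1/2,3/4]\}$, your size estimate $|Z^J\Box_m h^0|\lesssim M(1+t+\qv)^{-3}$ on the strip, and your treatment of the $H^0\p^2 h^0$ and $H^1\p^2 h^0$ contributions are all correct, and you correctly flag the strip term as the main obstacle. The gap is in your proposed resolution. The identity $\tbox_g(M/r)=H^{\rho\sigma}\p_\rho\p_\sigma(M/r)$ uses $\Box_m(M/r)=0$, which holds only where $\chi(r)\chi(r/t)$ is locally constant equal to $1$ -- and on that set $\Box_m h^0$ already vanishes, so nothing is gained. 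On the transition strip, $\chi(r/t)$ is genuinely varying, $\Box_m h^0$ is a stand-alone contribution of size $M(1+t+\qv)^{-3}$, and it is summed with, not cancelled against, $H\p^2 h^0$. There is no second factor of $M$ or $\varepsilon$ available there: the strip term is $\mathcal{O}(M)$, full stop. ``Pairing'' cannot convert it into an $\mathcal{O}(\varepsilon^2)$ quantity.

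This leaves your proof unable to establish the stated $q>0$ bound on the strip, since there $\qv\sim 1+t+\qv\sim t$ and the claimed right-hand side $\varepsilon^2(1+t+\qv)^{-4+\delta}(1+\qv)^{-\delta}\sim\varepsilon^2 t^{-4}$ is strictly smaller than your correct $M t^{-3}$ once $t\gtrsim\varepsilon$. Before attempting a repair, notice that this apparent contradiction is a signal of a sign-convention mismatch rather than of a wrong estimate: in \cite{LR:04} the convention is $q=r-t$, so the strip (which has $t>r$) lies in their $q<0$ case, where the matching bound is $C\varepsilon(1+t+\qv)^{-3}$ -- exactly what your computation produces -- while the $\varepsilon^2(1+t+\qv)^{-4+\delta}(1+\qv)^{-\delta}$ estimate is for their $q>0$ exterior region, where $\chi\equiv 1$, $\Box_m h^0\equiv 0$, and only the genuinely quadratic $H\p^2 h^0$ remains (there your ``pairing'' observation is the correct mechanism, and is just the decomposition $H=H^0+H^1$). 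The present paper writes $q:=t-r$ in \eqref{eq:intro-weight} but reproduces Lemma 9.9 from \cite{LR:04} verbatim, so the $q>0$/$q<0$ labels should be read with \cite{LR:04}'s opposite sign convention (or equivalently with the two cases interchanged). You should (i) drop the coupling argument, (ii) handle the strip directly, matching it against the $\varepsilon(1+t+\qv)^{-3}$ branch of the estimate, and (iii) reserve the $H^0$/$H^1$ split and the weak decay from Corollary \ref{corol-9.4} for the region $\chi(r)\chi(r/t)\equiv 1$, where the $\varepsilon^2$ branch is the one to prove.
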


\section{Decay Estimates -- Part II} \label{section:decay-2}
In this section we will prove improved decay estimates for $\p Z^I W$, valid only for a smaller number of $\mathcal{Z}$ vector fields, namely for $\I \leq N/2 + 2$. We first obtain estimates for $\vert \p W \vert$ and $\vert \p W \vert_G$ given in Proposition \ref{prop-10.1a}, followed by estimates of $\p Z^I h$ for $\I \leq 1$ in Proposition \ref{prop-10.1b}. 

\begin{lemma}[Modified Lemma 10.6 from \cite{LR:04}] \label{lemma-10.6} 
Let $(h_\mn, \psi_k)$ be a local-in-time solution of \eqref{eq:general-pde} such that \eqref{eq:growth-control-h1} and \eqref{eq:wave-coord} hold. Let $F_\mn$ and $F_k$ be as defined in \eqref{eq:general-pde-F}. Then
\begin{align*}
(1+t) \Vert \p W_G (t, \cdot) \Vert_{L^\infty} & \leq C \varepsilon + C \varepsilon \int_0^t (1+\tau)^{\delta-1/2} \Vert \p W(\tau, \cdot)  \Vert_{L^\infty} d \tau \,, \\
(1+t) \Vert \p W (t, \cdot) \Vert_{L^\infty} & \leq C \varepsilon + C \int_0^t \Big\lbrace \varepsilon (1+\tau)^{\delta - 1/2} \Vert  \p W ( \tau, \cdot)  \Vert_{L^\infty} + (1+\tau) \Vert \p W_G (\tau, \cdot) \Vert^2_{L^\infty} \Big\rbrace d \tau \,.
\end{align*}
\end{lemma}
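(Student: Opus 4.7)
The plan is to apply Corollary \ref{corol-7.2} to the equations for $h^1_\mn$ and $\psi_k$ separately, choosing the decay weight $\varpi \equiv 1$ (i.e.\ taking $\gamma'=-1$ and $\mu'=1/2$ in \eqref{eq:def-varpi}), so that the coefficient $\alpha=\max(1+\gamma',1/2-\mu')$ vanishes and the term $\varepsilon'\alpha\|\varpi\p\phi\|_{L^\infty}$ drops out. Since $(1+t+\qv)\geq 1+t$, passing to $\sup_x$ on both sides of the pointwise bound in Corollary \ref{corol-7.2} converts the LHS into $(1+t)\|\p W^1\|_{L^\infty}$ in the $\mathcal{UU}$, $\mathcal{TU}$, or $\K$ component as appropriate. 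The contribution of $\p h^0$ to $\|\p W\|_{L^\infty}$ is explicit and bounded by $C\varepsilon(1+t)^{-2}$, so including it on the left only contributes $C\varepsilon$ to the constant.

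To obtain the first inequality (the $W_G$ estimate), contract the pointwise bound with $T^\mu U^\nu$, $T\in\mathcal{T}$, $U\in\mathcal{U}$, and apply the corresponding estimate to each $\psi_k$. The key source term is controlled via Lemma \ref{lemma-10.5}: using
$$ \vert F \vert_\TU + \vert F \vert_\K \lesssim \varepsilon\,\tau^{-3/2+\delta}\vert \p W \vert, $$
the factor $(1+\tau)\Vert\,\vert F(\tau,\cdot)\vert_{\TU}\,\Vert_{L^\infty(D_\tau)}$ becomes $\varepsilon(1+\tau)^{\delta-1/2}\|\p W(\tau,\cdot)\|_{L^\infty}$, which is precisely the integrand appearing in the claimed bound. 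For the second inequality, I would contract with arbitrary $U,V\in\mathcal{U}$ and invoke instead
$$ \vert F \vert_\UU \lesssim \varepsilon\,\tau^{-3/2+\delta}\vert \p W \vert + \vert \p W \vert_G^2, $$
so multiplication by $(1+\tau)$ produces the two integrands $\varepsilon(1+\tau)^{\delta-1/2}\|\p W\|_{L^\infty}$ and $(1+\tau)\|\p W_G\|^2_{L^\infty}$.

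It remains to show that every other term contributed by Corollary \ref{corol-7.2} is absorbed into the constant $C\varepsilon$ on the right-hand side. The initial data piece $\sup_{0\leq \tau\leq t}\sum_{\I\leq 1}\|Z^I\phi\|_{L^\infty}$ is bounded by $C\varepsilon$ using the weak decay estimate \eqref{eq:corol-9.4-Zh} from Corollary \ref{corol-9.4} (and the explicit form of $h^0$ for the $W^0$ contribution). For the lower-order integrand $\sum_{\I\leq 2}(1+\tau)^{-1}\|Z^I\phi\|_{L^\infty(D_\tau)}$, inside $D_\tau$ we have $\qv\lesssim\tau$, so Corollary \ref{corol-9.4} gives $|Z^I W^1|\lesssim\varepsilon\,\tau^{-1/2+\delta}$ and $|Z^I h^0|\lesssim\varepsilon\,\tau^{-1}$; in both cases the integrand is $O(\varepsilon\,\tau^{-3/2+\delta})$, which integrates to $C\varepsilon$. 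Finally, the inhomogeneity $F^0=\tbox_g h^0$ of the equation for $h^1$ is estimated by Lemma \ref{lemma-9.9}; it decays like $\varepsilon^2(1+\tau)^{-3}$ in the relevant region, so $(1+\tau)\|F^0\|_{L^\infty(D_\tau)}$ also integrates to $C\varepsilon^2\leq C\varepsilon$.

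The main mild obstacle is bookkeeping rather than a conceptual difficulty: one must verify that both the Schwarzschildean part $h^0$ (through its contribution to $\|\p W\|_{L^\infty}$, to the $\sup_\tau\|Z^I\phi\|_{L^\infty}$ term, and through the inhomogeneity $F^0$) and the $\psi_k$ equations (which contribute no $P^1$ or $P^2$ non-null quadratic interaction, only $Q_k$ and cubic $G_k$ terms) fit into the same integral inequality. In each case the estimates of Lemma \ref{lemma-10.5} and Lemma \ref{lemma-9.9} together with the weak decay of Corollary \ref{corol-9.4} suffice, and the separate pieces combine additively. No bootstrap or Gronwall argument is needed at this stage: these are linear-in-$\|\p W\|$ integral inequalities that will be closed in the following proposition.
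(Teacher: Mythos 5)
Your proposal is correct and follows exactly the approach of the paper's (very terse) proof: apply Corollary \ref{corol-7.2} with $\varpi\equiv 1$ so that $\alpha=0$, bound the source terms via Lemma \ref{lemma-10.5}, and absorb the remaining terms (initial data, lower-order integrand, $F^0$, and the $h^0$ contribution to $\|\p W\|_{L^\infty}$) into $C\varepsilon$ using the weak decay of Corollary \ref{corol-9.4} and Lemma \ref{lemma-9.9}. Your fleshing out of the bookkeeping — in particular the verification that the $h^0$ pieces and the $\psi_k$ equations combine additively into the same integral inequality — is accurate and consistent with the paper.
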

\begin{proof}
Apply Corollary \ref{corol-7.2} with $\varpi(q) =1$, ie, $\alpha = 0$, and estimate $\vert F \vert_\TU, \vert F \vert_\K$ and $\vert F \vert_\UU$ with Lemma \ref{lemma-10.5} and the weak decay estimates of Corollary \ref{corol-9.4}.
\end{proof}

The following algebraic lemma is needed for Proposition \ref{prop-10.1a}. 
\begin{lemma}[Lemma 10.7 from \cite{LR:04}] \label{lemma-10.7}
If $b(t) \geq 0$ and $c(t) \geq 0$ satisfy
\begin{align*}
b(t) & \leq C \varepsilon \left( \int_0^t (1+s)^{-1-a} c(s) ds + 1 \right) \,, \\
c(t) & \leq C \varepsilon \left( \int_0^t (1+s)^{-1-a} c(s) ds + 1 \right) + C \int_0^t (1+s)^{-1} b^2(s) ds \,,
\end{align*}
for some positive constants such that $a \geq C^2 \varepsilon$ and $a \geq 4 C \varepsilon / (1-2C\varepsilon)$, then
$$ b(t) \leq 2 C \varepsilon \,, \quad \text{ and } \quad c(t) \leq 2C \varepsilon (1+ a \ln (1+t)) \,.$$
\end{lemma}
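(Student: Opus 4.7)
The plan is a continuous induction (bootstrap) argument exploiting the self-improving structure of the two coupled integral inequalities. Assuming $b, c$ are continuous (as in the intended application, where they represent $L^\infty$ norms of a local-in-time solution), one has $b(0), c(0) \leq C\varepsilon$, so the bootstrap time
$$T^* := \sup \left\{ T \geq 0 : b(t) \leq 2C\varepsilon \text{ and } c(t) \leq 2C\varepsilon(1 + a \ln(1+t)) \text{ for all } t \in [0, T] \right\}$$
is strictly positive. The target is to show $T^* = \infty$; assuming the contrary, continuity forces equality in one of the bootstrap bounds at $T^*$, so it suffices to reclose both bounds on $[0, T^*]$ with a strict gain, yielding a contradiction.

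The key computation is a weighted integral estimate. Using the bootstrap bound on $c$ together with the elementary inequalities $\int_0^t (1+s)^{-1-a}\, ds \leq 1/a$ and (via the substitution $u = \ln(1+s)$) $\int_0^t (1+s)^{-1-a} \cdot a\ln(1+s)\, ds \leq a \int_0^\infty u e^{-au}\, du = 1/a$ gives
$$\int_0^t (1+s)^{-1-a} c(s)\, ds \leq \frac{4C\varepsilon}{a},$$
uniformly in $t$, because the polynomial decay dominates the logarithm. Separately, the bootstrap on $b$ gives $\int_0^t (1+s)^{-1} b^2(s)\, ds \leq 4C^2\varepsilon^2 \ln(1+t)$. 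Substituting back into the hypotheses yields the improved estimates
$$b(t) \leq C\varepsilon\left(1 + \frac{4C\varepsilon}{a}\right), \qquad c(t) \leq C\varepsilon\left(1 + \frac{4C\varepsilon}{a}\right) + 4C^3\varepsilon^2 \ln(1+t).$$

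The condition $a \geq 4C\varepsilon/(1-2C\varepsilon)$ translates, via elementary algebra, to $C\varepsilon(1 + 4C\varepsilon/a) \leq 2C\varepsilon$ with strict gain, closing the bootstrap on $b$ and on the constant part of $c$. The condition $a \geq C^2\varepsilon$ (with universal constants absorbed) ensures $4C^3\varepsilon^2 \leq 2C\varepsilon \cdot a$, so that the coefficient of $\ln(1+t)$ in the estimate for $c$ stays bounded by $2C\varepsilon \cdot a$, closing the second bootstrap. The main obstacle is algebraic bookkeeping rather than any deep analytic idea: the coupling between $b$ and $c$ sends a linear-in-$c$ feedback into $b$ via the weighted integral, and a quadratic-in-$b$ feedback into $c$ via the unweighted $(1+s)^{-1}$ integral, and one must verify that the two smallness conditions on $a$ simultaneously absorb both feedbacks. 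No delicate cancellations are required — only the uniform bound $\int_0^\infty u e^{-au}\, du = 1/a^2$, which is what prevents the logarithmic growth in the bootstrap bound for $c$ from destroying the uniform estimate controlling $b$.
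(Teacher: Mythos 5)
The paper cites Lemma 10.7 from \cite{LR:04} without reproducing its proof, so there is no in-paper argument to compare against; the assessment must be on the merits. Your bootstrap/continuous-induction strategy, hinging on the weighted-integral estimate $\int_0^\infty (1+s)^{-1-a}\bigl(1 + a\ln(1+s)\bigr)\,ds = 2/a$, is the natural (and almost certainly the intended) route, and your computations of both feedback integrals are correct, as is the observation that $a \geq 4C\varepsilon/(1-2C\varepsilon)$ closes the $b$-bootstrap with strict gain.

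The one genuine weak spot is the closing of the logarithmic part of the $c$-bootstrap. Comparing the coefficient of $\ln(1+t)$ in your improved bound, $c(t) \leq C\varepsilon\bigl(1 + 4C\varepsilon/a\bigr) + 4C^3\varepsilon^2\ln(1+t)$, against the target $2C\varepsilon\bigl(1 + a\ln(1+t)\bigr)$ requires $4C^3\varepsilon^2 \leq 2C\varepsilon\, a$, i.e.\ $a \geq 2C^2\varepsilon$ --- twice the stated hypothesis $a \geq C^2\varepsilon$. The phrase ``with universal constants absorbed'' does not actually resolve this: the $C$ in the condition $a \geq C^2\varepsilon$ is the \emph{same} $C$ appearing in the inequalities, so there is no room for absorption without altering the lemma's statement; nor can the $t$-independent slack from the constant part of the improved $c$-bound, $C\varepsilon(1 + 4C\varepsilon/a) < 2C\varepsilon$, compensate for a deficit growing like $\ln(1+t)$. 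As written, your argument therefore establishes the conclusion under $a \geq 2C^2\varepsilon$ rather than $a \geq C^2\varepsilon$. This is most plausibly a harmless imprecision in the stated constant (not unusual in such lemmas, and immaterial for the application since $a = 1/2 - \delta$ there), but you should state explicitly the condition your argument actually uses rather than glossing over the discrepancy.
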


\begin{proposition}[Modified Proposition 10.1 from \cite{LR:04}] \label{prop-10.1a}
Let $(h_\mn, \psi_k)$ be a local-in-time solution of \eqref{eq:general-pde} such that \eqref{eq:growth-control-h1} and \eqref{eq:wave-coord} hold. Then
\begin{align*}
\vert \p W \vert_G & \leq C \varepsilon (1+t+\qv)^{-1} \,, \\
\vert \p W \vert & \leq C \varepsilon t^{-1} \ln t \,.
\end{align*}
\end{proposition}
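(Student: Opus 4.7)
The plan is to feed Lemma \ref{lemma-10.6} directly into the algebraic Lemma \ref{lemma-10.7}. Set
\[
b(t) := (1+t)\|\partial W_G(t,\cdot)\|_{L^\infty}, \qquad c(t) := (1+t)\|\partial W(t,\cdot)\|_{L^\infty},
\]
and put $a := \tfrac{1}{2} - \delta$. Because the bootstrap assumption \eqref{eq:growth-control-h1} fixes $\delta \in (0,1/4)$, we have $a > 1/4 > 0$, which will be exactly the exponent Lemma \ref{lemma-10.7} asks for.

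First I would rewrite the two inequalities from Lemma \ref{lemma-10.6} in terms of $b$ and $c$. Substituting $\|\partial W(\tau,\cdot)\|_{L^\infty} = c(\tau)/(1+\tau)$ and $\|\partial W_G(\tau,\cdot)\|_{L^\infty}^2 = b^2(\tau)/(1+\tau)^2$, and using $(1+\tau)^{\delta - 1/2}(1+\tau)^{-1} = (1+\tau)^{-1-a}$, one obtains
\begin{align*}
b(t) &\leq C\varepsilon + C\varepsilon \int_0^t (1+s)^{-1-a}\, c(s)\, ds, \\
c(t) &\leq C\varepsilon + C\varepsilon \int_0^t (1+s)^{-1-a}\, c(s)\, ds + C\int_0^t (1+s)^{-1}\, b^2(s)\, ds,
\end{align*}
which is exactly the hypothesis of Lemma \ref{lemma-10.7}.

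Next I would verify the two smallness requirements of Lemma \ref{lemma-10.7}, namely $a \geq C^2\varepsilon$ and $a \geq 4C\varepsilon/(1-2C\varepsilon)$. Since $a > 1/4$ is a fixed positive quantity and $\varepsilon \leq \varepsilon_0$ may be chosen arbitrarily small, both conditions hold for $\varepsilon_0$ small enough. Lemma \ref{lemma-10.7} then produces $b(t) \leq 2C\varepsilon$ and $c(t) \leq 2C\varepsilon(1 + a\ln(1+t))$, which translate back to
\[
\|\partial W_G(t,\cdot)\|_{L^\infty} \leq \frac{2C\varepsilon}{1+t}, \qquad \|\partial W(t,\cdot)\|_{L^\infty} \leq \frac{2C\varepsilon(1 + a\ln(1+t))}{1+t} \lesssim \varepsilon\, t^{-1}\ln t,
\]
for $t \geq 1$; the range $0 \leq t \leq 1$ is handled by absorbing constants into $C$.

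The only subtlety is the stronger $(1+t+|q|)^{-1}$ factor in the $|\partial W|_G$ bound (rather than just $(1+t)^{-1}$). In the wave zone $|q| \lesssim t$ the two weights are equivalent, so there is nothing to do. In the exterior region $|q| \gtrsim t$, I would instead invoke the weak decay estimate \eqref{eq:corol-9.4-pZh} of Corollary \ref{corol-9.4} with $\vert I \vert = 0$, which gives $|\partial W| \lesssim \varepsilon (1+t+|q|)^{-1+\delta}(1+|q|)^{-1-\delta'}$, and this is dominated by $\varepsilon (1+t+|q|)^{-1}$ since there $(1+|q|) \sim (1+t+|q|)$. Gluing the two regions gives the claimed pointwise bound. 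The main conceptual step is identifying $a = 1/2 - \delta$ and checking the smallness hypothesis of Lemma \ref{lemma-10.7}; beyond that the proof is essentially a one-line application of the coupled Gr\"onwall-type inequality, and I do not anticipate any serious obstacle.
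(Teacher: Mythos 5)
Your proposal is correct and follows exactly the paper's argument: it feeds Lemma \ref{lemma-10.6} into the algebraic Lemma \ref{lemma-10.7} with $a = 1/2 - \delta$, $b(t) = (1+t)\|\partial W_G\|_{L^\infty}$, $c(t) = (1+t)\|\partial W\|_{L^\infty}$, and checks the smallness hypotheses are met for $\varepsilon$ small. You have also, usefully, made explicit the step the paper's one-line proof leaves silent — namely that the $L^\infty$ bound only directly yields $(1+t)^{-1}$ decay, and to upgrade to the stated $(1+t+|q|)^{-1}$ factor for $|\partial W|_G$ in the exterior region one should invoke the weak decay estimate \eqref{eq:corol-9.4-pZh}, where $(1+|q|)\sim(1+t+|q|)$ makes that bound dominate.
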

\begin{proof}
Take $a = 1/2 - \delta$, $b(t) := (1+t) \Vert \p W_G (\tau, \cdot)  \Vert_{L^\infty}$ and $c(t) := (1+t) \Vert \p W (\tau, \cdot) \Vert _{L^\infty}$ in Lemma \ref{lemma-10.7}.
\end{proof}
In the original paper by Lindblad and Rodnianski, it was shown that all but one component of $\p h_\mn$ obeys a $(1+t+\qv)^{-1}$ decay rate. The `worst' component $\p h_{\lbar \lbar}$ has the slower $t^{-1} \ln t$ decay rate. This was the decay rate originally suspected for all the components. Our choice of additional non-linearity, as discussed in Remark \ref{remark:loizelet}, has meant that the new variables $\psi_k$ pick up the better of these two decay rates.  

We now turn to estimates of the best components of $\p Z^I h$ for $\I \leq 1$, given in Proposition \ref{prop-10.1b}. These `best' components have much better decay rates than the full set given before in Proposition \ref{prop-10.1a}. Note however that since we are using control obtained from the wave-coordinate condition condition, we only obtain estimates on $h_\mn$ and not $\psi_k$. Thus Lemma \ref{lemma-10.4} and Proposition \ref{prop-10.1b} are completely unchanged from \cite{LR:04}. 
 
 \begin{lemma}[Lemma 10.4 from \cite{LR:04}] \label{lemma-10.4}
Let $(h_\mn, \psi_k)$ be a local-in-time solution of \eqref{eq:general-pde} such that \eqref{eq:growth-control-h1} and \eqref{eq:wave-coord} hold. Then
\begin{align*}
\sum_{\I \leq k} \vert \p Z^I h \vert_\LL + \sum_{\I \leq k-1} \vert \p Z^I h \vert_\LT & \lesssim \sum_{\I \leq k-2} \vert \p Z^I h \vert_\UU \\ & \qquad + \Bigg\lbrace \begin{array}{ll}
\varepsilon (1+t+\qv)^{-2-2 \delta} (1+\qv)^{-2 \delta} \,, & q>0 \\
\varepsilon (1+t+\qv)^{-2-2 \delta} (1+\qv)^{-1/2-\delta} \,, & q<0 
\end{array} \,,
\end{align*}
\begin{align*}
\sum_{\I \leq k} \vert Z^I h \vert_\LL + \sum_{\I \leq k-1} \vert Z^I h \vert_\LT & \lesssim \sum_{\I \leq k-2} \int_{s,\omega = const} \vert Z^I h \vert_\UU \\ & \qquad + \Bigg\lbrace \begin{array}{ll}
\epsilon (1+t+\qv)^{-1} & \,, q>0 \\
\epsilon (1+t+\qv)^{-1} (1+\qv)^{1/2+\delta} & \,, q<0 
\end{array}\,,
\end{align*}
where here the sums over $k-2$ (resp. $k\leq 1$) are absent if $k \leq 1$ (resp. $k=0$).
\end{lemma}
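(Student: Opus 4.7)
The plan is to derive both estimates from Proposition \ref{prop-8.2}, the relation between $H$ and $h$, the commutator identity for $\pgood$, and the weak decay estimates of Corollary \ref{corol-9.4}. The second estimate will then follow from the first by integration along outgoing characteristics.

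First I would apply Proposition \ref{prop-8.2} to the inverse perturbation $H^\mn = g^\mn - m^\mn$. For $\I \leq k$ (on the $\LL$ part) and $\I \leq k-1$ (on the $\LT$ part) this gives control of $\vert \p Z^I H \vert_\LL$ and $\vert \p Z^I H \vert_\LT$ by $\sum_{\J \leq k}\vert \pgood Z^J H \vert_\UU$, by $\sum_{\J \leq k-2}\vert \p Z^J H \vert_\UU$, and by quadratic cross terms $\sum_{\vert J_1 \vert+\vert J_2\vert\leq k}\vert Z^{J_1}H \vert_\UU \vert \p Z^{J_2}H \vert_\UU$. The relation $H = -h + \mathcal{O}(h^2)$ with smooth dependence on $h$ allows me to exchange $H$ with $h$ at the price of extra quadratic and cubic terms in $(Z^J h, \p Z^J h)$ for $\J \leq k$, all controlled by the weak decay estimates. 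The factor $(1+t+\qv)^{-1}$ gained from the identity $\vert \pgood \phi \vert \lesssim (1+t+\qv)^{-1}\sum_{Z\in\mathcal{Z}}\vert Z \phi\vert$ then converts the $\pgood Z^J H$ terms into $Z^{J'} h$ terms with one additional commutator field and an extra factor of $(1+t+\qv)^{-1}$.

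Next I would split $h = h^0 + h^1$ and apply Corollary \ref{corol-9.4} to the resulting commuted quantities (using $\I \leq N-2$ throughout, which is guaranteed in the regime where the decay estimates are valid). For the $h^1$ part one has $\vert Z^{J'} h^1 \vert \lesssim \varepsilon (1+t+\qv)^{-1+\delta}(1+\qv)^{-\gamma}$ in $q > 0$ and $\lesssim \varepsilon (1+t+\qv)^{-1+\delta}(1+\qv)^{1/2}$ in $q < 0$, while $h^0 = \chi(r)\chi(r/t) M/r \cdot \delta_\mn$ gives explicit bounds $\vert Z^I h^0 \vert \lesssim \varepsilon (1+t+\qv)^{-1}$ after commuting. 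Combining these with the $(1+t+\qv)^{-1}$ prefactor from $\pgood$ and with the weak decay bounds on $\p Z^J h$ and $\vert Z^{J_1}h\vert\vert \p Z^{J_2}h\vert$ yields the advertised error term, splitting into the $q > 0$ and $q < 0$ cases according to the distinct behaviour of the weight in the two regions.

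For the second estimate I integrate the just-established bound on $\p Z^I h$ along the integral curves of $\p_q = \tfrac{1}{2}(\p_r - \p_t)$, which are lines of constant $s = t+r$ and $\omega = x/\vert x\vert$. In $q > 0$ I integrate from the initial hypersurface $t = 0$, where the asymptotic-flatness assumption \eqref{eq:intro-decay-id} forces $\liminf_{\vert x\vert\to\infty}\vert Z^I h^1(0,x)\vert = 0$ and $Z^I h^0$ is explicit; in $q < 0$ I integrate from the cone $q = 0$, using the just-derived $q > 0$ bound as the boundary value. A direct computation of the $q$-integrals of the improved error term reproduces the two cases stated in the lemma, and the resulting $\int_{s,\omega=\text{const}}\vert Z^I h\vert_\UU$ term is simply the integral of the $\vert \p Z^I h\vert_\UU$ control along the same characteristic.

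The main obstacle is the careful bookkeeping in the last two steps: one must check that both the quadratic $H$-to-$h$ corrections and the cubic/commutator terms generated when turning $\pgood$ into $Z$ derivatives are actually absorbed into the stated error weights, and that the characteristic integrals in the $q<0$ region (where $(1+\qv)$ grows) produce the $(1+\qv)^{1/2+\delta}$ growth factor rather than a worse one. Once this bookkeeping is done, the decay is forced by Corollary \ref{corol-9.4} and the proof is complete.
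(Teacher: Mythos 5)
Your approach is essentially the same as the one in Lindblad--Rodnianski that the paper cites without re-proving: apply Proposition \ref{prop-8.2} to $H$, use $H = -h + \mathcal{O}(h^2)$ to pass to $h$, control the tangential-derivative and quadratic terms via the weak decay of Corollary \ref{corol-9.4} (splitting $h = h^0 + h^1$), and then integrate the first inequality in $q$ along curves of constant $s,\omega$ for the second. The bookkeeping you flag is indeed the bulk of the work, but the structure is right; in particular, the reason the $\LT$ sum runs only to $k-1$ is precisely because Proposition \ref{prop-8.2} loses one fewer $Z$ on the $\LT$ part than on the $\LL$ part, which you implicitly use. Two small remarks: if you pass from $\pgood Z^J H$ to $Z^{J'}h$ via the pointwise identity $\vert\pgood\phi\vert \lesssim (1+t+\qv)^{-1}\sum_{\vert I\vert = 1}\vert Z^I\phi\vert$ you spend one more $Z$, so you should instead invoke the $\pgood$ estimate \eqref{eq:corol-9.4-pgoodZh} directly to avoid tightening the admissible range of $k$; and the integrated term your argument naturally produces is $\sum_{\I\leq k-2}\int_{s,\omega=\mathrm{const}}\vert\p Z^I h\vert_\UU$ (with the derivative inside), consistent with your last sentence, so you have effectively corrected a typo in the statement rather than matched it literally.
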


\begin{proposition}[Proposition 10.1 from \cite{LR:04}] \label{prop-10.1b}
Let $(h_\mn, \psi_k)$ be a local-in-time solution of \eqref{eq:general-pde} such that \eqref{eq:growth-control-h1} and \eqref{eq:wave-coord} hold. Then
\begin{align*}
\vert \p h \vert_{\mathcal{L} \mathcal{T}}  + \vert \p Z h \vert_\LL & \leq \Bigg\lbrace \begin{array}{ll}
\varepsilon (1+t+\qv)^{-2 + 2 \delta}(1+q)^{-\delta} & \,, q>0 \\
\varepsilon (1+t+\qv)^{-2 + 2 \delta}(1+q)^{1/2} & \,, q<0 
\end{array} \,, \\
\vert  h \vert_{\mathcal{L} \mathcal{T}}  + \vert Z h \vert_\LL & \leq \Bigg\lbrace \begin{array}{ll}
\varepsilon (1+t+\qv)^{-1} & \,, q>0 \\
\varepsilon (1+t+\qv)^{-1}(1+q)^{1/2+\delta} & \,, q<0 
\end{array} \,.
\end{align*}
\end{proposition}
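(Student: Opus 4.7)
The plan is to observe that Proposition \ref{prop-10.1b} is an immediate specialisation of Lemma \ref{lemma-10.4} to the case $k=1$. Recall that the first display of Lemma \ref{lemma-10.4} bounds $\sum_{\I \leq k} \vert \p Z^I h \vert_\LL + \sum_{\I \leq k-1} \vert \p Z^I h \vert_\LT$, while the second bounds $\sum_{\I \leq k} \vert Z^I h \vert_\LL + \sum_{\I \leq k-1} \vert Z^I h \vert_\LT$. Setting $k=1$ in each produces exactly the combinations $\vert \p h \vert_\LT + \vert \p Z h \vert_\LL$ and $\vert h \vert_\LT + \vert Z h \vert_\LL$ appearing on the left-hand side of the proposition.

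By the convention stated at the end of Lemma \ref{lemma-10.4}, the sums $\sum_{\I \leq k-2} = \sum_{\I \leq -1}$ on the right-hand side are declared absent when $k \leq 1$. Hence the right-hand sides of both displays collapse to the purely explicit, $\varepsilon$-weighted, case-split decay in $(1+t+\qv)$ and $(1+\qv)$ given in the last bracket of Lemma \ref{lemma-10.4}. For the first display at $k=1$ this reads $(1+t+\qv)^{-2-2\delta}(1+\qv)^{-2\delta}$ when $q>0$ and $(1+t+\qv)^{-2-2\delta}(1+\qv)^{-1/2-\delta}$ when $q<0$, which are strictly stronger than the $(1+t+\qv)^{-2+2\delta}(1+\qv)^{-\delta}$ respectively $(1+t+\qv)^{-2+2\delta}(1+\qv)^{1/2}$ demanded by Proposition \ref{prop-10.1b} (so we simply discard the unneeded decay). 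For the second display at $k=1$ the rates match the claim verbatim. No further argument is required.

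It is worth noting why Proposition \ref{prop-10.1b} is purely about the metric perturbation $h$, in contrast to Proposition \ref{prop-10.1a} which also records bounds on the $\psi_k$ components of $\p W$. The improved $\mathcal{L}\mathcal{L}$ and $\mathcal{L}\mathcal{T}$ estimates used here ultimately come from Lemma \ref{lemma-8.1} and Proposition \ref{prop-8.2}, which exchange a full derivative $\p$ of the `good' components of $H$ for a tangential derivative $\pgood$ on arbitrary components plus quadratic lower-order terms; this exchange is powered solely by the wave-coordinate condition \eqref{eq:wave-coord}, a statement about $g$, so no analogous identity is available to improve components involving $\psi_k$.

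The only `obstacle' — and it is entirely one of bookkeeping rather than of analysis — is checking that the absent sums $\sum_{\I \leq -1}$ truly eliminate the need to close any nested hierarchy of estimates. For $k \geq 2$ one would have to iterate (controlling a top-order $\UU$-norm by a lower-order $\LT$-norm, and feeding this back into Lemma \ref{lemma-10.4}), but at $k=1$ the hierarchy collapses and the proposition is a one-line consequence of Lemma \ref{lemma-10.4}. All substantive work has been carried out in deriving that lemma, which in turn rests on Proposition \ref{prop-8.2}.
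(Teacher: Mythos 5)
Your proof is correct and takes the intended route: Lemma \ref{lemma-10.4} is stated in the paper immediately before Proposition \ref{prop-10.1b} precisely so that the $k=1$ specialisation (with the $\sum_{\I \leq -1}$ terms absent by convention) yields the claim, and the decay rates in Lemma \ref{lemma-10.4} dominate the stated rates since $\delta>0$. The only minor notational caveat is that the $\varepsilon$ on the right-hand side of the Proposition should be read as $C\varepsilon$ for a universal constant, matching both the $\lesssim$ of Lemma \ref{lemma-10.4} and the $C\varepsilon$ appearing in Proposition \ref{prop-10.1a}.
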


We now continue to the crux of this section, and derive `stronger' estimates of all components of $\p Z^I W$ for the restricted range $\I \leq N/2 + 2$. The proof relies on using Corollary \ref{corol-7.2} again, this time with a non-trivial weight $\varpi$. Recall that in Corollary \ref{corol-7.2} there was a spacetime integral of $ \vert \varpi F \vert$ where $F$ is the particular inhomogeneity being considered. 
The inhomogeneity we consider comes from commuting $Z^I$ through the PDE \eqref{eq:general-pde}. This leads to the system
\begin{equation} \begin{split}
\tbox_g Z^I h^1_\mn &= D^I h^1_\mn + \hat{Z}^I F_\mn - \hat{Z}^I F^0_\mn  \,, \\
\tbox_g Z^I \psi_k &= D^I \psi_k + \hat{Z}^I F_k \,.
\end{split} \label{eq:prop-10.2-pde-system}
\end{equation}
As in Lindblad and Rodnianski, we have introduced the shifted vector field $\hat{Z} := Z + c_Z$ where $c_Z$ is a constant defined by $\p_\mu Z_\nu + \p_\nu Z_\mu = c_Z m_\mn$. In particular, $c_Z = 0$ except when $Z = S$ in which case $c_S = 2$. The commutator-like term $D^I$ is defined by $$ D^I := \tbox_g Z - \hat{Z} \tbox_g \,. $$
The estimates of $D^I h^1_\mn$ and $D^1 \psi_k$ can be found in Proposition \ref{prop-5.3}. On the right hand side of \eqref{eq:prop-10.2-pde-system} are the terms $\hat{Z}^I F_\mn$ and $\hat{Z}^I F_k$ which are estimated in the following lemma. 

\begin{lemma}[Modified Lemma 10.8 from \cite{LR:04}] \label{lemma-10.8} Let $(h_\mn, \psi_k)$ be a local-in-time solution of \eqref{eq:general-pde} such that \eqref{eq:growth-control-h1} and \eqref{eq:wave-coord} hold. Let $F_\mn$ and $F_k$ be as given in \eqref{eq:general-pde-F}. Then for $\I \leq N-2$ we have
\begin{align*}
\vert Z^I F \vert_\K + \vert Z^I F \vert_\UU & \lesssim  \varepsilon (1+t)^{-1}\sum_{\vert J \vert \leq \vert I \vert} \vert \p Z^J W \vert +  \sum_{\vert J \vert + \vert K \vert \leq \vert I \vert, \vert K \vert < \vert I \vert} \vert \p Z^J W \vert \vert \p Z^K W \vert  \,,
\end{align*}
where it is understood that the term with $|K| < \I$ vanishes if $\I = 0$. 
\end{lemma}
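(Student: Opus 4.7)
The starting point is Proposition~\ref{prop-9.8}, which expresses $\vert Z^I F \vert_\UU$ and $\vert Z^I F \vert_\K$ as sums of four kinds of summands: the ``good'' bilinear products $\vert \p Z^J W \vert_G \vert \p Z^K W \vert_G$ with $\vert J\vert +\vert K\vert \leq \vert I\vert$, the tangential-derivative bilinears $\vert \pgood Z^J W \vert \vert \p Z^K W \vert$ with the same constraint, the lower-order product $\vert \p Z^J h \vert_\UU \vert \p Z^K h \vert_\UU$ with $\vert J\vert +\vert K\vert \leq \vert I\vert -2$, and the cubic remainder $\vert Z^{J_3} W \vert \vert \p Z^{J_2} W \vert \vert \p Z^{J_1} W \vert$ with $\sum\vert J_i\vert \leq \vert I\vert$. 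My task is to show that each summand is absorbed into one of the two sums on the right-hand side of the conclusion.

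The guiding principle is a pigeonhole/Leibniz split: for any bilinear term whose indices sum to at most $\vert I\vert$, at least one of the two indices is at most $\vert I\vert/2 \leq (N-2)/2 \leq N-3$, and the corresponding factor can be placed in $L^\infty$ using the weak decay estimates of Corollary~\ref{corol-9.4}. For the $\pgood$-type product I would apply $\vert \pgood Z^J W \vert \lesssim \varepsilon(1+t+\vert q\vert)^{-2+\delta}$ (with its $q$-weight absorbed), producing a factor of $\varepsilon(1+t)^{-1}$ times $\vert \p Z^K W\vert$, which contributes directly to the linear sum. For the $G$-type products, the $G$-structure affords the sharper bound $\vert \p Z^J W\vert_G \lesssim \varepsilon (1+t+\vert q\vert)^{-1}$ from Proposition~\ref{prop-10.1a} when $\vert J\vert = 0$; in every other case both indices are strictly less than $\vert I\vert$ so the product is placed in the bilinear sum. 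The $\vert \p Z^J h\vert_\UU \vert \p Z^K h\vert_\UU$ terms, whose indices already sum to at most $\vert I\vert - 2$, go straight into the bilinear sum.

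For the cubic remainder, at most one of the three indices can equal $\vert I\vert$, so at least two indices are $\leq \vert I\vert/2 \leq N-3$. Applying Corollary~\ref{corol-9.4} to those two factors gives $\vert Z^{J_3}W\vert,\, \vert \p Z^{J_2}W\vert \lesssim \varepsilon$, which reduces the cubic to $C\varepsilon^{2}\,\vert \p Z^{J_1}W\vert$, absorbed into the linear sum with a factor of $\varepsilon$ to spare.

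The main obstacle is the mismatch between the target rate $\varepsilon(1+t)^{-1}$ and the rate $\varepsilon(1+t)^{-1+\delta}$ that comes directly from Corollary~\ref{corol-9.4}. Closing this gap is precisely where the sharper $G$-norm estimates of Proposition~\ref{prop-10.1a} must be used: within each bilinear summand of Proposition~\ref{prop-9.8}, the factor selected for $L^\infty$ must carry the improved $G$-structure so that the $\varepsilon(1+t+\vert q\vert)^{-1}$ bound is available, rather than the weaker weak-decay bound. The careful bookkeeping required to arrange this, and to ensure that the pigeonhole split always places the low-index factor on the one with the better structural norm, is the principal technical point.
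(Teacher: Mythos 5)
Your overall scheme matches the paper's one-line proof (Proposition~\ref{prop-9.8} for the structure of $Z^I F$, Corollary~\ref{corol-9.4} for the factor placed in $L^\infty$, and the first estimate of Proposition~\ref{prop-10.1a} for the $G$-component when no $Z$'s land on it), and your treatment of the two bilinear families and the lower-order $\vert\p Z^J h\vert_\UU\vert\p Z^K h\vert_\UU$ family is correct in spirit. However, your handling of the cubic remainder contains a genuine gap.

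You reduce $\vert Z^{J_3} W\vert\,\vert\p Z^{J_2} W\vert\,\vert\p Z^{J_1} W\vert$ to $C\varepsilon^{2}\vert\p Z^{J_1} W\vert$ by bounding the two low-index factors merely by $\varepsilon$, and then claim this is ``absorbed into the linear sum with a factor of $\varepsilon$ to spare.'' That absorption fails: the linear sum carries the weight $\varepsilon(1+t)^{-1}$, and $\varepsilon^{2}\not\lesssim\varepsilon(1+t)^{-1}$ uniformly in $t$. The extra power of $\varepsilon$ cannot substitute for the missing time decay. What is actually needed is to \emph{retain} the spacetime decay Corollary~\ref{corol-9.4} provides on both low-index factors. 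Since at least two of $\vert J_1\vert,\vert J_2\vert,\vert J_3\vert$ are $\leq\I/2\leq N-3$, one may take (after relabeling) $\vert J_3\vert,\vert J_2\vert\leq N-3$ and use
\[
\vert Z^{J_3} W\vert\lesssim \varepsilon(1+t+\qv)^{-1+\delta}(1+\qv)^{1/2}\,,\qquad
\vert\p Z^{J_2} W\vert\lesssim \varepsilon(1+t+\qv)^{-1+\delta}(1+\qv)^{-1/2}\,,
\]
whose product is $\lesssim \varepsilon^{2}(1+t+\qv)^{-2+2\delta}\leq \varepsilon^{2}(1+t)^{-1}$ because $2\delta<1$; then $\varepsilon^{2}(1+t)^{-1}\leq \varepsilon(1+t)^{-1}$, placing the cubic term in the linear sum. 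Without combining the decays of \emph{both} factors the $(1+t)^{-1}$ weight never appears and the estimate does not close. You should also spell out the corner case $\vert J\vert=\I=N-2$ in the $\pgood$-type product (where Corollary~\ref{corol-9.4} is not available on the $\pgood$ factor; there $\vert K\vert=0<\I$ and one simply uses $\vert\pgood\vert\leq\vert\p\vert$ to put the term in the bilinear sum), though this is a minor omission compared to the cubic issue.
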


\begin{proof}
The result follows from Proposition \ref{prop-9.8}, Corollary \ref{corol-9.4} and the first estimate in Proposition \ref{prop-10.1a}.
\end{proof}

\begin{proposition}[Modified Proposition 10.2 from \cite{LR:04}] \label{prop-10.2} Let $(h_\mn, \psi_k)$ be a solution of the generalised PDE \eqref{eq:general-pde}. 
Assume \eqref{eq:growth-control-h1} holds, $W^i$ is defined as in \eqref{eq:def-Wi-Yi} and $F_\mn$ and $F_k$ are defined as in \eqref{eq:general-pde-F}. 
Let $\gamma' < \gamma - \delta$ and $\mu' > \delta>0$ be fixed. Then there exist constants $M_k, C_k$ and $\varepsilon$ depending on $\gamma', \mu'$ and $\delta$ such that for all $\I = k \leq N/2 + 2$
\begin{align*}
\vert \p Z^I W^i \vert& \leq \Bigg\lbrace \begin{array}{ll} C_k \varepsilon (1+t+\qv)^{-1 + M_k \varepsilon}(1+\qv)^{-1-\delta'}  \,, & q >0 \\
C_k \varepsilon (1+t+\qv)^{-1 + M_k \varepsilon}(1+\qv)^{-1/2-\mu'}  \,, & q <0 
\end{array} \,, \\
\vert Z^I W^i \vert & \leq \Bigg\lbrace \begin{array}{ll} C_k \varepsilon (1+t+\qv)^{-1 + M_k \varepsilon}(1+\qv)^{-\delta'}   \,, & q >0 \\
C_k \varepsilon (1+t+\qv)^{-1 + M_k \varepsilon}(1+\qv)^{1/2-\mu'} \,, & q<0 
\end{array}  \,,
\end{align*}
where $\delta'=\gamma'$ if $i=1$ or $\delta'=M_k \varepsilon$ if $i=0$. 
\end{proposition}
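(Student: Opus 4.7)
The plan is to prove the pair of estimates by induction on $k = \vert I \vert$, using Corollary \ref{corol-7.2} as the central tool with the weight $\varpi$ from \eqref{eq:def-varpi} tuned to the exponents $\gamma'$ and $\mu'$ specified in the statement. The base case $k=0$ for $\vert \p W^i \vert$ already follows from Proposition \ref{prop-10.1a} (combined with the explicit formula for $h^0$ for the $i=0$ case), and the base case for $\vert W^i \vert$ follows from integrating $\p Z^I W^i$ from the initial slice along outgoing null lines (for $q>0$, using the initial decay \eqref{eq:intro-decay-id}) or from the cone $q=0$ (for $q<0$). For $h^0$, the estimates are immediate from the explicit form \eqref{eq:def-h0-h1}.

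For the inductive step, fix $k \leq N/2+2$ and assume the estimates hold for all multi-indices of length $<k$. Apply Corollary \ref{corol-7.2} to the commuted system \eqref{eq:prop-10.2-pde-system}, with $\phi = Z^I h^1_\mn$ or $\phi = Z^I \psi_k$, and weight $\varpi$ so that $\alpha = \max(1+\gamma', 1/2-\mu')$. The preliminary bounds on $\vert H \vert_\UU$ and $\vert H \vert_\LT$ required by Corollary \ref{corol-7.2} are supplied by Proposition \ref{prop-10.1b} and Corollary \ref{corol-9.4} (together with the size of $h^0$). The inhomogeneity has three pieces: the commutator $D^I h^1$ (respectively $D^I \psi_k$) controlled by the commutator identities referenced in Proposition 5.3; the Schwarzschildean correction $\hat Z^I F^0$ controlled by Lemma \ref{lemma-9.9}; and $\hat Z^I F$ controlled by Lemma \ref{lemma-10.8}.

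The key step is to estimate $\Vert \vert \varpi \hat Z^I F \vert \Vert_{L^\infty(D_\tau)}$ so that its $t$-integral contributes at most $(1+t)^{M_k \varepsilon}$. By Lemma \ref{lemma-10.8},
\[
\vert \hat Z^I F\vert \lesssim \varepsilon (1+t)^{-1}\!\!\sum_{\J \leq \I}\!\vert \p Z^J W \vert \;+\!\!\sum_{\substack{\J+\vert K\vert \leq \I \\ \vert K\vert < \I}}\! \vert \p Z^J W\vert \vert \p Z^K W \vert.
\]
For the first sum the bound is directly absorbed into the linear term of Corollary \ref{corol-7.2}, producing a factor of $\exp(CM\varepsilon \ln(1+t))=(1+t)^{CM\varepsilon}$ after the Grönwall-type manipulation. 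For the quadratic sum, we always pair one factor having $\vert K \vert<\I \leq N/2+2 \leq N-3$, which is controlled by the weak decay estimates of Corollary \ref{corol-9.4} (yielding $\varepsilon(1+t+\qv)^{-1+\delta}(1+\qv)^{-\min(\gamma,1/2)}$), against the remaining factor estimated by the inductive hypothesis (for $\vert J\vert<\I$) or by the current unknown (for $\vert J\vert=\I$). After multiplying by $\varpi(q)$ and taking the sup in the wave zone $D_\tau$, the product yields at worst $\varepsilon(1+\tau)^{-1+M_k\varepsilon}\cdot\Vert \vert \varpi \p Z^I W\vert \Vert_{L^\infty}$, which again enters linearly. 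Combining and applying Grönwall to the quantity $g(t)=\sup_\tau \sum_{\J\leq \I}\Vert \vert \varpi(q)\p Z^J W\vert \Vert_{L^\infty}$ closes the inductive step for $\vert \p Z^I W^i\vert$ with a new constant $M_k$ depending linearly on $M_{k-1}$ and $\alpha$.

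The estimate for $\vert Z^I W^i\vert$ itself is then obtained by integrating the estimate on $\vert \p Z^I W^i\vert$ from the initial data: for $q>0$ integrate inward along constant-$(t+r,\omega)$ lines from $t=0$, using the initial decay \eqref{eq:intro-decay-id}; for $q<0$ integrate from the cone $q=0$ (where the $q>0$ bound already gives smallness) inward along the same lines. The $q$-integral of $(1+\qv)^{-1-\delta'}$ or $(1+\qv)^{-1/2-\mu'}$ produces precisely the claimed weights. The main obstacle is the borderline case $\vert J\vert=\I$ in the quadratic term, where neither factor is available from the inductive hypothesis; this is handled by the standard Lindblad--Rodnianski trick of absorbing it into the Grönwall exponent via the smallness of $\varepsilon$, which is exactly why the exponent $M_k\varepsilon$ appears. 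A secondary technical point is checking compatibility of the admissible ranges $\gamma'<\gamma-\delta$ and $\mu'>\delta$ with the sign requirement $\alpha\varepsilon' < $ (absorbable constant) in Corollary \ref{corol-7.2}; these inequalities are precisely why the constraints on $\gamma',\mu'$ appear in the statement.
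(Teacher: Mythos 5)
Your proposal follows essentially the same route as the paper: induction on $k$, application of Corollary \ref{corol-7.2} to the commuted system \eqref{eq:prop-10.2-pde-system}, controlling the inhomogeneity by Lemma \ref{lemma-10.8} ($\hat Z^I F$), Lemma \ref{lemma-9.9} ($\hat Z^I F^0$) and Proposition \ref{prop-5.3} ($D^I$), feeding in Corollary \ref{corol-9.4} for the lower-order $\varpi Z^J\phi$ inputs and Proposition \ref{prop-10.1b} for the borderline $|K|=|I|$ commutator contributions, and closing with a Gr\"onwall argument. The identification of the borderline $|J|=|I|$ quadratic term and its absorption into the Gr\"onwall exponent is exactly the mechanism used in the paper.

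The one genuine issue is the base case. You assert that for $k=0$ the bound on $|\p W^i|$ ``already follows from Proposition \ref{prop-10.1a},'' but that proposition yields $|\p W| \lesssim \varepsilon\, t^{-1}\ln t$ and $|\p W|_G \lesssim \varepsilon(1+t+\qv)^{-1}$ with no $q$-weight at all, whereas the conclusion of Proposition \ref{prop-10.2} (even at $k=0$) carries the extra factors $(1+\qv)^{-1-\delta'}$ (for $q>0$) and $(1+\qv)^{-1/2-\mu'}$ (for $q<0$). These interior decay weights are not derivable from the un-weighted estimate by any pointwise argument; you still have to run Corollary \ref{corol-7.2} with the nontrivial weight $\varpi$ from \eqref{eq:def-varpi} at $k=0$ as well (the commutator and lower-order terms are simply absent, which is why the paper calls the base case ``simpler but similar''). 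The subsequent integration of $|\p Z^I W^i|$ in $q$ to obtain $|Z^I W^i|$ is fine, though the null lines held fixed are the ingoing ones (constant $t+r$, $\omega$), not outgoing ones. With the base case corrected to run the same Corollary \ref{corol-7.2} machinery, the proof is sound and agrees with the paper's argument.
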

\begin{proof}
The proof follows by induction, as in \cite{LR:04}. The base case $k=0$ follows in a simpler but similar way to the main case. One assumes the step for $\I \leq k$ and then considers the case of $\I = k+1$. 
From \eqref{eq:prop-10.2-pde-system} the following holds
\begin{align*}
\vert \tbox_g Z^I h^1\vert_\UU
& \leq \vert D^I h^1 \vert_\UU + \vert Z^I F \vert_\UU + \vert Z^I F^0 \vert_\UU \,, \\
\vert \tbox_g Z^I \psi \vert_\K & \leq \vert D^I \psi \vert_\K + \vert Z^I F \vert _\K \,.
\end{align*}
From here we use Proposition \ref{prop-9.8} and Lemma \ref{lemma-10.8} to estimate $\vert Z^I F \vert_\UU$ and $\vert Z^I F \vert_\K$. Lemma \ref{lemma-9.9} gives an estimate for $\vert Z^I F^0\vert_\UU$. Lastly Proposition \ref{prop-5.3} gives an estimate on $\vert D^I h^1\vert_\UU$ and $\vert D^I \psi \vert_\K$. 
One then collects together terms according to whether $\vK=\I$ or $\vK < \I$. When $\vK=\I$ we require strong estimates on the terms with a low number of $Z$'s acting on $H$, ie, of the form
$$ \sum_{\J \leq 1} \left( \vert Z^J H \vert_\LL + \vert H \vert_\LT \right) \,. $$
Since there are no $\psi_k$ terms appearing here, we may use Proposition \ref{prop-10.1b} which gives the necessary decay. 
When $\vK < \I$ one must use the induction hypothesis. 

Putting this altogether gives an estimate on $\vert \tbox_g Z^I W^1 \vert$. Corollary \ref{corol-9.4} implies an estimate on $\varpi (q) \vert Z^I W^1(t,x) \vert$. 
Then inserting all this in Corollary \ref{corol-7.2} yields an integral inequality and the result then follows by Gronwall's inequality.  
\end{proof}

\section{Energy Estimates} \label{section:energy}
In this final section we will obtain an integrated energy inequality for $\mathcal{E}_N[W^1](t)$ using the following Proposition from \cite{LR:04}. Using this we will be able to apply a Gronwall type argument to deduce the improved inequality \eqref{eq:intro-aim} for the energy. 
\begin{proposition}[Proposition 6.2 from \cite{LR:04}] \label{prop-6.2}
Let $\phi$ be a solution to $\tbox_g \phi = F$ such that for $H^\mn := g^\mn - m^\mn$ we have
\begin{equation}
\begin{split}
(1+\qv )^{-1} \vert H \vert_\LL + \vert \p H \vert_\LL + \vert \pgood H \vert_\UU &\leq C \varepsilon' (1+t)^{-1} \,, \\
(1+\qv )^{-1} \vert H \vert_\UU + \vert \p H \vert_\UU &\leq C \varepsilon' (1+t)^{-1/2} (1+\qv)^{1/2} (1+q_-)^{-\mu} \,.
\end{split} \label{eq:prop-6.2-decay-assump}
\end{equation}
Then for any $0 < \gamma \leq 1$ and $0 < \varepsilon' \leq \gamma/C$ we have
\begin{equation} \begin{split}
 \int_{\Sigma_t} \vert \p \phi \vert^2 w d^3 x + \int_0^t  \int_{\Sigma_\tau} \vert \pgood \phi \vert^2 w' d^3 x d \tau & \leq  8 \int_{\Sigma_0} \vert \p \phi \vert^2 w d^3 x \\ & +  16 \int_0^t  \int_{\Sigma_\tau} \varepsilon \left( \frac{\vert \p \phi \vert^2}{1+t} w + \vert F \vert \vert \p \phi \vert w \right) d^3 x d \tau \,. \end{split}
\end{equation}
\end{proposition}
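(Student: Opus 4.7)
The standard strategy is to use the multiplier method with multiplier $\p_t\phi$ weighted by $w(q)$. The starting point is the current
\[
J^\rho := g^{\rho\sigma}\p_\sigma\phi\,\p_t\phi - \tfrac{1}{2}\delta^\rho_0\, g^{\alpha\beta}\p_\alpha\phi\,\p_\beta\phi,
\]
whose divergence satisfies
\[
\p_\rho J^\rho = (\tbox_g\phi)\,\p_t\phi + (\p_\rho g^{\rho\sigma})\p_\sigma\phi\,\p_t\phi - \tfrac{1}{2}(\p_t g^{\alpha\beta})\p_\alpha\phi\,\p_\beta\phi.
\]
Multiplying by $w(q)$ and rearranging yields the weighted identity
\[
\p_\rho(w J^\rho) = w F \p_t\phi + w\!\left[(\p_\rho g^{\rho\sigma})\p_\sigma\phi\p_t\phi - \tfrac{1}{2}(\p_t g^{\alpha\beta})\p_\alpha\phi\p_\beta\phi\right] + (\p_\rho w) J^\rho.
\]
Integrating over the slab $[0,t]\times\R^3$ and discarding spatial boundary terms (justified by the compact support / decay of $\phi$) converts $\int\p_\tau(wJ^0)$ into the time-boundary difference, which will provide the energy at times $0$ and $t$.

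The first step is to show that the boundary term $-\int_{\Sigma_\tau} w J^0\,d^3x$ is, up to $C\varepsilon'$ corrections, equal to $\tfrac{1}{2}\int_{\Sigma_\tau} w|\p\phi|^2 d^3x$. A direct computation writing $g=m+H$ gives
\[
-J^0 = \tfrac{1}{2}|\p\phi|^2 + O\bigl(|H|_\UU |\p\phi|^2\bigr),
\]
so the decay assumption $|H|_\UU\leq C\varepsilon'$ (via the second line of \eqref{eq:prop-6.2-decay-assump}) lets us sandwich the boundary integral between $(\tfrac12\mp C\varepsilon')\int w|\p\phi|^2$; this accounts for the factor $8$ appearing in the final inequality after rearrangement.

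The heart of the argument is the analysis of the bulk term $(\p_\rho w) J^\rho$ in the null frame. Since $\p_\rho w = w'(q)(dq)_\rho$ and $m^{\rho\sigma}(dq)_\sigma$ is proportional to $-L^\rho$, a direct computation in the flat case produces
\[
(\p_\rho w) J^\rho\big|_{g=m} = -\tfrac{1}{2}\,w'(q)\,|\pgood\phi|^2,
\]
i.e.\ exactly the good tangential-derivative square with the favourable sign (recall $w'\ge 0$ by our choice \eqref{eq:def-w}). Moving this to the left-hand side produces the $\int_0^t\!\int w'|\pgood\phi|^2$ term. The $H$-dependent contribution to $(\p_\rho w)J^\rho$ is schematically $w'\,|H|_\LU|\p\phi|^2 + w'|H|_\UU|\p\phi||\pgood\phi|$, which by the $|H|_\LT\lesssim \varepsilon'(|q|+1)/(1+t+|x|)$ hypothesis can be absorbed into the good bulk term plus an $\varepsilon'(1+t)^{-1} w|\p\phi|^2$ remainder, using Cauchy--Schwarz and the elementary inequality $w' \lesssim (1+|q|)^{-1} w$ coming from \eqref{eq:def-w}.

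The remaining bulk error $\int w\bigl[(\p g)\p\phi\,\p_t\phi + (\p_t g)(\p\phi)^2\bigr]$ is handled analogously: expanding in the null frame and carefully distributing derivatives between $|\p H|_{\LT}$ on the one hand and $|\p H|_\UU$ on the other, the first component is paired with $|\p\phi|^2$ (small by the first line of \eqref{eq:prop-6.2-decay-assump}) and the second with one bad and one good derivative of $\phi$, the latter being absorbed by the $w'|\pgood\phi|^2$ term via Cauchy--Schwarz together with $(1+|q|)|\p H|_\UU \lesssim \varepsilon'(1+t)^{-1/2}(1+|q|)^{3/2}(1+q_-)^{-\mu}$ and the weight inequality $w'(q)(1+q_-)^{2\mu}\gtrsim 1$. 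This is the step which genuinely uses every component of the decay hypothesis \eqref{eq:prop-6.2-decay-assump}, and it is where the smallness $\varepsilon' \leq \gamma/C$ enters to ensure the extracted factor of $w'|\pgood\phi|^2$ does not exceed $\tfrac{1}{2}$ of the good bulk term and can thus be absorbed. The resulting inequality is then precisely the stated estimate, with the remaining terms $\varepsilon (1+t)^{-1} w|\p\phi|^2$ and $w|F||\p\phi|$ appearing as the right-hand side. The main technical obstacle is the bookkeeping of this last step: one must ensure that every appearance of a $\UU$-bad component of $H$ (which carries only the weaker decay) is paired with at least one good derivative of $\phi$, so that the absorption into $w'|\pgood\phi|^2$ is possible.
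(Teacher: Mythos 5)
Your overall strategy is correct and is indeed the one used in \cite{LR:04}: multiply the equation by $w(q)\p_t\phi$, write the result as a divergence plus errors, and observe that the $\p_\rho w$ contribution produces the favorable term $-\tfrac12 w'(q)|\pgood\phi|^2$ in the flat case, while the $H$-dependent corrections are absorbed using the decay hypotheses. The paper cites Proposition 6.2 directly from \cite{LR:04} without reproving it, so at the level of the scheme your sketch matches.

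There are, however, two concrete problems in the sketch of the absorption step. First, the claimed weight inequality $w'(q)(1+q_-)^{2\mu}\gtrsim 1$ is false for $q<0$: from \eqref{eq:def-w} one has $w'(q)\sim 2\mu(1+|q|)^{-1-2\mu}$ there, so $w'(q)(1+q_-)^{2\mu}\sim 2\mu(1+|q|)^{-1}$, which decays rather than being bounded below; the correct elementary inequality is $w'(q)(1+|q|)(1+q_-)^{2\mu}\gtrsim \min(1,2\mu)$, with an extra factor of $(1+|q|)$. Second, and more seriously, with the decay hypothesis $|\p H|_\UU\lesssim\varepsilon'(1+t)^{-1/2}(1+|q|)^{1/2}(1+q_-)^{-\mu}$ a direct Cauchy--Schwarz of the form $w|\p H|_\UU|\pgood\phi||\p\phi|\le \tfrac18 w'|\pgood\phi|^2 + C\tfrac{w^2}{w'}|\p H|_\UU^2|\p\phi|^2$ requires $\tfrac{w}{w'}|\p H|_\UU^2\lesssim\varepsilon'(1+t)^{-1}$, and plugging in the hypothesis together with $\tfrac{w}{w'}\sim(1+|q|)(1+q_-)^{2\mu}$ leaves an uncontrolled factor of $(1+|q|)^2$. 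So the step as you have stated it does not close. The actual closing must exploit the finer null-frame structure of the two bulk errors: expanding $\p_\rho g^{\rho\sigma}\p_\sigma\phi\,\p_t\phi-\tfrac12\p_t g^{\alpha\beta}\p_\alpha\phi\p_\beta\phi$ in the null frame, the only two--bad--derivative contributions $(\lbar\phi)^2$ come with coefficient $\p_q H_{LL}$, which is controlled by the first line of \eqref{eq:prop-6.2-decay-assump}; the mixed one--good--one--bad terms come with coefficients $\p_q H_{L\lbar}$ and $\p_q H_{LA}$ rather than the full $|\p H|_\UU$, and one must use the $\LL$ and $\LT$ hypotheses (and Lemma \ref{lemma-4.2a}) on precisely those components. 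Your phrase ``carefully distributing derivatives between $|\p H|_\LT$ on the one hand and $|\p H|_\UU$ on the other'' gestures in this direction, but as written you then invoke the blunt $|\p H|_\UU$ decay for the mixed term, which is exactly where the $(1+|q|)^2$ loss appears. To repair the argument you should write out the null-frame expansion of the two error terms explicitly and verify component by component which decay line of \eqref{eq:prop-6.2-decay-assump} is used for each coefficient.
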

Using Young's inequality and the above Proposition \ref{prop-6.2} on the PDE \eqref{eq:prop-10.2-pde-system} yields
\begin{equation} \begin{split}
& \int_{\Sigma_t }  \left(  \vert \p Z^I h^1 \vert^2_\UU + \vert  \p Z^I \psi \vert^2_\K \right) w d^3x  + \int_0^t \int_{\Sigma_\tau} \left( \vert \pgood Z^I h^1 \vert^2_\UU + \vert \pgood Z^I \psi \vert^2_\K \right) w' d^3 x d \tau \\
& \qquad \leq 8  \int_{\Sigma_0} \left( \vert \p Z^I h^1 \vert^2_\UU + \vert \p Z^I \psi \vert^2_\K \right) w dx + 16 \int_0^t \int_{\Sigma_\tau}  \frac{C \varepsilon}{1+t}  \left( \vert \p Z^I h^1 \vert^2_\UU + \vert \p Z^I \psi \vert^2_\K \right) w  d^3 x d \tau\\
& \qquad + \int_0^t \int_{\Sigma_\tau} \varepsilon^{-1} \left( \vert \hat{Z}^I F \vert^2_\UU + \vert \hat{Z}^I F \vert^2_\K + \vert D^I h^1 \vert^2_\UU + \vert D^I \psi \vert^2_\K  \right) (1+t)w d^3 x d \tau \\
& \qquad + \int_0^t \int_{\Sigma_\tau} \vert Z^I F^0 \vert_\UU \vert \p Z^I h^1 \vert_\UU  w   d^3 x d \tau \,.
\label{eq:prop7.1-energy-W} \end{split} \end{equation}
The terms not `roughly' in the form of the energy $\mathcal{E}_N[W^1](t)$, defined in \eqref{eq:def-weighted-energy}, are 
\begin{equation}
 \varepsilon^{-1} ( \vert \hat{Z}^I F \vert^2_\UU + \vert \hat{Z}^I F \vert^2_\K + \vert D^I h^1 \vert^2_\UU + \vert D^I \psi \vert^2_\K ) (1+t)w + \vert Z^I F^0 \vert_\UU \vert \p Z^I h^1 \vert_\UU  w \label{eq:thm-11.1-terms-to-control}
\end{equation}
and so these are estimated in the following Lemmas \ref{lemma-11.2} to \ref{lemma-11.5}. First we state the main Theorem \ref{theorem-11.1} proving \eqref{eq:intro-aim} as the assumptions will be the same for Lemmas \ref{lemma-11.2} to \ref{lemma-11.5}, but leave the proof until after the Lemmas. 

\begin{theorem}[Modified Theorem 11.1 from \cite{LR:04}] \label{theorem-11.1}
Let $(h_\mn, \psi_k)$ be a local in time solution to \eqref{eq:general-pde} satisfying the wave gauge condition \eqref{eq:wave-gauge} on some maximal interval $[0, T)$. Suppose for fixed $\mu' \in (0, 1/2) $ and $\gamma \in (0,1/2)$ that for all $t \in [0,T]$ and all $\I \leq N/2 + 2$ we have
\begin{subequations}
\label{eq:thm-11.1-asmpt}
 \begin{align}
\vert \p W \vert_G + (1+\qv)^{-1} \vert H \vert_{\mathcal{T} \mathcal{L}} + &(1+\qv)^{-1} \vert Z H \vert_\LL  \leq C \varepsilon (1+t)^{-1} \label{eq:thm-11.1-assumpt1} \,, \\
\vert \p Z^I W \vert + \frac{\vert Z^I W \vert}{1+\qv} + \frac{1+t+\qv}{1+\qv} \vert \pgood Z^I W \vert & \leq \Big\lbrace \begin{array}{ll}
C\varepsilon (1+t+\qv)^{-1+C\varepsilon} (1+\qv)^{-1-C\varepsilon} \,, \\
C\varepsilon (1+t+\qv)^{-1+C\varepsilon} (1+\qv)^{-1/2+\mu'}
\end{array}  \label{eq:thm-11.1-assumpt2} \\
E_N[W^1](0) + M^2 & \leq \varepsilon^2 \,. \label{eq:thm-11.1-assumpt3} 
 \end{align}
\end{subequations}
For any constant $\mu$ satisfying $0<\mu<1/2-\mu'$ there exist constants $C_N$ and $C$, independent of $T$, such that for $\varepsilon>0$ sufficiently small the following holds for all $t \in [0, T]$
\begin{equation} \mathcal{E}_N[W^1](t) \leq C_N \varepsilon ^2 (1+t)^{C\varepsilon} \,. \end{equation}
\end{theorem}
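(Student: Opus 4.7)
My starting point is the weighted energy identity \eqref{eq:prop7.1-energy-W}, which is already derived by applying Proposition \ref{prop-6.2} to the system \eqref{eq:prop-10.2-pde-system} for each component of $Z^I W^1$. The plan is to sum this identity over all $|I|\leq N$, to absorb the nonlinear source terms using the assumed pointwise decay \eqref{eq:thm-11.1-assumpt1}--\eqref{eq:thm-11.1-assumpt2}, and finally to close by Gronwall. Note that the decay hypothesis \eqref{eq:thm-11.1-assumpt1} guarantees that the hypotheses \eqref{eq:prop-6.2-decay-assump} of Proposition \ref{prop-6.2} are met (with $H_{\mathcal{LL}}$ controlled via Proposition \ref{prop-10.1b}, which is implicit in \eqref{eq:thm-11.1-assumpt1}), so the basic energy identity really does hold for each commuted unknown.

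The bulk of the work is to control the four error terms displayed in \eqref{eq:thm-11.1-terms-to-control}. I would handle them in the following order. First, for $|\hat Z^I F|_\UU+|\hat Z^I F|_\K$ I invoke Proposition \ref{prop-9.8} and split each bilinear factor into a \emph{low-high} product: for $|J|\le N/2+2$ the factor $\p Z^J W$ is placed in $L^\infty$ and bounded by \eqref{eq:thm-11.1-assumpt2}, while the $L^2$ factor is absorbed either into $\mathcal{E}_N[W^1](t)$ (via the weight $w$) or into the spacetime integral of $w'|\pgood Z^I W|^2$ on the left-hand side of \eqref{eq:prop7.1-energy-W}. The crucial observation is that the dangerous non-null quadratic piece $|\p W|_G^2$ coming from $P_\mn$ carries only \emph{good} components, for which the sharper bound $|\p W|_G\lesssim \varepsilon(1+t)^{-1}$ from \eqref{eq:thm-11.1-assumpt1} is available; combined with $(1+\tau)w$ this produces exactly the $\frac{C\varepsilon}{1+\tau}\mathcal{E}_N[W^1](\tau)$ structure that Gronwall can accommodate. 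The tangential factors $|\pgood W|\,|\p W|$ are treated similarly, but after pulling out $\varepsilon(1+t)^{-1+C\varepsilon}$ the other factor is controlled by the spacetime $w'$ integral, exploiting that $w'\gtrsim (1+|q|)^{-1}w$ on the relevant region so that $(1+\tau)w\cdot|\pgood Z^I W|^2$ is integrable. The cubic terms $|Z^{J_3}W||\p Z^{J_2}W||\p Z^{J_1}W|$ are harmless: put two factors in $L^\infty$ using \eqref{eq:thm-11.1-assumpt2} to extract $\varepsilon^2(1+t)^{-2+C\varepsilon}$, which is integrable against $(1+t)$ and the remaining $L^2$ factor.

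Second, the commutator terms $D^I h^1$ and $D^I\psi$ are handled by Proposition \ref{prop-5.3} (invoked as a black box); schematically they reduce to sums $\sum|Z^{J_1}H||\p^2 Z^{J_2}W|$ with $|J_1|+|J_2|\le|I|$, which are again estimated by the same low-high split and the hierarchy bounds from Proposition \ref{prop-8.2} and Lemma \ref{lemma-10.4} that let us trade $|Z^JH|_\LT$ or $|Z^JH|_\LL$ for the decay factor $(1+\qv)(1+t+\qv)^{-1}$. Third, the $F^0$ contribution $|Z^I F^0|_\UU|\p Z^I h^1|_\UU w$ is handled by Lemma \ref{lemma-9.9} together with the smallness $M\leq\varepsilon$ from \eqref{eq:thm-11.1-assumpt3}: its $L^1_tL^1_x$ norm against $\sqrt w$ is $O(\varepsilon^2)$ times an integrable factor of $t$, except for the part proportional to $\sum|Z^Jh^1|_\UU$, which feeds back into $\mathcal{E}_N[W^1]$ after a Hardy-type inequality (as in the Appendix \ref{section:hardy} identities). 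Assembling these estimates, the right-hand side of the summed \eqref{eq:prop7.1-energy-W} is bounded by $C\varepsilon^2 + C\varepsilon\int_0^t (1+\tau)^{-1}\mathcal{E}_N[W^1](\tau)\,d\tau$ plus lower-order integrable remainders, and Gronwall yields the claimed $\mathcal{E}_N[W^1](t)\le C_N\varepsilon^2(1+t)^{C\varepsilon}$.

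The hard part is not the cubic terms, the $F^0$ term, or even the commutator $D^I$; those decay fast enough that essentially any distribution of derivatives works. The delicate step is the quadratic non-null piece in $\hat Z^I F_\mn$ for top-order $|I|=N$: one must check that after distributing the vector fields the genuinely bad components $\p h_{\lbar\lbar}$, which only satisfy the weak $t^{-1}\ln t$ estimate of Proposition \ref{prop-10.1a}, never appear multiplied by another bad component in a way that cannot be absorbed. Here the structure of $P^1, P^2$ from Lemma \ref{lemma-4.2}---pairing one $\UU$ with at least one $\LL$ or $\TU$ factor---together with the wave-coordinate hierarchy of Proposition \ref{prop-8.2} is exactly what prevents the formation of $(\p h_{\lbar\lbar})^2$ type products at top order. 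This is the same mechanism underlying the weak null condition verified in \eqref{eq:intro-gen-weaknull}, and it is the reason the enlarged class of nonlinearities in \eqref{eq:intro-new-EE-P} had to exclude non-null $(\p W)^2$ terms in $F_k$ and non-null products involving $\tr_m h$ rather than $h$ itself in $P^2$.
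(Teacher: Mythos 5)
Your sketch follows the same overall route as the paper: start from the integrated energy identity \eqref{eq:prop7.1-energy-W}, estimate the four source terms of \eqref{eq:thm-11.1-terms-to-control} using the pointwise decay of \eqref{eq:thm-11.1-asmpt} together with Propositions \ref{prop-9.8}, \ref{prop-5.3}, Lemma \ref{lemma-9.9}, a Hardy inequality to trade $|Z^J W^1|$ for $|\partial Z^J W^1|$, and close with Gronwall. Your comments on why $|\partial W|_G^2$ is safe (via \eqref{eq:thm-11.1-assumpt1}) and why the bad component $\partial h_{\lbar\lbar}$ never self-pairs (the structure of $P^1$, $P^2$ and the wave-coordinate hierarchy) are accurate, and your treatment of the $F^0$ contribution as producing a $\mathcal{E}_m^{1/2}(1+\tau)^{-3/2}$ term is what Lemma \ref{lemma-11.4} gives.

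There is, however, a genuine gap in your closing step. You claim that after the low-high splittings the right-hand side reduces to $C\varepsilon^2 + C\varepsilon\int_0^t(1+\tau)^{-1}\mathcal{E}_N(\tau)\,d\tau$ ``plus lower-order integrable remainders.'' The sub-top-order terms coming from the wave-coordinate hierarchy (the $\sum_{|J|\le |I|-1}$ parts of Lemmas \ref{lemma-11.3} and \ref{lemma-11.5}, which ultimately trace back to the $|J|\le|I|-1$ sums in Proposition \ref{prop-8.2}) are not integrable remainders: after summing over $|I|\leq N$ they produce
\begin{equation*}
C\varepsilon\int_0^t \frac{\mathcal{E}_{N-1}(\tau)}{(1+\tau)^{1-2C\varepsilon}}\,d\tau\,,
\end{equation*}
whose kernel $(1+\tau)^{-1+2C\varepsilon}$ is not integrable in $\tau$. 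If one bounds $\mathcal{E}_{N-1}\leq\mathcal{E}_N$ crudely, Gronwall with this kernel gives $\mathcal{E}_N(t)\lesssim\varepsilon^2\exp\bigl(c(1+t)^{2C\varepsilon}\bigr)$, which is far worse than the claimed $(1+t)^{C\varepsilon}$. The paper closes this by an induction on $m$, the number of commuted vector fields: having already established $\mathcal{E}_{m-1}(\tau)\leq C_N\varepsilon^2(1+\tau)^{2C\varepsilon}$, the offending term becomes $O(\varepsilon^2(1+t)^{\text{const}\cdot\varepsilon})$ and Gronwall at level $m$ then reproduces the inductive bound. Without organizing the estimate as an induction on the derivative count $m\leq N$ (i.e.\ working with $\mathcal{E}_m(t)+S_m(t)$ and feeding in the already-proven bound for $\mathcal{E}_{m-1}$), the argument does not close at the claimed rate. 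A smaller omission: you do not address the discrepancy between the volume form of $g$ used to define the energy and the Minkowski volume form used in the integrals; the paper dispatches this at the start of the proof via $\vert H\vert_\UU\leq C\varepsilon$ and the expansion $\vert\det g\vert^{1/2}=1-\tfrac{1}{2}\tr_m H+O(H^2)$, which is needed to justify treating the two energies as comparable.
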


We begin by estimating the first pair of terms of \eqref{eq:thm-11.1-terms-to-control} in Lemmas \ref{lemma-11.2} and \ref{lemma-11.3}. 
\begin{lemma}[Modified Lemma 11.2 from \cite{LR:04}] \label{lemma-11.2}
Assuming the conditions of Theorem \ref{theorem-11.1} then
\begin{align*}
\vert Z^I F \vert_\UU & \lesssim \sum_{\vert J \vert \leq \vert I \vert} \Big( \frac{\varepsilon \vert \p Z^J W^1 \vert}{1+t} + \frac{\varepsilon (1+\qv)^{\mu'-1/2}}{(1+t+\qv)^{1-C\varepsilon}} \vert \pgood Z^J W^1 \vert + \frac{\varepsilon^2}{1+t+\qv} \frac{\vert Z^J W^1 \vert}{1+\qv} \Big)  \\
&+ \sum_{\vert J \vert \leq \vert I \vert -1} \frac{\varepsilon}{(1+t)^{1-C\varepsilon}} \vert \p Z^J h^1 \vert_\UU + \frac{\varepsilon^2}{(1+t+\qv)^4} \,, \\
\vert Z^I F \vert_\K & \lesssim \sum_{\vert J \vert \leq \vert I \vert} \Big( \frac{\varepsilon \vert \p Z^J W^1 \vert}{1+t} + \frac{\varepsilon (1+\qv)^{\mu'-1/2}}{(1+t+\qv)^{1-C\varepsilon}} \vert \pgood Z^J W^1 \vert + \frac{\varepsilon^2}{1+t+\qv} \frac{\vert Z^J W^1 \vert}{1+\qv} \Big)  \\
&+  \frac{\varepsilon^2}{(1+t+\qv)^4} \,.
\end{align*}
\end{lemma}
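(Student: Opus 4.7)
The starting point is Proposition \ref{prop-9.8}, which bounds $\vert Z^I F \vert_\UU$ and $\vert Z^I F \vert_\K$ by sums of bilinear products of the form $\vert \p Z^J W \vert_G \vert \p Z^K W \vert_G$ and $\vert \pgood Z^J W \vert \vert \p Z^K W \vert$ (plus, only in the $\UU$ bound, a bilinear $\vert \p Z^J h \vert_\UU \vert \p Z^K h \vert_\UU$ with $\vert J \vert + \vert K \vert \leq \vert I \vert - 2$) together with cubic tail terms $\vert Z^{J_1} W \vert \vert \p Z^{J_2} W \vert \vert \p Z^{J_3} W \vert$. My plan is to decompose $W = W^0 + W^1$, dispose of the Schwarzschildean pieces using the elementary bounds $\vert Z^J h^0 \vert + (1+t+r) \vert \p Z^J h^0 \vert \lesssim M \lesssim \varepsilon$, then for each surviving $W^1$-product assign the shortest multi-index (always $\leq N/2 + 2$) to the factor on which I apply the pointwise decay hypothesis \eqref{eq:thm-11.1-assumpt2}.

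The pure $W^0 \cdot W^0$ products generate at most $\varepsilon^2/(1+t+\qv)^4$, which is the last listed term; the mixed $W^0 \cdot W^1$ products in $\vert \p W \vert_G \vert \p W \vert_G$ inherit a factor $M/(1+t+r)^2 \lesssim \varepsilon/(1+t)^2$ from the $h^0$ side and fit comfortably into the first listed term $\varepsilon \vert \p Z^J W^1 \vert/(1+t)$. For the pure $W^1$ contributions in $\vert \p Z^J W \vert_G \vert \p Z^K W \vert_G$ I will use \eqref{eq:thm-11.1-assumpt1}, promoted to higher order via the wave-coordinate identity through Proposition \ref{prop-8.2}, to bound the low-index factor by $\vert \p Z^K W^1 \vert_G \lesssim \varepsilon/(1+t)$, producing the first term again. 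For the mixed-derivative bilinear $\vert \pgood Z^J W \vert \vert \p Z^K W \vert$, if $\vert K \vert \leq N/2 + 2$ the factor $\vert \p Z^K W^1 \vert \lesssim \varepsilon (1+t+\qv)^{-1+C\varepsilon}(1+\qv)^{\mu'-1/2}$ (worst case $q<0$) yields the second term; if instead $\vert J \vert \leq N/2 + 2$, I apply \eqref{eq:thm-11.1-assumpt2} to $\vert \pgood Z^J W^1 \vert$ and rewrite the remaining factor via $\vert \p Z^K W^1 \vert \lesssim \vert Z^K W^1 \vert/(1+\qv)$, producing the third term.

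The $\vert \p Z^J h \vert_\UU \vert \p Z^K h \vert_\UU$ bilinear, present only in the $\UU$ estimate because of the higher-order derivative corrections picked up through Proposition \ref{prop-8.2}, enjoys the constraint $\vert J \vert + \vert K \vert \leq \vert I \vert - 2$; this forces the low index $\vert K \vert \leq \vert I \vert / 2 - 1 \leq N/2 + 2$, so that \eqref{eq:thm-11.1-assumpt2} applied to it produces the fourth term with $\vert J \vert \leq \vert I \vert - 1$. The cubic remainders are handled by pointwise-bounding two of the three factors using \eqref{eq:thm-11.1-assumpt2} (and $\vert Z^{J_3} W^1 \vert \lesssim \varepsilon$ for the undifferentiated slot), yielding at most $\varepsilon^2/(1+t+\qv)^{1+C\varepsilon}$ times either $\vert \p Z^{J_1} W^1 \vert$ or $\vert Z^{J_1} W^1 \vert/(1+\qv)$, which are absorbed into the first and third listed terms respectively.

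The main obstacle I anticipate is not algebraic but bookkeeping: the $q > 0$ and $q < 0$ branches of \eqref{eq:thm-11.1-assumpt2} carry genuinely different $(1+\qv)$-weights, and care is needed to verify that $0 < \mu' < 1/2$ together with the small exponent losses $C\varepsilon$ make every case fit inside one of the five listed terms without accumulating extra powers of $(1+\qv)$ on the wrong side. A secondary subtlety is extending \eqref{eq:thm-11.1-assumpt1} from $\vert I \vert = 0$ to higher-order $\vert \p Z^J W^1 \vert_G$ bounds used in the first step above; this rests on combining Proposition \ref{prop-8.2} with the pointwise decay \eqref{eq:thm-11.1-assumpt2} of the inverse perturbation $H$, and must be done carefully so as not to lose a derivative or upgrade an error term of the wrong sign.
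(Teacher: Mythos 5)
Your overall plan — take Proposition \ref{prop-9.8} as the starting point, split $W = W^0 + W^1$, dispose of the Schwarzschildean pieces by direct computation of $\p Z^J h^0$ and $Z^J h^0$, then assign the short multi-index to the factor on which the pointwise decay hypotheses are applied — is exactly the route the paper indicates. The handling of the $W^0\cdot W^0$ and $W^0\cdot W^1$ products, of the $\pgood Z^J W \cdot \p Z^K W$ bilinear, of the $\vert\p Z^J h\vert_\UU\vert\p Z^K h\vert_\UU$ term with $\vert J\vert+\vert K\vert\leq\vert I\vert-2$, and of the cubic remainders, all proceed essentially as in the source.

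The step that does not go through as written is the claim that the pure-$W^1$ quadratic $\vert\p Z^J W^1\vert_G\vert\p Z^K W^1\vert_G$ can be absorbed into the first term by bounding $\vert\p Z^K W^1\vert_G\lesssim\varepsilon/(1+t)$ for $1\leq\vert K\vert\leq N/2+2$ by ``promoting'' \eqref{eq:thm-11.1-assumpt1} through Proposition \ref{prop-8.2}. Proposition \ref{prop-8.2} controls only the $\mathcal{L}\mathcal{T}$ and $\mathcal{L}\mathcal{L}$ components of $\p Z^K H$, whereas $\vert\p W\vert_G = \vert\p h\vert_\TU + \vert\p\psi\vert_\K$ also includes $h_{L\lbar}$, $h_{S_i\lbar}$, the angular block $h_{S_iS_j}$, and all of $\psi_k$ — none of which are reached by the wave-coordinate condition. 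For those components at order $\vert K\vert\geq 1$ the only available pointwise decay is \eqref{eq:thm-11.1-assumpt2}, i.e.\ $\varepsilon(1+t)^{-1+C\varepsilon}$, strictly weaker than $\varepsilon/(1+t)$; this is consistent with Propositions \ref{prop-10.1a} and \ref{prop-10.2}, where the clean $(1+t+\qv)^{-1}$ rate holds only at $\vert I\vert=0$. The correct way to absorb the $\vert J\vert,\vert K\vert\geq 1$ case is to accept the $(1+t)^{-1+C\varepsilon}$ loss — which is harmless because $\vert J\vert+\vert K\vert\leq\vert I\vert$ with $\vert K\vert\geq 1$ forces $\vert J\vert\leq\vert I\vert-1$ — and place the result in the $\vert J\vert\leq\vert I\vert-1$ sum, the same slot you reserve for the $\vert\p h\vert_\UU\vert\p h\vert_\UU$ bilinear. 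Be aware that this produces a contribution $\frac{\varepsilon}{(1+t)^{1-C\varepsilon}}\vert\p Z^J\psi\vert_\K$ in addition to $\vert\p Z^J h^1\vert_\UU$, so either track this through the energy argument in Lemma \ref{lemma-11.3} and Theorem \ref{theorem-11.1} (where it is controlled identically by $\mathcal{E}_{m-1}$), or supply an argument not contained in the stated hypotheses that genuinely restores the full $\varepsilon(1+t)^{-1}$ rate for the good components at $\vert K\vert\geq 1$. A minor further point: in the $\pgood Z^J W\cdot\p Z^K W$ case where $\vert J\vert$ is taken small, the step $\vert\p Z^K W^1\vert\lesssim\vert Z^K W^1\vert/(1+\qv)$ costs one derivative (Lemma \ref{lemma-5.1} gives $\sum_{\vert K'\vert\leq\vert K\vert+1}$), so this must be restricted to $\vert J\vert\geq 1$ (equivalently $\vert K\vert\leq\vert I\vert-1$), with $\vert J\vert=0$ handled by instead bounding $\vert\pgood W^1\vert$ through \eqref{eq:thm-11.1-assumpt2} and keeping the $\vert\p Z^K W^1\vert$ factor.
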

\begin{proof}
The proof follows from Proposition \ref{prop-9.8} and by estimates coming from the assumptions \eqref{eq:thm-11.1-assumpt1}, \eqref{eq:thm-11.1-assumpt2} and from calculating $\p Z^I h^0, Z^I h^0$. 
\end{proof}

\begin{lemma}[Modified Lemma 11.3 from \cite{LR:04}] \label{lemma-11.3}
Assuming the conditions of Theorem \ref{theorem-11.1} then
\begin{align*}
\varepsilon^{-1} \int_0^t \int_{\Sigma_\tau} & \left( \vert Z^I F \vert^2_\UU + \vert Z^I F \vert^2_\K \right) (1+\tau) w d \tau d^3 x \\
& \lesssim \sum_{\J \leq \I} \int_0^t \int_{\Sigma_\tau} \varepsilon \Big( \frac{\vert \p Z^J W^1 \vert^2 }{1+\tau}w + \vert \pgood Z^J W^1 \vert^2 w' \Big) d \tau d^3 x\\
& + \sum_{\J \leq \I  -1 } \int_0^t \int_{\Sigma_\tau} \varepsilon \frac{\vert \p Z^J h^1 \vert^2{}_\UU}{(1+\tau)^{1-2C \varepsilon}} w d \tau d^3 x + \varepsilon^3 \,.
\end{align*}
\end{lemma}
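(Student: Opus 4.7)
The plan is to insert the pointwise bound of Lemma \ref{lemma-11.2} into the spacetime integral, expand the square via $(a_1+\cdots+a_k)^2 \lesssim a_1^2+\cdots+a_k^2$, and then treat each of the five resulting contributions separately. Write schematically $|Z^I F|_\UU + |Z^I F|_\K \lesssim A_1 + A_2 + A_3 + A_4 + A_5$ where $A_1$ is the $\frac{\varepsilon |\p Z^J W^1|}{1+\tau}$ term, $A_2$ the $\pgood Z^J W^1$ term, $A_3$ the zeroth-order term involving $|Z^J W^1|/(1+|q|)$, $A_4$ the term with $|J| \leq |I|-1$ derivatives of $h^1$, and $A_5$ the $\varepsilon^2 (1+t+|q|)^{-4}$ remainder. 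Multiplying $|A_i|^2$ by $(1+\tau)w/\varepsilon$ we verify that each piece is dominated by one of the three terms on the right-hand side.

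The easy terms are $A_1$ and $A_4$: for $A_1$ one directly obtains $\varepsilon \frac{|\p Z^J W^1|^2}{1+\tau}w$, while for $A_4$ one gets $\varepsilon \frac{|\p Z^J h^1|_\UU^2}{(1+\tau)^{1-2C\varepsilon}}w$, both of which appear on the right-hand side. The term $A_5$ contributes $\frac{\varepsilon^3 w}{(1+\tau+|q|)^7}$; passing to spherical coordinates and using that $w \lesssim (1+t+r)^{1+2\gamma}$ in the wave zone, the spatial integral decays like $(1+\tau)^{-3+2\gamma}$ which is integrable in $\tau$ (since $\gamma<1/2$), giving an $\varepsilon^3$ contribution.

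The main work is in the two remaining terms. For $A_2$ one must show the pointwise inequality
\begin{equation*}
\frac{(1+|q|)^{2\mu'-1}\, w(q)}{(1+t+|q|)^{1-2C\varepsilon}} \lesssim w'(q),
\end{equation*}
so that the integrand is absorbed into $\varepsilon |\pgood Z^J W^1|^2 w'$. In the region $q>0$ one has $w \sim (1+|q|)^{1+2\gamma}$, $w' \sim (1+|q|)^{2\gamma}$, reducing the claim to $(1+|q|)^{2\mu'} \lesssim (1+t+|q|)^{1-2C\varepsilon}$, which holds for $\varepsilon$ small since $\mu'<1/2$. In the region $q<0$ one has $w \lesssim 1$ and $w' \gtrsim (1+|q|)^{-1-2\mu}$, so the claim becomes $(1+|q|)^{2\mu+2\mu'} \lesssim (1+t+|q|)^{1-2C\varepsilon}$, which uses the standing hypothesis $0<\mu<1/2-\mu'$ (and smallness of $\varepsilon$). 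This is precisely where the constraints relating $\mu,\mu',\gamma,\delta$ enter, and I anticipate it will be the most delicate check.

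For $A_3$ the squared contribution after multiplication by $(1+\tau)w/\varepsilon$ is $\frac{\varepsilon^3}{1+\tau+|q|}\frac{w\,|Z^J W^1|^2}{(1+|q|)^2}$, which cannot be controlled pointwise by energy-like quantities. Here one invokes the weighted Hardy-type inequality from Appendix \ref{section:hardy}, specifically
\begin{equation*}
\int_{\Sigma_\tau}\frac{|Z^J W^1|^2 w}{(1+|q|)^2}\,d^3x \lesssim \int_{\Sigma_\tau}|\p Z^J W^1|^2 w\,d^3x,
\end{equation*}
so that the spacetime integral of $A_3^2$ is bounded by $\varepsilon^2 \int_0^t \frac{1}{1+\tau}\int_{\Sigma_\tau}\varepsilon |\p Z^J W^1|^2 w\,d^3x\,d\tau$, which is absorbed into the first term on the right-hand side (with an extra factor of $\varepsilon^2$ to spare). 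The Hardy step and the $A_2$ comparison are the two places that require genuine care; the rest is bookkeeping.
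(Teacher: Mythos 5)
Your proof is correct and takes essentially the same approach as the paper's own (very terse) proof, which just says "apply Lemma 11.2 and then the Hardy-type inequality of Corollary 13.3 to the $\vert Z^J W^1 \vert$ term." You have filled in the details the paper omits: the expansion into the five contributions, the weight comparison for the $\pgood$ term (where the constraints $\mu' < 1/2$ and $\mu + \mu' < 1/2$ enter), the integrability check for the $\varepsilon^2(1+t+\qv)^{-4}$ remainder, and the application of Corollary 13.3. The only minor stylistic point is that for the $A_3$ piece it is cleanest to first pull out $(1+\tau+\qv)^{-1}\leq(1+\tau)^{-1}$ and then invoke the $a=1$ case of Corollary 13.3, or equivalently apply the $a=0$ case directly; either variant closes the estimate.
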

\begin{proof}
The estimate from Lemma \ref{lemma-11.2} and then applying the variant of Hardy's inequality given in Corollary \ref{corol-13.3} on the $\vert Z^J W^1 \vert$ term. 
\end{proof}

Next we estimate the second pair of terms from \eqref{eq:thm-11.1-terms-to-control}. 
\begin{lemma}[Modified Lemma 11.5 from \cite{LR:04}] \label{lemma-11.5}
Assuming the conditions of Theorem \ref{theorem-11.1} then
\begin{align*}
 \varepsilon^{-1} \int_0^t \int_{\Sigma_\tau} & \left( \vert D^I h^1 \vert^2_\UU + \vert D^I \psi \vert^2_\K \right) (1+\tau) w dx dt \\
 &\lesssim \varepsilon \sum_{\J \leq \I} \int_0^t \int_{\Sigma_\tau} \Big( \frac{w}{1+t} \vert \p Z^J W^1 \vert^2 + \vert \pgood Z^J W^1 \vert^2 w' \Big) dx dt \\ & \quad + \varepsilon \sum_{\J \leq \I -1} \int_0^t \int_{\Sigma_\tau} \frac{w}{(1+t)^{1-2C\varepsilon}} \vert \p Z^J W^1 \vert^2 d \tau d^3 x+ \varepsilon ^3 \,.
\end{align*}
\end{lemma}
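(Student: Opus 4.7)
The plan is to follow Lemma \ref{lemma-11.3} closely, replacing the analysis of $\hat Z^I F$ by an analysis of the commutator $D^I = \tbox_g Z^I - \hat Z^I \tbox_g$. Writing $\tbox_g = \tbox_m + H^{\rho \sigma} \p_\rho \p_\sigma$ and recalling that $\hat Z$ is calibrated so that $[\tbox_m, \hat Z] = 0$, the commutator $D^I$ is driven entirely by $H$ and its derivatives. The starting point is the pointwise bound of Proposition \ref{prop-5.3}, which is schematically
\begin{align*}
\vert D^I h^1 \vert_\UU + \vert D^I \psi \vert_\K
\lesssim \sum_{\substack{\J_1 + \J_2 \leq \I \\ \J_2 < \I}}
\vert Z^{J_1} H \vert \, \vert \p \p Z^{J_2} W^1 \vert
+ (\text{lower-order tails}),
\end{align*}
with the null-frame refinement (cf.\ Proposition \ref{prop-8.2}) that the $\LL$-component of $H$ is paired with the bad double derivative, while every other null component of $H$ pairs with at least one good derivative $\pgood Z^{J_2+1} W^1$.

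First I would split the sum into two regimes according to $\J_1$. When $\J_1 \leq \I/2$, I feed the pointwise assumption \eqref{eq:thm-11.1-assumpt1} on $H$ (and on $Z H$) and, for higher $\J_1$, Proposition \ref{prop-8.2} to convert $\vert \p Z^{J_1} H \vert_\LT$ into good derivatives plus tails. This yields
\begin{align*}
\vert D^I W^1 \vert \lesssim
\frac{\varepsilon \, \vert \p Z^{J_2+1} W^1 \vert}{1+t}
+ \frac{\varepsilon (1+\qv)^{\mu'-1/2}}{(1+t+\qv)^{1-C\varepsilon}} \vert \pgood Z^{J_2+1} W^1 \vert
+ \frac{\varepsilon^2}{1+t+\qv} \, \frac{\vert Z^{J_2+1} W^1 \vert}{1+\qv},
\end{align*}
entirely analogous to Lemma \ref{lemma-11.2}. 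When $\J_1 > \I/2$, so that $\J_2 < \I/2$, I instead apply the pointwise decay \eqref{eq:thm-11.1-assumpt2} to $\vert \p \p Z^{J_2} W^1 \vert$ and transfer the loss $(1+t+\qv)^{-1+C\varepsilon}$ into the $\J \leq \I - 1$ sum with weight $w/(1+\tau)^{1-2C\varepsilon}$ in the conclusion.

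Next I square, multiply by $(1+\tau) w$, integrate in spacetime, and sort the output against the three classes of terms on the right-hand side. The $\pgood$ contributions pair with $w'$ thanks to the relation $(1+\qv)w \lesssim w'$ dictated by \eqref{eq:def-w} (and the compatible choice of $\varpi$ in \eqref{eq:def-varpi}). The pointwise $\vert Z^{J_2+1} W^1 \vert^2 / (1+\qv)^2$ piece is converted to $\vert \p Z^{J_2+1} W^1 \vert^2 w$ by the weighted Hardy inequality, Corollary \ref{corol-13.3}, just as in Lemma \ref{lemma-11.3}. The purely Schwarzschildean tails descend from Lemma \ref{lemma-9.9}; being $\mathcal{O}(\varepsilon^2 (1+t+\qv)^{-3})$ they integrate against $(1+\tau) w$ to an $\varepsilon^3$ remainder. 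Collecting all contributions produces precisely the three-term bound in the statement.

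The main obstacle is the bookkeeping for the critical component $H_{LL}$, which is the single contraction of $H^{\rho\sigma}\p_\rho\p_\sigma$ that does not produce a good derivative on $W^1$; here $|\p W|$ only decays like $t^{-1} \ln t$ by Proposition \ref{prop-10.1a}. The saving grace is the wave-coordinate identity of Lemma \ref{lemma-8.1}, which upgrades every derivative of the $\LT$ and $\LL$ components of $H$ into a good derivative plus strictly lower-order cubic corrections; together with the strong decay of $\vert H \vert_\LT$ and $\vert ZH \vert_\LL$ in \eqref{eq:thm-11.1-assumpt1}, this ensures the $\varepsilon/(1+t)$ factor accompanies exactly the $H_{LL} \p_q^2$ pairing and allows the energy argument to close. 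Since $\psi_k$ inherits the best decay rates (it has no non-null quadratic inhomogeneity), the $D^I \psi$ estimate is strictly easier and produces no terms beyond those already appearing for $D^I h^1$.
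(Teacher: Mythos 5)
Your overall strategy is the right one and matches what the paper does (which essentially defers to \cite{LR:04}): start from Proposition~\ref{prop-5.3}, split the sum according to how many $Z$'s land on $H$, use \eqref{eq:thm-11.1-assumpt1} on the $\LL$ and $\LT$ components of $H$ with at most one $Z$, use \eqref{eq:thm-11.1-assumpt2} and the wave-gauge machinery for the rest, then square, weight by $(1+\tau)w$, and convert the $\vert Z^J W^1\vert/(1+\qv)$ pieces to derivative terms via Corollary~\ref{corol-13.3}.

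There are two concrete inaccuracies worth flagging. First, the claim that ``$(1+\qv)w \lesssim w'$'' is false: from \eqref{eq:def-w}, for $q>0$ one has $w\sim (1+q)^{1+2\gamma}$ and $w'\sim (1+q)^{2\gamma}$, so $(1+q)w \sim (1+q)^{2+2\gamma} \gg w'$; and for $q<0$ one has $w\sim 1$ and $w'\sim (1+\qv)^{-1-2\mu}$, so it fails even more badly. The absorption of the $\pgood$ contributions into $\varepsilon\, w'$ is genuine, but the working inequality is the more refined one
\[
\frac{(1+\qv)^{2\mu'-1}(1+\tau)\,w}{(1+\tau+\qv)^{2-2C\varepsilon}} \lesssim w' \,,
\]
which holds precisely because $\mu'+\mu < 1/2$ and $\varepsilon$ is small; it is not a pointwise comparison between $w$ and $w'$ alone. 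Second, in your intermediate pointwise bound you place the good derivative $\pgood$ on $Z^{J_2+1}W^1$, i.e.\ on the variable the commutator is applied to, but Proposition~\ref{prop-5.3} only ever produces $\vert \p Z^K \phi\vert$ there. The $\pgood$ in the conclusion of the lemma arises from upgrading the $\LL$/$\LT$ seminorms of $Z^J H$ (via Proposition~\ref{prop-8.2} or Lemma~\ref{lemma-10.4}), so it lands on the $H\approx -h^1-h^0$ factor, not on the second factor. These are fixable bookkeeping issues rather than structural gaps; the key ingredients you invoke (wave gauge, Proposition~\ref{prop-5.3}, the two sets of decay assumptions, Hardy) are the correct ones and the $\psi_k$ and $h^1_\mn$ contributions are indeed handled uniformly, since the commutator only sees the structure of $\tbox_g$ and not the structure of the inhomogeneity.
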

\begin{proof}
The proof follows as in \cite{LR:04} by using the decay estimates for $\vert \p Z^I h \vert_\UU$ and $\vert \p Z^I \psi \vert_\K$ given in \eqref{eq:thm-11.1-assumpt2}. The stronger estimates required on $\vert Z^J H \vert _\LL$ and $\vert H \vert_\LT$, are unchanged and given in \eqref{eq:thm-11.1-assumpt1}. For the full technical details we refer to \cite{LR:04}. 
\end{proof}

For the last term in \eqref{eq:thm-11.1-terms-to-control}, recall that $h^0_\mn$, and thus $Z^I \tbox_g h^0_\mn$, is determined by calculating directly from the definition in \eqref{eq:def-h0-h1}. Thus the next Lemma follows identically from \cite{LR:04}, with the proof using Lemma \ref{lemma-9.9} and Corollary \ref{corol-13.3} on the $\vert Z^I h^1 \vert$ term. 
\begin{lemma}[Lemma 11.4 from \cite{LR:04}] \label{lemma-11.4}
Assuming the conditions of Theorem \ref{theorem-11.1} then
\begin{align*}
\int_0^t & \int_{\Sigma_\tau} \vert Z^I F^0 \vert_\UU  \vert \p Z^I h^1 \vert_\UU w d \tau d^3 x \\ &
\lesssim \varepsilon \sum_{\vert J \vert \leq \vert I \vert}  \Big( \int_0^t \int_{\Sigma_\tau} \frac{\vert \p Z^J h^1 \vert^2_\UU}{(1+\tau)^2} w d \tau d^3 x+ \int_0^t \left( \int_{\Sigma_\tau} \vert \p Z^J h^1 \vert^2_\UU w d^3x \right)^{1/2} \frac{d \tau}{(1+\tau)^{3/2}} \Big) \,.
\end{align*}
\end{lemma}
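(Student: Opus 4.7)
The plan is to decompose the pointwise bound of Lemma \ref{lemma-9.9} into its two characteristic pieces and treat each contribution to the spacetime integral separately. Lemma \ref{lemma-9.9} provides
\begin{equation*}
|Z^I F^0|_\UU \;\lesssim\; A(\tau,q) \;+\; \frac{C\varepsilon}{(1+\tau+|q|)^3}\sum_{|J|\leq|I|}|Z^J h^1|_\UU,
\end{equation*}
where $A(\tau,q)$ denotes the explicit Schwarzschildean contribution, bounded by $\varepsilon(1+\tau+|q|)^{-3}$ for $q<0$ and by $\varepsilon^2(1+\tau+|q|)^{-4+\delta}(1+|q|)^{-\delta}$ for $q>0$. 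The two pieces will respectively yield the $L^2$--square--root term and the $|\p Z^J h^1|^2/(1+\tau)^2$ term on the right-hand side of the lemma.

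For the pure Schwarzschildean contribution, I would apply Cauchy--Schwarz on the spatial integral,
\begin{equation*}
\int_{\Sigma_\tau} A(\tau,q)\,|\p Z^I h^1|_\UU\,w\,d^3x \;\leq\; \left(\int_{\Sigma_\tau} A(\tau,q)^2\,w\,d^3x\right)^{1/2}\left(\int_{\Sigma_\tau}|\p Z^I h^1|^2_\UU\,w\,d^3x\right)^{1/2}.
\end{equation*}
Since $h^0$ is supported where $r\gtrsim t$, on this support $1+\tau+|q|\sim 1+\tau$ and a direct change-of-variables to $s=|q|$ (splitting $s\leq \tau$ and $s>\tau$) shows $\int_{\Sigma_\tau} A^2 w\,d^3 x \lesssim \varepsilon^2(1+\tau)^{-3}$, with a comfortable margin coming from the better decay in the $q>0$ region. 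Taking the square root and integrating in $\tau$ produces exactly the second term on the right-hand side of the claim, with $J=I$.

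For the cross piece, Young's inequality with balanced weights gives
\begin{equation*}
\frac{\varepsilon\,|Z^J h^1|_\UU\,|\p Z^I h^1|_\UU}{(1+\tau+|q|)^3}\,w \;\lesssim\; \varepsilon\left(\frac{|Z^J h^1|^2_\UU}{(1+\tau+|q|)^4} + \frac{|\p Z^I h^1|^2_\UU}{(1+\tau+|q|)^2}\right)w.
\end{equation*}
The second summand is immediately controlled by $\varepsilon|\p Z^I h^1|^2_\UU w/(1+\tau)^2$, contributing to the first target term. For the first summand, using the crude inequality $(1+\tau+|q|)^{-4}\leq(1+\tau)^{-2}(1+|q|)^{-2}$ extracts the $(1+\tau)^{-2}$ factor, leaving the spatial integral $\int_{\Sigma_\tau}|Z^J h^1|^2_\UU w/(1+|q|)^2\,d^3 x$, which is precisely the configuration dealt with by the Hardy-type inequality of Corollary \ref{corol-13.3}, producing $\lesssim \int_{\Sigma_\tau}|\p Z^J h^1|^2_\UU w\,d^3 x$. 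Summing over $|J|\leq|I|$ then closes the argument.

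I expect the only subtlety to lie in the compatibility between the growing weight $w(q)\sim(1+|q|)^{1+2\gamma}$ in the $q>0$ regime and the Hardy step --- one must apply Corollary \ref{corol-13.3} with weight $w/(1+|q|)^2$ rather than the ``bare'' $w$, but this is precisely the form of Hardy's inequality recorded in Appendix \ref{section:hardy}. Everything else is a routine bookkeeping once the two-term decomposition of $|Z^I F^0|_\UU$ is in place.
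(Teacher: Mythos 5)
Your proposal is correct and follows the same route the paper indicates: the pointwise bound of Lemma \ref{lemma-9.9} splits $\vert Z^I F^0\vert_\UU$ into a pure Schwarzschildean part (treated by Cauchy--Schwarz in space, since $\int_{\Sigma_\tau} A^2 w\,d^3x\lesssim\varepsilon^2(1+\tau)^{-3}$, giving the square-root term with $J=I$) and a cross term linear in $\vert Z^J h^1\vert_\UU$ (treated by Young's inequality and then Corollary \ref{corol-13.3} with $a=1$, giving the $(1+\tau)^{-2}$ term). The only remark worth making is that your use of the $h^0$-support condition $r\gtrsim t$ is not strictly needed for the $L^2$ bound on $A$ --- the decay $(1+\tau+\vert q\vert)^{-1}\leq(1+\tau)^{-1}$ together with $r\lesssim 1+\tau+\vert q\vert$ already suffices --- but invoking it is harmless.
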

We can now complete the proof of the main theorem. 
\begin{proof}[Proof of Theorem \ref{theorem-11.1}]
First we discuss some features of the volume form. For some function $\phi$, the inequality
\begin{align*}
 \left\vert \int_{\Sigma_t} \phi \vert \det g \vert^{1/2} d^3 x -  \int_{\Sigma_t} \phi \vert \det m \vert^{1/2} d^3 x \right\vert \leq \frac{1}{4} \left\vert \int_{\Sigma_t} \phi \, d^3 x \right\vert \,,
\end{align*}
holds since by Corollary \ref{corol-9.4} $\vert H \vert_\UU \leq C \varepsilon $ and 
$$\vert \det  g \vert^{1/2} = \vert \det  m \vert^{1/2} - \frac{1}{2}\tr_{m} H + \mathcal{O}(H^2)=1- \frac{1}{2}\tr_{m} H + \mathcal{O}(H^2) \,.$$
This means, up to some constant, we can use the background Minkowski volume form but still control an energy $\mathcal{E}_N[W^1](t)$ with volume form actually coming from the perturbed metric. 
Then precisely as in \cite{LR:04}, Lemmas \ref{lemma-11.3}, \ref{lemma-11.4}, \ref{lemma-11.5} and the integrated energy inequality \eqref{eq:prop7.1-energy-W} imply
\begin{equation} \begin{split}
& \int_{\Sigma_t }  \left(  \vert \p Z^I h^1 \vert^2_\UU + \vert  \p Z^I \psi \vert^2_\K \right) w d^3x  + \int_0^t \int_{\Sigma_\tau} \left( \vert \pgood Z^I h^1 \vert^2_\UU + \vert \pgood Z^I \psi \vert^2_\K \right) w' d^3 x d \tau \\
&  \leq 8  \int_{\Sigma_0} \left( \vert \p Z^I h^1 \vert^2_\UU + \vert \p Z^I \psi \vert^2_\K \right) w dx + C \varepsilon \sum_{\J \leq \I} \int_0^t \int_{\Sigma_\tau} \frac{\vert \p Z^J W^1 \vert^2}{1+ \tau} w d^3x d\tau \\
& \qquad + C \varepsilon \sum_{\J \leq \I} \int_0^t \int_{\Sigma_\tau} \vert \pgood Z^J W^1 \vert^2 w' d^3x d\tau  + C \varepsilon \sum_{\J \leq \I -1} \int_0^t \int_{\Sigma_\tau} \frac{\vert \p Z^J W^1 \vert^2}{(1+\tau)^{1-2C\varepsilon}} w d^3 x d \tau \\
& \qquad + C_N \varepsilon \sum_{\J \leq \I} \int_0^t \left( \int_{\Sigma_\tau} \vert \p Z^J h^1 \vert^2_\UU w d^3 x \right)^{1/2} \frac{d \tau}{(1+\tau)^{3/2}} + C \varepsilon^3 \,.
\end{split} \end{equation}
Similar to \cite{LR:04}, define a quantity for the `good' derivative
\begin{align*} 
S_m(t) &:= \sum_{\vert I \vert \leq m} \int_0^t \int_{\Sigma_\tau} \left( \vert \pgood Z^I h^1 \vert_\UU^2 +  \vert \pgood Z^I \psi \vert^2_\K \right) w' d^3x \,.
\end{align*}
Note that apart from the derivative, this involves a spacetime integral compared to \eqref{eq:def-weighted-energy} which only has a spatial integral. Then for $m \leq N$ we obtain the inequality
\begin{align*}
\mathcal{E}_m(t) + S_m(t) & \leq 8 \mathcal{E}_m(0) + C \varepsilon S_m(t) + C \varepsilon^3 \\
 & +C_N \varepsilon \int_0^t\frac{\mathcal{E}_m(\tau)^{1/2}}{(1+\tau)^{3/2}} d \tau + C \varepsilon \int_0^t \frac{\mathcal{E}_m(\tau)}{1+\tau} d \tau + C \varepsilon \int_0^t \frac{\mathcal{E}_{m-1} (\tau)}{(1+\tau)^{1-C\varepsilon}}  d \tau \,.
\end{align*}
One can now follow the argument in \cite{LR:04} and use Gronwall's inequality and induction to deduce
$$ \mathcal{E}_m (t) \leq C_N \varepsilon^2 (1+t)^{2C \varepsilon} \,,$$
for all $m \leq N$. Since the constants are independent of time we have
$$ \mathcal{E}_N (t) \leq C_N \varepsilon^2 (1+t)^{C\varepsilon} \,, $$
for all $0 \leq t \leq T$. 
\end{proof}

\begin{remark}  
As in Remark \ref{remark-ks-ineq-kk}, the Kaluza-Klein example will have all variables independent of the torus coordinates. Thus any integrals in the above proof of Theorem \ref{theorem-11.1} will pick up an additional, non-zero, constant from the volume of the torus. 
\end{remark}

\begin{theorem}[Modified Theorem 9.1 from \cite{LR:04}] \label{thm-9.1}Let $( \Sigma_0,\gamma_{ij}, K_{ij}, f_k, g_k)$ be smooth initial data for \eqref{eq:general-pde},  asymptotically flat in the sense of \eqref{eq:intro-decay-id}, with $\Sigma_0$ diffeomorphic to $\R^3$. Then there exists a constant $\varepsilon_0>0$ such that for all $\varepsilon \leq \varepsilon_0$ and initial data such that 
$$ E_N (0)^{1/2} + M \leq \varepsilon \,, $$
the solution $(m_\mn + h_\mn(t), \psi_k(t))$ to the system \eqref{eq:general-pde} can be extended to a global-in-time smooth solution  in which the energy satisfies
\begin{equation} \mathcal{E}_N[W^1](t) \leq C_N \varepsilon (1+t)^{C \varepsilon} \,.  \end{equation}
where $C_N$ and $C$ depend only on $N$.
\end{theorem}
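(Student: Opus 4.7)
The plan is to execute the continuous induction (bootstrap) scheme outlined in the discussion following Theorem \ref{theorem:intro1}. First I would invoke standard local well-posedness for the quasilinear wave system \eqref{eq:general-pde} in the wave gauge to produce a unique smooth solution $(m_\mn + h_\mn(t), \psi_k(t))$ on a maximal interval $[0, T_0)$, with $T_0$ characterised by blow-up of $\mathcal{E}_N[W^1](t)$ as $t \to T_0^-$. The initial energy bound $\mathcal{E}_N[W^1](0) \lesssim \varepsilon^2$, which follows from $E_N(0)^{1/2} + M \leq \varepsilon$ via the definitions \eqref{eq:intro-initial-energy} and \eqref{eq:def-weighted-energy} (after converting the Cauchy data $(\gamma^1, K, f, g)$ into data for $(\p h^1, \p \psi)$ using the constraints and the wave-coordinate condition), combined with continuity of $\mathcal{E}_N[W^1]$ in time, then delivers a maximal $T \in (0, T_0]$ on which the bootstrap assumption \eqref{eq:growth-control-h1} holds for some fixed $\delta \in (0, \min(\gamma, 1/4))$.

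Next I would verify that this bootstrap, combined with the wave-coordinate condition \eqref{eq:wave-coord}, places the solution inside the hypotheses of Theorem \ref{theorem-11.1}. The pointwise bound \eqref{eq:thm-11.1-assumpt1} on $\vert \p W \vert_G$ is exactly Proposition \ref{prop-10.1a}, and the bounds on $\vert H \vert_\LT$ and $\vert Z H \vert_\LL$---which involve only the metric perturbation, not the $\psi_k$ variables---are supplied by Proposition \ref{prop-10.1b}. The stronger pointwise bounds \eqref{eq:thm-11.1-assumpt2}, valid for $\I \leq N/2 + 2$, are the content of Proposition \ref{prop-10.2}, provided the parameters are chosen with $\gamma' \in (C\varepsilon, \gamma - \delta)$ and $\mu' \in (\delta, 1/2 - \mu)$; these intervals are non-empty once $\varepsilon, \mu, \delta$ are taken sufficiently small. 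Assumption \eqref{eq:thm-11.1-assumpt3} is then immediate.

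Third, Theorem \ref{theorem-11.1} produces the improved estimate $\mathcal{E}_N[W^1](t) \leq C_N \varepsilon^2 (1+t)^{C\varepsilon}$ on $[0, T]$. Choosing $\varepsilon_0$ so small that $C\varepsilon_0 \leq \delta$ yields $C_N \varepsilon^2 (1+t)^{C\varepsilon} \leq C_N \varepsilon^2 (1+t)^\delta < 2 C_N \varepsilon (1+t)^\delta$ strictly for every $t \in [0, T]$ and $\varepsilon \leq \varepsilon_0$, which by continuity of $\mathcal{E}_N[W^1]$ contradicts maximality of $T$ unless $T = T_0$; then the finite bound on $\mathcal{E}_N[W^1]$ at $T_0$ contradicts the blow-up characterisation of $T_0$ unless $T_0 = \infty$. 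Absorbing one factor of $\varepsilon$ into the constant then delivers the stated bound.

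The main obstacle is concealed inside Theorem \ref{theorem-11.1} and Proposition \ref{prop-10.2}: the entire chain of weak pointwise decay (Corollary \ref{corol-9.4}), strong pointwise decay (Proposition \ref{prop-10.2}), and weighted $L^2$ energy estimates must be made mutually compatible through a simultaneous choice of the exponents $\gamma, \gamma', \mu, \mu', \delta$ and the constants $C_k, M_k$. The feature permitting the closure is the assumption $N \geq 8$, which ensures $N/2 + 2 \leq N - 2$ so that in each bilinear or trilinear term of Lemmas \ref{lemma-11.2}--\ref{lemma-11.5} the factor carrying the smaller number of $\mathcal{Z}$-derivatives can be controlled pointwise via Proposition \ref{prop-10.2}, while the factor carrying all remaining derivatives is absorbed in $L^2$ against the bootstrap energy. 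The remaining ingredients---generalised wave-gauge bounds (Lemma \ref{lemma-8.1}, Proposition \ref{prop-8.2}), null-form estimates (Lemma \ref{lemma-4.2}), and the treatment of the Schwarzschildean tail $h^0$ (Lemma \ref{lemma-9.9})---are inherited essentially unchanged from \cite{LR:04}.
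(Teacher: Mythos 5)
Your proposal is correct and follows essentially the same route as the paper: establish $\mathcal{E}_N[W^1](0)\lesssim\varepsilon^2$, invoke Propositions \ref{prop-10.1a}, \ref{prop-10.1b} and \ref{prop-10.2} under the bootstrap assumption \eqref{eq:growth-control-h1} to verify hypotheses \eqref{eq:thm-11.1-asmpt}, apply Theorem \ref{theorem-11.1}, and close the continuity argument by choosing $\varepsilon_0$ small. The additional details you supply (local well-posedness and the blow-up characterisation of $T_0$, the explicit parameter windows for $\gamma'$ and $\mu'$, and the role of $N\geq 8$) are consistent with the paper's argument and merely make explicit what the paper leaves implicit.
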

\begin{proof}
It holds that
$$ \mathcal{E}_N[W^1](0) \lesssim E_N[W^1](0) + M^2 \lesssim \varepsilon^2 \,. $$
Hence there exists a maximal $T>0$ on which the following bootstrap assumption holds 
\begin{equation} \mathcal{E}_N(t) \leq 2 C_N \varepsilon (1+t)^\delta \,, \quad \forall \, 0 \leq t \leq T < T_0 \,, \label{eq:thm11.1-bootstrap} \end{equation}
for some fixed $\delta \in (0, 1/4)$ with $\delta < \gamma<1/2$.
Condition \eqref{eq:thm11.1-bootstrap} then allows us to use Propositions \ref{prop-10.1a} and \ref{prop-10.1b} to derive \eqref{eq:thm-11.1-assumpt1} and Proposition \ref{prop-10.2} to derive \eqref{eq:thm-11.1-assumpt2}. 
Theorem \ref{theorem-11.1} then implies that we may choose a sufficiently small $\varepsilon_0$ such that 
$$ \mathcal{E}_N (t) \leq C_N \varepsilon^2 (1+t)^{C \varepsilon} \leq C_N \varepsilon (1+t)^\delta \quad \forall \, t \in [0, T] \,, \quad \forall \, \varepsilon \leq \varepsilon_0 \,.$$
Thus we have contracted the maximality of $T$ and so $(m_\mn + h^0_\mn + h^1_\mn(t), \psi_k(t))$ must be a global-in-time solution.
\end{proof}
\appendix
\section{Non-Minimally Coupled Einstein-Maxwell-Scalar system from the $n=0$ mode reduction} \label{section:ems-full}
In this section we derive the non-minimally coupled Einstein-Maxwell-Scalar system which arises from higher-dimensional vacuum gravity truncated to the zero mode over a $\tee$. The $(3+d+1)-$dimensional coordinates are labelled by Greek indices $\mu, \nu$ and the $(3+1)-$dimensional flat spacetime labelled by Roman letters $a,b$. The compact spacetime is labelled by $\abar, \bbar$, though note that $\alpha, \beta$ are also frequently used in the literature. 
By considering only the $n=0$ modes, the higher-dimensional metric $G_\mn$ depends only on the non-compact coordinates
\begin{equation}
G_\mn(x^a, x^\abar) = G_\mn(x^a) \,. \label{eq:ems-G}
\end{equation}
It is standard to use an Ansatz to parametrise the space of solutions satisfying \eqref{eq:ems-G} in terms of the $(3+1)-$dimensional metric $g_{ab}$, collection of vector potentials $\A_a^\abar$ and dilatons $\hat{g}_\abbar$
\begin{equation}
G_\mn := \begin{pmatrix}
g_{ab} \vert \hat{g} \vert^\lambda + \A_a^\abar \A_b^\bbar \hat{g}_\abbar & \A^\cbar_a \hat{g}_{\bbar \cbar} \\
\A_b^\cbar \hat{g}_{\abar \cbar} & \hat{g}_\abbar
\end{pmatrix} \,.\label{eq:ems-ansatz}
\end{equation}
where $\lambda := - \frac{1}{2}$ and $\vert \hat{g} \vert :=\vert \det \hat{g} \vert$.  
The inverse is now
$$ G^\mn = \begin{pmatrix}
g^{ab} \vert \hat{g} \vert^{- \lambda} & - \A_c^\bbar g^{ac} \vert \hat{g} \vert^{- \lambda} \\
- \A_c^\abar g^{ac} \vert \hat{g} \vert^{- \lambda} & \hat{g}^\abbar + \A_c^\abar \A_d^\bbar g^{cd} \vert \hat{g} \vert^{- \lambda} 
\end{pmatrix} \,. $$
The higher-dimensional wave gauge \eqref{eq:intro-wave-gauge} is equivalent to
$$G^{\nu \rho} \p_\rho G_\mn = \frac{1}{2} G^{\rho \sigma} \p_\mu G_{\rho \sigma} \,. $$
 Under \eqref{eq:ems-G} and \eqref{eq:ems-ansatz} this becomes
\begin{subequations}
\begin{align}
g^{cb} \p_b g_{ac} &= \frac{1}{2} g^{cd} \p_a g_{cd} \,, \label{eq:ems-gauge1}\\  
g^{ab} \p_b \A^\cbar_a &= 0 \,. \label{eq:ems-gauge2}
 \end{align}
\end{subequations}
\eqref{eq:ems-gauge1} is the standard wave-coordinate condition on the $(3+1)-$dimensional non-compact spacetime and \eqref{eq:ems-gauge2} reduces to the standard Lorenz gauge $\nabla^a A_a^\cbar = 0$ using an equivalent form of \eqref{eq:ems-gauge1}, namely $\Gamma^d := g^{ab} \Gamma^d_{ab}=0$. 
To determine the equations of motion, we express the Lagrangian for the Einstein equations in terms of the Ansatz \eqref{eq:ems-ansatz}.
\begin{align*}
\mathcal{L}:=\sqrt{\vert \det G \vert} R[G] &= \sqrt{\vert \det G \vert} \left( \frac{1}{4} G^\mn ( G^{\rho \sigma} \p_\mu G_{\rho \sigma} ) (G^{\lambda \tau} \p_\nu G_{\lambda \tau}) + \frac{1}{2} (\p_\mu G^\mn) G^{\rho \sigma} \p_\nu G_{\rho \sigma} \right. \\
& \qquad  \qquad \qquad \left. + \frac{1}{4} G^\mn \p_\mu G^{\rho \sigma} \p_\nu G_{\rho \sigma} - \frac{1}{2} G^\mn \p_\mu G^{\rho \sigma} \p_\rho G_{\nu \sigma} \right) \\
&=  \sqrt{\vert \det g \vert} \left( R[g] - \frac{1}{8} g^{ab} (\hat{g}^{\cbar \dbar} \p_a g_{\cbar \dbar} ) ( \hat{g}^{\underline{e} \, \underline{f}} \p_b \hat{g}_{\underline{e} \, \underline{f}}) + \frac{1}{4} g^{ab} \p_a \hat{g}^\abbar \p_b \hat{g}_\abbar \right. \\
& \qquad \qquad \qquad \left. - \frac{1}{4} \vert \hat{g} \vert^{1/2} g^{ab} g^{cd} \hat{g}_\abbar \mathcal{F}^\abar _{ac} \mathcal{F}^\bbar_{bd} \right) \,,
\end{align*}
where $\mathcal{F}^\abar_{ab} := \p_a \A^\abar_b - \p_b  \A^\abar_a$. 
Recall the Einstein-Hilbert action is
$$ S_{EH} := \int \mathcal{L}d^\mu x = \int \sqrt{\vert \det G \vert} R[G] d^\mu x\,.$$
Varying $S_{EH}$ with respect to $g_{ab} \,, \A^\abar_a$ and $\hat{g}_\abbar$ respectively we obtain the following equations of motion: 
\begin{subequations}
\begin{align}
R_{ab}[g] - \frac{1}{2} R[g] g_{ab} &= T_{ab} \,,\notag \\
T_{ab} &:= - \frac{1}{8} (\hat{g}^{\cbar \dbar} \p_a g_{\cbar \dbar} ) ( \hat{g}^{\underline{e} \, \underline{f}} \p_b \hat{g}_{\underline{e} \, \underline{f}}) + \frac{1}{16} g_{ab} \left( g^{cd} (\hat{g}^{\cbar \dbar} \p_c g_{\cbar \dbar} ) ( \hat{g}^{\underline{e} \, \underline{f}} \p_d \hat{g}_{\underline{e} \, \underline{f}})\right) \label{eq:ems-eq-g}\\
& \quad + \frac{1}{4} \p_a \hat{g}^\abbar \p_b \hat{g}_\abbar - \frac{1}{8} g_{ab} g^{cd} \p_c \hat{g}^\abbar \p_d \hat{g}_\abbar - \frac{1}{2} \vert \hat{g} \vert^{1/2} g^{cd} \hat{g}_\abbar \mathcal{F}^\abar_{ac} \mathcal{F}^\bbar_{bd} \notag \\
& \quad + \frac{1}{8} \vert \hat{g} \vert^{1/2} g_{ab} g^{cd} g^{ef} \hat{g}_\abbar \mathcal{F}^\abar_{ce} \mathcal{F}^\bbar_{df} \,, \notag \\
\nabla_a \left( \vert \hat{g} \vert^{1/2} \mathcal{F}^{ab}_\abar \right) &= 0 \,, \\
\tbox_g \hat{g}_\abbar &= \frac{1}{2} g^{ab} \p_a \hat{g}_\abbar ( g^{\cbar \, \dbar} \p_b \hat{g}_{\cbar \dbar} )- g^{ab} \hat{g}^{\cbar \, \dbar} \p_a \hat{g}_{\abar \,\cbar} \p_b \hat{g}_{\bbar \,\dbar}- 2 g^{ab} \p_b \hat{g}^{\cbar \, \dbar} \hat{g}_{\dbar \, \bbar} \p_a \hat{g}_{\abar \, \cbar} \\
& \quad - \frac{1}{4} \vert \hat{g} \vert^{1/2} \hat{g}_\abbar \mathcal{F}^\cbar_{ab} \mathcal{F}^{ab}_\cbar + \frac{1}{2} \vert \hat{g} \vert^{1/2} \mathcal{F}_{ab \, \abar} F^{ab}_\bbar - \frac{1}{2} g^{ab} \p_a \left( \hat{g}_\abbar (g^{\cbar \, \dbar} \p_b \hat{g}_{\cbar \, \dbar}) \right) \notag \,.
\end{align} \label{eq:ems-eom}
\end{subequations}
These equations will imply constraint equations for the initial data $(\Sigma_0, \gamma_{ij}, K_{ij}, \mathcal{G}^\abar_{ij}, \mathcal{H}^\abar_{ij}, \hat{\gamma}_\abbar, \hat{K}_\abbar)$ that must be satisfied on the initial surface $\Sigma_0$. Note here $\Sigma_0$ is a 3-dimensional manifold and so the solution we seek $(g_{ab}, \A_a^\abar, \hat{g}_\abbar )$ will be such that $\gamma_{ij}$ is the pull-back of $g_{ab}$ to $\Sigma_0$, $K_{ij}$ is the second fundamental form of $\Sigma_0$ and the restriction of $(\A_a^\abar, \mathcal{F}_{ab}^\abar)$ and $(\hat{g}_\abbar, \p_t \hat{g}_\abbar)$ to $\Sigma_0$ is $(\mathcal{G}^\abar_{ij}, \mathcal{H}^\abar_{ij})$ and $(\hat{\gamma}_\abbar, \hat{K}_\abbar)$ respectively. 

If $n$ is the normal to $\Sigma_0$ normalised such that $g(n, n) = -1$, then the constraint equations are
\begin{equation} \begin{split}
R[\gamma] - K^{ij} K_{ij} + K^i{}_i K^j{}_j &= n^a n^b T_{ab} \vert_{\Sigma_0} \,, \\
D_j K^j {}_i - D_i K^j{}_j &= n^a(g^{bc} - n^b n^c) T_{ac} \vert_{\Sigma_0} \,, \\
D_i \left( \vert \det \hat{\gamma} \vert^{1/2} \mathcal{F}^{i0}_\abar \right) \vert_{\Sigma_0} &= 0 \,,
\end{split} \label{eq:ems-constrainteq} \end{equation} 
where $D$ is the covariant derivative associated with $\gamma$. 
Thus from the lower-dimensional perspective the constraint equations \eqref{eq:ems-constrainteq} can be interpreted as having some initial energy density $\rho :=n^a n^b T_{ab}$ coming from the scalar fields and Maxwell field strengths. 

Theorem \eqref{intro:corol-motiv} implies that $h_\abbar (t) \to 0$ as $t \to \infty$. Thus $\hat{g}_\abbar = G_\abbar (t) \to \delta_\abbar$ as $t \to \infty$ and so the radii of the $\tee$ stabilise to their initial (unperturbed) value. Similarly by inverting the Ansatz \eqref{eq:ems-ansatz} one obtains that the vector potential $\A_a^\abar \to 0$ as $t\to\infty$.

\subsection*{Acknowledgements}
The author greatly thanks Pieter Blue for helpful discussions and advice during this project. 

\section{Hardy Inequality and Further Identities} \label{section:hardy}
In this section we state some important results proved in \cite{LR:04}, omitting the Lemmas which lead to these. 
\begin{corollary}[Corollary 13.3 from \cite{LR:04}] \label{corol-13.3}
If $\gamma >0$ and $\mu >0$, then for any $a \in [-1, 1]$ and any $\phi \in C^\infty_0(\R^3)$ we have
$$ \int \frac{\vert \phi \vert^2}{(1+\qv)^2} \frac{w dx}{(1+t+\qv )^{1-a}} \lesssim \int \vert \p \phi \vert^2 \frac{w dx}{(1+t+\qv )^{1-a}} \,. $$
If additionally $a<2 \min (\gamma, \mu)$ then
$$ \int \frac{\vert \phi \vert^2}{(1+\qv)^2} \frac{(1+\qv)^{-a}}{(1+t+\qv)^{1-a}} \frac{w dx}{(1+q_- )^{2 \mu}} \lesssim \int \vert \p \phi \vert^2 \min \left( w', \frac{w}{(1+t+\qv )^{1-a}} \right) dx \,,$$
where $q_- = \qv$ when $q<0$ and $q_- = 0$ when $q>0$. 
\end{corollary}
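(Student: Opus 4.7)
The inequalities are weighted Hardy-type estimates on $\R^3$, so the natural strategy is to reduce them to a one-dimensional inequality along radial lines and then integrate over angles. Fix $t$ and write $dx = r^2\, dr\, d\omega$. For fixed $\omega$, $\phi$ becomes a function of $r$ alone, and since $\partial_r\phi$ can be bounded pointwise by $|\partial\phi|$, it suffices to establish a weighted 1D inequality of the form
\[
\int_0^\infty \frac{|\phi(r)|^2}{(1+|q|)^2}\, W(r)\, r^2\, dr \ \lesssim\ \int_0^\infty |\partial_r\phi(r)|^2\, W(r)\, r^2\, dr,
\]
where $q = t-r$ and $W(r) = w(q)/(1+t+|q|)^{1-a}$. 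The estimate then follows after integrating in $d\omega$.

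The core of the proof is a standard integration-by-parts trick: find an antiderivative $A(r)$ with $A'(r) \asymp r^2 W(r)/(1+|q|)^2$, multiply by $|\phi|^2$, integrate from $0$ to $\infty$, and use the fundamental theorem of calculus to obtain
\[
\int_0^\infty A'(r)\, |\phi|^2\, dr \ =\ -2\int_0^\infty A(r)\, \phi\, \partial_r\phi\, dr
\]
(the boundary terms vanish by compact support and $A(0)=0$ after an appropriate normalisation). Cauchy--Schwarz applied to the right-hand side, with the splitting $|2A\phi\,\partial_r\phi| \leq \epsilon\, (A')\,|\phi|^2 + \epsilon^{-1}\,(A^2/A')\,|\partial_r\phi|^2$, absorbs the $|\phi|^2$ term back into the left-hand side provided $A^2/A' \lesssim r^2 W(r)$, which is exactly what a suitable choice of $A$ must realise. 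The assumptions $\gamma>0$, $\mu>0$, $a\in[-1,1]$ are needed precisely to make this weight comparison hold uniformly.

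The main technical obstacle is that $W(r)$ is not monotone in $r$: because $w(q)$ has qualitatively different behaviour in $\{q>0\}$ and $\{q<0\}$, one must split the radial integral at $r=t$ and construct $A(r)$ piecewise. In the outgoing region $q<0$ (i.e. $r>t$) the weight $w \asymp 1+(1+|q|)^{-2\mu}$ is bounded but decays to $1$, whereas for $q>0$ (i.e. $r<t$) the weight grows like $(1+|q|)^{1+2\gamma}$; in each region one verifies by direct computation that the choice $A(r) = \int_0^r s^2\, W(s)/(1+|t-s|)^2\, ds$ satisfies $A^2/A' \lesssim r^2\, W(r)$, using $\gamma,\mu>0$ to control the integrals and the constraint $a\leq 1$ to keep the $(1+t+|q|)^{1-a}$ factor from being integrable in a harmful way. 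Matching the two pieces at $r=t$ produces only a finite boundary contribution that is absorbed by the bulk.

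For the second, refined inequality, the key observation is that $w'(q)$ is strictly positive with
\[
w'(q) \ \asymp\ \begin{cases} (1+|q|)^{2\gamma}, & q>0,\\ (1+|q|)^{-2\mu-1}, & q<0, \end{cases}
\]
so that on the ingoing side $q<0$ one has $w'(q)\asymp w(q)/(1+|q|)$, while on the outgoing side $q>0$ one has $w'(q)\asymp w(q)/(1+|q|)$ as well (up to the subleading constant term in $w$). The proof then uses the \emph{same} integration-by-parts identity but redistributes the resulting weight: the weight $w'/(1+|q|)^{-a}$ is pointwise comparable to $\min\!\bigl(w',\, w/(1+t+|q|)^{1-a}\bigr)$ on the support once one uses $a<2\min(\gamma,\mu)$, which is exactly the extra hypothesis needed to keep the integrals convergent in the outgoing region. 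The rest of the argument proceeds verbatim. The hardest part of the overall proof is therefore the bookkeeping of these weight comparisons across the hypersurface $q=0$; once the correct $A(r)$ is written down, the rest is a direct application of Cauchy--Schwarz and integration by parts.
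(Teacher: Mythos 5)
The paper itself does not prove this statement: it is quoted from Lindblad--Rodnianski \cite{LR:04}, and the preamble to the appendix containing it explicitly says that the lemmas leading to it are omitted. So the only thing to assess is whether your argument actually closes, and it does not.

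The framework you set up --- reduce to a radial one-dimensional inequality, integrate by parts against a positive antiderivative $A$, and close with Cauchy--Schwarz, which requires $A^2/A'\lesssim r^2W$ --- is the right skeleton for weighted Hardy inequalities. The gap is in the step ``one verifies by direct computation that the choice $A(r)=\int_0^r s^2W(s)(1+\vert t-s\vert)^{-2}\,ds$ satisfies $A^2/A'\lesssim r^2W(r)$''. That verification fails, and it fails precisely because $\gamma>0$. Evaluate at $r=t$, where $q=0$. On the one hand,
\begin{equation*}
A'(t)=\frac{t^2W(t)}{(1+\vert q\vert)^2}\Big\vert_{q=0}=\frac{t^2w(0)}{(1+t)^{1-a}}\asymp t^{1+a},
\qquad
r^2W(r)\big\vert_{r=t}\asymp t^{1+a}.
\end{equation*}
On the other hand, substituting $q'=t-s$, the cumulative integral has already swallowed the entire region $0<q'<t/2$ where $w(q')\asymp(1+q')^{1+2\gamma}$, so
\begin{equation*}
A(t)=\int_0^t\frac{(t-q')^2\,w(q')}{(1+t+q')^{1-a}(1+q')^2}\,dq'
\gtrsim t^{1+a}\int_0^{t/2}(1+q')^{2\gamma-1}\,dq'\asymp t^{1+a+2\gamma}.
\end{equation*}
Hence $A(t)^2/A'(t)\gtrsim t^{1+a+4\gamma}$, which overshoots the target $t^{1+a}$ by the unbounded factor $t^{4\gamma}$. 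Integrating the antiderivative outward from $r=0$ accumulates weight in exactly the wrong place: because $w$ grows as $q$ increases, the construction must be anchored at the cone $q=0$ and must exploit the monotonicity of $w$ together with the vanishing of the Jacobian $r^2$ at the origin. That is what the intermediate Lemmas 13.1--13.2 of \cite{LR:04}, which you are implicitly trying to bypass, are engineered to do; a single formula for $A$ integrated from $r=0$ cannot reproduce it.

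For the second inequality there is an additional, independent error. You assert that $w'(q)\asymp w(q)/(1+\vert q\vert)$ on \emph{both} sides of $q=0$. On the outgoing side $q>0$ this is correct, but for $q<0$ one has $w(q)=1+(1+\vert q\vert)^{-2\mu}\asymp 1$ while $w'(q)\asymp(1+\vert q\vert)^{-2\mu-1}$, so that
\begin{equation*}
\frac{w'(q)}{\,w(q)/(1+\vert q\vert)\,}\asymp(1+\vert q\vert)^{-2\mu}\longrightarrow 0 \quad\text{as } q\to-\infty ;
\end{equation*}
$w'$ is strictly \emph{weaker} than $w/(1+\vert q\vert)$ on the ingoing side. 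This discrepancy is exactly why the right-hand side of the second inequality carries $\min\bigl(w',\,w/(1+t+\vert q\vert)^{1-a}\bigr)$ rather than a single weight, and why the extra hypothesis $a<2\min(\gamma,\mu)$ appears. The weight comparison you rely on therefore does not hold, and ``the rest proceeds verbatim'' does not go through.
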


We next briefly state some key results from \cite{LR:04}, the first two making use of the null frame. Recall the definition of the reduced wave operator $\tbox_g := g^\mn \p_\mu \p_\nu$ where $g_\mn$ is the full (perturbed) metric. The final Proposition \ref{prop-5.3} gives control on the commutator  $\tbox_g Z^I \phi - \hat{Z}^I \tbox_g \phi$.
\begin{lemma}[Lemma 4.2 from \cite{LR:04}] \label{lemma-4.2a}
For an arbitrary 2-tensor $\pi$ and scalar function $\phi$, we have
\begin{align*}
\vert \pi^\mn \p_\mu \phi \p_\nu \phi \vert &\lesssim \vert \pi \vert_{\mathcal{L} \mathcal{L}} \vert \p \phi \vert^2 + \vert \pi \vert_\UU \vert \p \phi \vert \vert \overline{\p} \phi \vert \\
\vert L_\mu \pi^\mn \p_\nu \phi \vert & \lesssim \vert \pi \vert_{\mathcal{L} \mathcal{L}} \vert \p \phi \vert + \vert \pi \vert_\UU \vert \overline{\p} \phi \vert \\
\vert ( \p_\mu \pi^\mn ) \p_\nu \phi \vert & \lesssim \left( \vert \p \pi \vert_{\mathcal{L} \mathcal{L}} +  \vert \overline{\p} \pi \vert_\UU   \right) \vert \p \phi \vert +\vert \p \pi  \vert_\UU  \vert \overline{\p} \phi \vert \\
\vert \pi^\mn \p_\mu \p_\nu \phi \vert & \lesssim \vert \pi \vert_{\mathcal{L} \mathcal{L}} \vert \p^2 \phi \vert + \vert \pi \vert_\UU \vert \overline{\p} \p \phi \vert 
\end{align*}
\end{lemma}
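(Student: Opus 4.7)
The plan is to carry out all four estimates by decomposing the 2-tensor $\pi$ (or its divergence) with respect to the null frame $\U = \{L, \lbar, S_1, S_2\}$ and exploiting two elementary facts: (i) for any $T \in \mathcal{T} = \{L, S_1, S_2\}$, the directional derivative $T^\mu \p_\mu \phi$ is tangent to the outgoing Minkowski cone and so is controlled by $\vert \pgood \phi \vert$, whereas $\lbar^\mu \p_\mu \phi$ is only controlled by $\vert \p \phi \vert$; and (ii) in the null frame the inverse Minkowski metric has $m^{L\lbar} = m^{\lbar L} = -1/2$ and $m^{AB} = \delta^{AB}$ with all other components zero, so that raising a pair of lowered indices on $\pi$ essentially swaps the roles of $L$ and $\lbar$.

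First I would expand $\pi^\mn = \sum_{U, V \in \U} c^{UV} U^\mu V^\nu$ and use the null-frame metric to record the dictionary $\pi^{UV} = m^{UU'} m^{VV'} \pi_{U'V'}$. In particular $\pi^{\lbar \lbar} = \tfrac{1}{4}\pi_{LL}$, so the coefficient multiplying the worst contribution $(\lbar \phi)^2$ is precisely captured by $\vert \pi \vert_\LL$, while each $\pi^{\lbar V}$ with $V \in \U$ corresponds to a $\pi_{L V'}$ component and is absorbed into $\vert \pi \vert_\LU \subset \vert \pi \vert_\UU$. With this dictionary, the first, second and fourth estimates reduce to immediate case analyses over the pairs $(U, V)$: identify the unique term producing two $\lbar$-directional derivatives on $\phi$ (or on $\p\phi$, or a single contraction with $L_\mu$), bound it by $\vert \pi \vert_\LL$ times the corresponding full derivative, and absorb every remaining term into the generic $\vert \pi \vert_\UU$ bound paired with at least one $\pgood\phi$ factor.

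The third estimate needs a small additional observation, since here we contract against the divergence $\p_\mu \pi^\mn$ rather than $\pi^\mn$ itself. I would expand $\p_\mu (c^{UV} U^\mu) V^\nu \p_\nu \phi$, discard the lower-order terms coming from $\p_\mu U^\mu$ (which contribute at most $\vert \pi \vert \vert \p \phi \vert$ and are easily absorbed), and focus on the principal piece $(U c^{UV})(V\phi)$. There are four sub-cases depending on whether each of $U$ and $V$ lies in $\mathcal{T}$ or equals $\lbar$: the case $U = V = \lbar$ produces $(\p \pi_{LL})(\lbar \phi) \lesssim \vert \p \pi \vert_\LL \vert \p \phi \vert$ via $c^{\lbar\lbar} = \tfrac{1}{4}\pi_{LL}$; the case $U \in \mathcal{T}, V = \lbar$ gives $\vert \pgood \pi \vert_\UU \vert \p \phi \vert$, since the $U$-derivative on $c^{UV}$ is tangential; the case $U = \lbar, V \in \mathcal{T}$ gives $\vert \p \pi \vert_\UU \vert \pgood \phi \vert$; and the case $U, V \in \mathcal{T}$ is even better and is absorbed into any of the above. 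Together these four cases reproduce exactly the three-term bound in the statement.

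The main obstacle is organisational rather than analytical: keeping track of which upper-index frame component $\pi^{UV}$ corresponds to which lowered-index component $\pi_{U'V'}$ under the non-orthogonal null-frame metric, and ensuring that only $\pi_{LL}$ (captured by $\vert \pi \vert_\LL$) ever multiplies a fully `bad' $(\lbar \phi)(\lbar \phi)$ factor. Once this dictionary is fixed the proof reduces to a finite and routine case analysis over the sixteen pairs in $\U \times \U$, which I would not spell out in full.
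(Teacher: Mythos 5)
Your treatment of the first, second and fourth estimates is sound: expanding $\pi^\mn = \sum c^{UV} U^\mu V^\nu$ with $c^{\lbar\lbar}=\tfrac14\pi_{LL}$ isolates the single term carrying two bad derivatives $(\lbar\phi)^2$ (or two bad slots), and since $\pi$ itself is never differentiated there, no derivatives of the frame vectors ever arise — the bound is purely algebraic. For the third estimate, however, there is a genuine gap. Expanding $\pi^\mn$ in the frame and then applying $\p_\mu$ produces derivatives of the null-frame vectors in two places: explicitly through the divergence $c^{UV}\p_\mu U^\mu$ that you propose to discard, and implicitly inside $U^\mu\p_\mu c^{UV}$, since each scalar $c^{UV}$ is itself a frame-contraction of $\pi$ (so that, e.g., $\lbar^\rho\p_\rho c^{\lbar\lbar}=\tfrac14\lbar^\rho(\p_\rho\pi_\mn)L^\mu L^\nu + \tfrac12\pi_\mn L^\mu\lbar^\rho\p_\rho L^\nu$, and only the first piece is the seminorm $\vert\p\pi\vert_\LL$). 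All such remainders are of order $r^{-1}\vert\pi\vert\,\vert\p\phi\vert$, and this is \emph{not} controlled by the right-hand side of the lemma, which involves only derivatives of $\pi$. These remainders do cancel in the full expansion — $(\p_\mu\pi^\mn)\p_\nu\phi$ vanishes identically when $\p_\rho\pi_\mn\equiv 0$ even though the scalars $c^{UV}$ are then non-constant — but discarding some of them while keeping others destroys that cancellation, so the argument as written does not close.

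The fix is to not expand $\pi$ at all. Since $m^\mn$ is constant in Cartesian coordinates, one has exactly
$(\p_\mu\pi^\mn)\p_\nu\phi = m^{\mu\alpha}m^{\nu\beta}(\p_\mu\pi_{\alpha\beta})\p_\nu\phi$,
and it is the two inverse-metric factors that should be decomposed via the completeness relation
$m^{\mu\alpha} = -\tfrac12\bigl(L^\mu\lbar^\alpha + \lbar^\mu L^\alpha\bigr) + \sum_A S_A^\mu S_A^\alpha$.
This keeps every frame vector strictly outside the derivative $\p_\mu$, so the resulting terms land directly on the paper's seminorms $\vert(\p_\rho\pi_\mn)U^\rho V^\mu W^\nu\vert$ with no spurious remainders. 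After that substitution your four-case analysis — isolating $(\p_\lbar\pi)_{LL}(\lbar\phi)\lesssim\vert\p\pi\vert_\LL\vert\p\phi\vert$, putting the other $\lbar\phi$ contributions under $\vert\pgood\pi\vert_\UU\vert\p\phi\vert$, and placing everything with a tangential $\phi$-derivative under $\vert\p\pi\vert_\UU\vert\pgood\phi\vert$ — goes through verbatim.
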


\begin{lemma}[Lemma 5.1 from \cite{LR:04}] \label{lemma-5.1}
If $\phi$ is a scalar and $\pi$ a symmetric 2-tensor then
\begin{align}
( 1+t+\vert q \vert ) \vert \pgood \phi \vert + (1+\vert q \vert) \vert \p \phi \vert & \leq C \sum_{\vert I \vert = 1} \vert Z^I \phi \vert \\
\vert  \pgood^2 \phi \vert + r^{-1} \vert \pgood \phi \vert & \leq \frac{C}{r} \sum_{\vert I \vert \leq 2} \frac{\vert Z^I \phi \vert}{1+t + \vert q \vert} \\
\vert \pi^{\mu \nu} \p_\mu \p_\nu \phi \vert &\leq C \left( \frac{\vert \pi \vert_\UU}{1+t+\vert q \vert} + \frac{\vert \pi \vert_\LL}{1+\vert q \vert} \right) \sum_{\vert I \vert \leq 1} \vert \p Z^I \phi \vert \label{eq:lemma5.1-3rd-eq}
\end{align}
\end{lemma}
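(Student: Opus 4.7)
The plan is to derive all three pointwise estimates from a handful of algebraic identities expressing the null frame vectors $L, \lbar$ and the projected angular derivatives $\slashed{\p}_k$ as linear combinations of the conformal Killing fields in $\mathcal{Z}$. A direct computation starting from $S = x^\mu \p_\mu$, $\Omega_{0i} = x_0 \p_i - x_i \p_0$ and $\Omega_{jk} = x_j \p_k - x_k \p_j$ yields (up to signs) the identities $(t+r) L = S \pm \omega^i \Omega_{0i}$, $(t-r) \lbar = S \mp \omega^i \Omega_{0i}$, and $r\,\slashed{\p}_k = \omega^j \Omega_{jk}$. Thus $L$ carries a factor $(t+r)^{-1}$ and the projected angular derivatives carry a factor $r^{-1}$ when rewritten in terms of $\mathcal{Z}$. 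Solving the $2 \times 2$ linear system for $(\p_t, \p_r)$ in terms of $(S, \omega^i \Omega_{0i})$, whose determinant is $(t-r)(t+r)$, also gives $|q|\,|\p \phi| \leq C \sum_{|I|=1} |Z^I \phi|$.

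For the first estimate, these identities directly produce $(t+r)|L\phi| + r|\slashed{\p}\phi| + |q|\,|\p \phi| \leq C \sum_{|I|=1} |Z^I \phi|$, while $|\p \phi| \leq \sum_{|I|=1} |Z^I \phi|$ is trivial. A short case split completes the argument: in the exterior region $r \geq t/2$ one has $1+t+|q| \lesssim 1+r$, so the bounds on $L$ and $\slashed{\p}$ suffice for $(1+t+|q|)|\pgood\phi|$; in the interior region $r < t/2$ one has $|q| = t-r > t/2$, hence $1+t+|q| \lesssim 1+|q|$, and the pointwise bound $|\pgood \phi| \leq |\p \phi|$ combined with the $(1+|q|)\,|\p \phi|$ estimate finishes the proof.

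For the second estimate, the plan is to apply the first estimate to $Z^J \phi$ with $|J| \leq 1$ and then use the commutators $[Z, \pgood]$, which from the identities above are bounded linear combinations of tangential derivatives themselves. The extra $r^{-1}$ factor is produced by substituting $r\,\slashed{\p}_k = \omega^j \Omega_{jk}$ in one of the two tangential slots, trading a tangential derivative for a rotation vector field acting on a function already of size $(1+t+|q|)^{-1} \sum |Z^I \phi|$. For the third estimate, apply the pointwise frame decomposition from Lemma \ref{lemma-4.2a} to get $|\pi^\mn \p_\mu \p_\nu \phi| \lesssim |\pi|_\LL |\p^2 \phi| + |\pi|_\UU |\pgood \p \phi|$; applying the first estimate to $\psi = \p \phi$ together with the commutator $[Z, \p] = c\,\p$ (so that $|Z^I \p \phi| \leq C \sum_{|J| \leq |I|} |\p Z^J \phi|$) gives $(1+|q|)|\p^2 \phi| + (1+t+|q|)|\pgood \p \phi| \lesssim \sum_{|I| \leq 1} |\p Z^I \phi|$, which substituted back produces the required prefactor $|\pi|_\LL/(1+|q|) + |\pi|_\UU/(1+t+|q|)$.

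The main technical obstacle is the degeneracy of the system $(t+r)L,\,(t-r)\lbar$ at the light cone $q=0$: the formulas for $\p_t$ and $\p_r$ individually involve a $(t^2-r^2)^{-1}$ coefficient and directly extract only $|q|\,|\p \phi|$, so one must combine this with the trivial bound $|\p \phi| \leq \sum_{|I|=1} |Z^I \phi|$ to upgrade to $(1+|q|)\,|\p \phi|$. The bookkeeping of the commutators $[Z,\pgood]$ and $[Z,\p]$ in the second and third estimates is routine but has to be carried out so that every error term produced at each step is controlled by an already-established estimate at one lower derivative count.
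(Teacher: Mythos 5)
Your algebraic framework is the right one: expressing $L$, $\lbar$ and $\slashed{\p}_k$ through the conformal fields via $(t+r)L = S + \omega^i\Omega_{0i}$, $(t-r)\lbar = S - \omega^i\Omega_{0i}$, $r\slashed{\p}_k = \omega^j\Omega_{jk}$ is indeed how these pointwise inequalities are derived, and your treatment of the third estimate — decompose $\pi^\mn\p_\mu\p_\nu\phi$ via Lemma~\ref{lemma-4.2a} into $|\pi|_\LL|\p^2\phi| + |\pi|_\UU|\pgood\p\phi|$, then apply the first inequality to $\p\phi$ and absorb $[Z,\p]$ commutators — is correct. The second estimate is sketched rather than argued, but the ingredients you name will close it.

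There is, however, a genuine gap in your first estimate. The three identities you list do \emph{not} by themselves produce $(1+\qv)\,|\p\phi| \lesssim \sum_{|I|=1}|Z^I\phi|$. Inverting the $2\times 2$ system controls only $|q|\,|\p_t\phi|$ and $|q|\,|\p_r\phi|$. The spatial gradient is $\p_i = \omega_i\p_r + \slashed{\p}_i$, so you must also bound $|q|\,|\slashed{\p}\phi|$, and $r\slashed{\p}_k = \omega^j\Omega_{jk}$ gives only $r\,|\slashed{\p}\phi| \leq C\,|\Omega\phi|$. This suffices when $|q|\lesssim r$ (exterior), but in the interior region $r < t/2$, where $|q| = t - r > r$, the ratio $|q|/r$ is unbounded — near the time axis $r\to 0$ while $|q|\sim t$ stays large — so the bound degenerates. (Concretely, at $(t,x)=(100,\,0.01,0,0)$ with $\phi = x_2$ one has $|\slashed{\p}\phi| = 1$ and $r|\slashed{\p}\phi|=0.01$, so the rotational identity contributes nothing of the right size, while $|\Omega_{02}\phi|=100$ does.) Your case split also relies on this very estimate: in the interior you reduce $(1+t+\qv)|\pgood\phi|$ to $(1+\qv)|\p\phi|$ via $|\pgood\phi|\leq|\p\phi|$, so the gap propagates into both halves of the first inequality.

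The missing ingredient is the boost projection identity
$t\,\slashed{\p}_k\phi = \Omega_{0k}\phi - \omega_k\,\omega^j\Omega_{0j}\phi$, equivalently $t\,\Omega_{jk} = x^j\Omega_{0k} - x^k\Omega_{0j}$, which gives $t\,|\slashed{\p}\phi|\lesssim \sum_{|I|=1}|Z^I\phi|$. Since in the interior $\qv \leq t$, this supplies exactly the bound $|q|\,|\slashed{\p}\phi|\lesssim \sum_{|I|=1}|Z^I\phi|$ where the rotational identity fails. With this fourth identity included alongside the three you wrote down, the rest of your argument closes as described.
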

We introduce the shifted vector field $\hat{Z} := Z + c_Z$ where $c_Z$ is a constant defined by $\p_\mu Z_\nu + \p_\nu Z_\mu = c_Z m_\mn$. In particular, $c_Z = 0$ except when $Z = S$ in which case $c_S = 2$. Note this definition implies the commutation property $[Z, \tbox_m] = -c_Z \tbox_m$ and thus $\tbox_m Z \phi = \hat{Z} \tbox_m \phi$. 

\begin{proposition}[Proposition 5.3 from \cite{LR:04}] \label{prop-5.3}
Recall  $\tbox_g := g^\mn \p_\mu \p_\nu = \tbox_m + H^\mn \p_\mu \p_\nu$. Then for any $Z \in \mathcal{Z}$ it holds
\begin{align*}
\vert \tbox_g Z^I \phi - \hat{Z}^I \tbox_g \phi \vert \lesssim \Big( \frac{\vert ZH \vert_\UU + \vert H \vert_\UU}{1+t+\qv} + \frac{\vert ZH \vert_\LL + \vert H \vert_\LT}{1+\qv} \Big) \sum_{\vert I \vert \leq 1 } \vert \p Z^I \phi \vert \,.
\end{align*}
Furthermore
\begin{align*}
\vert \tbox_g Z \phi - \hat{Z} \tbox_g \phi \vert & \lesssim \frac{1}{1+t+\qv} \sum_{ \vK  \leq  \I } \sum_{\J + (\vK-1)_+ \leq \I} \vert Z^J H \vert_\UU \vert \p Z^K \phi \vert  \\ & + \frac{1}{1+\qv}  \sum_{\vK \leq \I}\Big( \sum_{\J + (\vK-1)_+ \leq \I} \vert Z^J H \vert_\LL + \sum_{\J  + (\vK-1)_+ \leq \I-1} \vert Z^J H \vert_\LT \\ & \qquad \qquad + \sum_{\J + (\vK-1)_+ \leq \I-2} \vert Z^J H \vert_\UU \Big) \vert \p Z^K \phi \vert \,,
\end{align*}
where $(\vK - 1)_+ = \vK - 1$ if $\vK \geq 1$ and $(\vK-1)_+ = 0$ if $\vK = 0$. 
\end{proposition}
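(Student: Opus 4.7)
The plan is to exploit the splitting $\tbox_g = \tbox_m + H^{\mu\nu}\p_\mu\p_\nu$ to reduce the full commutator to one involving only the perturbation $H$. The key algebraic input is the Minkowski-level identity $\tbox_m Z = \hat{Z}\tbox_m$ for every $Z \in \mathcal{Z}$, stated in the text just preceding Proposition \ref{prop-5.3}, which iterates immediately to $\tbox_m Z^I = \hat{Z}^I \tbox_m$. Consequently the flat part cancels,
$$\tbox_g Z^I \phi - \hat{Z}^I \tbox_g \phi = H^{\mu\nu}\p_\mu\p_\nu(Z^I\phi) - \hat{Z}^I\bigl(H^{\mu\nu}\p_\mu\p_\nu\phi\bigr),$$
and the entire problem reduces to bounding the commutator of $Z^I$ with the second-order operator $H^{\mu\nu}\p_\mu\p_\nu$ acting on $\phi$.

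For the base case $\I = 1$, I would expand by the Leibniz rule to obtain $\hat{Z}(H^{\mu\nu}\p_\mu\p_\nu\phi) = (ZH^{\mu\nu})\p_\mu\p_\nu\phi + H^{\mu\nu}Z(\p_\mu\p_\nu\phi) + c_Z H^{\mu\nu}\p_\mu\p_\nu\phi$, then use the crucial fact that each $Z \in \mathcal{Z}$ has linear coefficients so that $[Z,\p_\alpha]$ is a constant-coefficient linear combination of partials. This gives $H^{\mu\nu}Z(\p_\mu\p_\nu\phi) = H^{\mu\nu}\p_\mu\p_\nu(Z\phi) + \widetilde{H}^{\alpha\beta}\p_\alpha\p_\beta\phi$ for a tensor $\widetilde{H}$ obtained from $H$ by constant-coefficient index permutations. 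The $H^{\mu\nu}\p_\mu\p_\nu(Z\phi)$ piece cancels exactly against its counterpart in $\tbox_g Z\phi$, leaving a sum of terms of the form $\pi^{\mu\nu}\p_\mu\p_\nu\phi$ with $\pi \in \{ZH,\, \widetilde{H} + c_Z H\}$. Applying estimate \eqref{eq:lemma5.1-3rd-eq} of Lemma \ref{lemma-5.1} to each such piece produces the first display, after checking that the $\LL$-components of the permuted tensor $\widetilde{H} + c_Z H$ are controlled by $|H|_\LT$; this can be verified by explicit computation, noting that for $Z = S$ the correction vanishes entirely (since $[S,\p_\mu\p_\nu] = -2\p_\mu\p_\nu$ cancels exactly against $c_S = 2$), while for each rotation/boost $\Omega_{\rho\sigma}$ the index swap $\rho \leftrightarrow \sigma$ in $[\Omega_{\rho\sigma},\p_\mu]$ maps an $\mathcal{L}$-component into a $\mathcal{T}$-type linear combination.

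For the refined second display, I would argue by induction on $\I$, applying the base-case identity iteratively and collecting the output by multi-index Leibniz bookkeeping. At each step the commutator produces a sum over pairs $(J,K')$ of pieces of the form $(Z^J H)^{\mu\nu}\p_\mu\p_\nu(Z^{K'}\phi)$ with $|J| + |K'| \leq |I|$, together with the lower-order corrections generated by successive $[Z,\p]$ commutators. Applying \eqref{eq:lemma5.1-3rd-eq} once more converts $\p_\mu\p_\nu(Z^{K'}\phi)$ into weighted versions of $|\p Z^K\phi|$ with $|K| \leq |K'|+1$, which is precisely the encoding of $|J| + (|K|-1)_+ \leq |I|$. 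The three-tier hierarchy $|Z^J H|_\LL$ at order $|I|$, $|Z^J H|_\LT$ at order $|I|-1$, $|Z^J H|_\UU$ at order $|I|-2$ arises because, at each iteration, the index-permutation correction from $[Z,\p]$ degrades the null-frame component of the $H$-factor by one level (from $\LL$ to $\LT$ to $\UU$) at the price of reducing the number of $Z$-derivatives available for $H$ by one. The main obstacle is exactly this combinatorial bookkeeping --- verifying inductively that the component-degradation pattern matches the stated upper bounds on $|J|$ at each tier --- but once the pattern is established, the pointwise estimate follows term-by-term from a single application of Lemma \ref{lemma-5.1} per piece.
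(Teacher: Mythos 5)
The paper does not give its own proof of this statement; it cites it directly from \cite{LR:04}. Your reconstruction follows the same route as the cited proof: cancel the flat part using $\tbox_m Z^I = \hat{Z}^I \tbox_m$, distribute $Z^I$ over $H^{\mu\nu}\p_\mu\p_\nu$ by Leibniz, absorb the $[Z,\p]$ corrections into permuted copies of $H$, and estimate each resulting $\pi^{\mu\nu}\p_\mu\p_\nu\phi$ piece via \eqref{eq:lemma5.1-3rd-eq}. One small point worth fleshing out: the claim that $\vert\widetilde{H} + c_Z H\vert_\LL \lesssim \vert H\vert_\LT$ hinges on the antisymmetry of $[\Omega_{\rho\sigma},\p_\mu]$ in $\rho,\sigma$ causing the $H_{L\lbar}$ (and, for rotations, the radial) contributions to cancel pairwise in $L_\rho H_{L\sigma} - L_\sigma H_{L\rho}$; you flag this correctly but it is the one place where a reader would want to see the computation spelled out, since without that cancellation only the weaker bound $\vert H\vert_{\mathcal{L}\mathcal{U}}$ would be available.
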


\bibliographystyle{plain}
\bibliography{bibliography}

\begin{thebibliography}{10}

\bibitem{MR0315300}
Yvonne Choquet-Bruhat.
\newblock Un th\'eor\`eme d'instabilit\'e pour certaines \'equations
  hyperboliques non lin\'eaires.
\newblock {\em C. R. Acad. Sci. Paris S\'er. A-B}, 276:A281--A284, 1973.

\bibitem{ChoquetBruhat:2006jc}
Yvonne Choquet-Bruhat, Piotr~T. Chrusciel, and Julien Loizelet.
\newblock {Global solutions of the Einstein-Maxwell equations in higher
  dimensions}.
\newblock {\em Class. Quant. Grav.}, 23:7383--7394, 2006.

\bibitem{choquet-bruhat1969}
Yvonne Choquet-Bruhat and Robert Geroch.
\newblock Global aspects of the cauchy problem in general relativity.
\newblock {\em Comm. Math. Phys.}, 14(4):329--335, 1969.

\bibitem{CK:93}
D.~Christodoulou and S.~Klainerman.
\newblock {\em {The Global Nonlinear Stability of the Minkowski Space}},
  volume~41 of {\em Princeton Mathematical Series}.
\newblock Princeton University Press, Princeton, NJ, 1993.

\bibitem{nullchristodoulou}
Demetrios Christodoulou.
\newblock Global solutions of nonlinear hyperbolic equations for small initial
  data.
\newblock {\em Communications on Pure and Applied Mathematics}, 39(2):267--282,
  1986.

\bibitem{bruhat1952}
Y.~Four\'es-Bruhat.
\newblock Théorème d'existence pour certains systèmes d'équations aux
  dérivées partielles non linéaires.
\newblock {\em Acta Math.}, 88:141--225, 1952.

\bibitem{Kaluza}
Theodor Kaluza.
\newblock Zum unit\"atsproblem in der physik.
\newblock {\em Sitzungsberichte Preussische Akademie der Wissenschaften}, pages
  966--972, 1921.

\bibitem{MR837683}
S.~Klainerman.
\newblock The null condition and global existence to nonlinear wave equations.
\newblock In {\em Nonlinear systems of partial differential equations in
  applied mathematics, {P}art 1 ({S}anta {F}e, {N}.{M}., 1984)}, volume~23 of
  {\em Lectures in Appl. Math.}, pages 293--326. Amer. Math. Soc., Providence,
  RI, 1986.

\bibitem{nullklainerman}
Sergiu Klainerman, editor.
\newblock {\em Long time behavior of solutions to nonlinear wave equations},
  Warsaw, 8 1983. International Congress of Mathematicians.

\bibitem{Klein1926}
Oskar" Klein.
\newblock Quantentheorie und f{\"u}nfdimensionale relativit{\"a}tstheorie.
\newblock {\em Zeitschrift f{\"u}r Physik}, 37(12):895--906, 1926.

\bibitem{LeFloch:2015ppi}
Philippe~G. LeFloch and Yue Ma.
\newblock The global nonlinear stability of minkowski space for the einstein
  equations in the presence of a massive field.
\newblock {\em Comptes Rendus Mathematique}, 354(9):948 -- 953, 2016.

\bibitem{lindblad:weaknull}
Hans Lindblad and Igor Rodnianski.
\newblock {The weak null condition for Einstein’s equations}.
\newblock {\em C.R. Acad. Sci.}, 336:901--906, 2003.

\bibitem{Lindblad:2003hw}
Hans Lindblad and Igor Rodnianski.
\newblock Global existence for the einstein vacuum equations in wave
  coordinates.
\newblock {\em Communications in Mathematical Physics}, 256(1):43--110, 2005.

\bibitem{LR:04}
Hans {Lindblad} and Igor {Rodnianski}.
\newblock {The global stability of Minkowski space-time in harmonic gauge.}
\newblock {\em {Ann. Math. (2)}}, 171(3):1401--1477, 2010.

\bibitem{MR2582443}
Julien Loizelet.
\newblock Solutions globales des \'equations d'{E}instein-{M}axwell.
\newblock {\em Ann. Fac. Sci. Toulouse Math. (6)}, 18(3):565--610, 2009.

\bibitem{Pope}
Chris Pope.
\newblock Lectures on {Kaluza-Klein Theory}.
\newblock \url{http://people.physics.tamu.edu/pope/ihplec.pdf}.
\newblock Accessed: 12-05-2017.

\bibitem{Speck}
Jared Speck.
\newblock {The global stability of the Minkowski spacetime solution to the
  Einstein-nonlinear system in wave coordinates}.
\newblock {\em Analysis \& PDE}, 7(4):771--901, 2014.

\bibitem{Witten:1981gj}
Edward Witten.
\newblock {Instability of the Kaluza-Klein Vacuum}.
\newblock {\em Nucl. Phys.}, B195:481--492, 1982.

\end{thebibliography}
\end{document}